\documentclass{article}
\usepackage[left=3cm,right=3cm,top=3cm,bottom=3cm]{geometry} 
\usepackage{amsmath,amssymb} 
\usepackage{amsthm}
\usepackage{tikz}
\usepackage{hyperref}
\usetikzlibrary{patterns}
\usetikzlibrary{intersections}
\usetikzlibrary{arrows,positioning}
\usepackage{graphicx}
\usepackage{pgfplots}
\pgfplotsset{compat=1.14}
\usepgfplotslibrary{fillbetween}
\usepackage{float}
\usepackage{fancyhdr}
\pagestyle{fancy}
\usepackage{caption}
\usepackage{subcaption}

\newcommand*{\rom}[1]{\expandafter\@slowromancap\romannumeral #1@}

\newcommand{\gettikzxy}[3]{%
  \tikz@scan@one@point\pgfutil@firstofone#1\relax
  \edef#2{\the\pgf@x}%
  \edef#3{\the\pgf@y}%
}

\newcommand{\sA}{\mathcal A}
\newcommand{\sB}{\mathcal B}
\newcommand{\sC}{\mathcal C}
\newcommand{\sD}{\mathcal D}
\newcommand{\sF}{\mathcal F}
\newcommand{\sG}{\mathcal G}
\newcommand{\sH}{\mathcal H}
\newcommand{\sI}{\mathcal I}
\newcommand{\sL}{\mathcal L}
\newcommand{\sM}{\mathcal M}

\newcommand{\sP}{\mathcal P}

\newcommand{\sR}{\mathcal R}
\newcommand{\sS}{\mathcal S}
\newcommand{\sT}{\mathcal T}

\newcommand{\sW}{\mathcal W}

\newcommand{\R}{\mathbb R}
\newcommand{\E}{\mathbb E}
\newcommand{\F}{\mathbb F}

\newcommand{\Prob}{\mathbb P}

\newtheorem{thm}{Theorem}

\newtheorem{lem}{Lemma}

\newtheorem{rem}{Remark}
\newtheorem{ass}{Assumption}
\newtheorem{defn}{Definition}
\newtheorem{sass}{Standing Assumption}
\newtheorem{prob}{Problem}

\setlength{\parindent}{4mm}

\begin{document}
\title{Robust bounds for the American Put}
\author{David Hobson\thanks{E-mail: \textit{d.hobson@warwick.ac.uk}}\hspace{3mm}  and  \hspace{3mm}Dominykas Norgilas\thanks{E-mail: \textit{d.norgilas@warwick.ac.uk}}\\
Department of Statistics, University of Warwick\\Coventry CV4 7AL, UK}
\date{\today}
\maketitle

\begin{abstract}
We consider the problem of finding a model-free upper bound on the price of an American put given the prices of a family of European puts on the same underlying asset. Specifically we assume that the American put must be exercised at either $T_1$ or $T_2$ and that we know the prices of all vanilla European puts with these maturities. In this setting we find a model which is consistent with European put prices and an associated exercise time, for which the price of the American put is maximal. Moreover we derive the cheapest superhedge. The model associated with the highest price of the American put  is constructed from the left-curtain martingale coupling of Beiglb\"{o}ck and Juillet.\\
\indent Keywords: Model-independent pricing, American put, Martingale optimal transport, Left-curtain coupling, Optimal stopping.\\
\indent Mathematics Subject Classification: 60G40, 60G42, 91G20.
\end{abstract}

\section{Introduction}
\label{sec:intro}

This article is motivated by an attempt to understand the range of possible prices of an American put in a robust, or model-independent, framework. In our interpretation this means that we assume we are given today's prices of a family of European-style vanilla puts (for a continuum of strikes and for a discrete
set of maturities). The goal is to find the consistent model for the underlying for which the American put has the highest price, where by definition a model is consistent if the discounted price process is a martingale and if the model-based discounted expected values of European-put payoffs match the given prices of European puts.

This notion of model-independent or robust bounds on the prices of exotic options was introduced in Hobson~\cite{Hobson:98} in the context of lookback options, and has been applied several times since, see Brown et al.~\cite{BrownHobsonRogers:01} (barrier options), Cox and Ob{\l}{\'o}j~\cite{CoxObloj:11} (no-touch options), Hobson and Neuberger~\cite{HobsonNeuberger:12} and Hobson and Klimmek~\cite{HobsonKlimmek:15} (forward-start straddles), Carr and Lee~\cite{CarrLee:10} and Cox and Wang~\cite{CoxWang:13} (variance options), 
Stebegg~\cite{Stebegg:14} (Asian options) and the survey article Hobson~\cite{Hobson:survey}. The principal idea is that the prices of the vanilla European puts determine the marginal distributions of the price process at the traded maturities (but not the joint distributions) and that these distributional requirements, coupled with the martingale property, place meaningful and useful restrictions on the class of consistent models. These restrictions lead to bounds on the expected payoffs of path-dependent functionals, or equivalently bounds on the prices of exotic options.

In addition to the pricing problem there is a related dual or hedging problem. In the dual problem the aim is to construct a static portfolio of European put options and a dynamic discrete-time hedge in the underlying which combine to form a superhedge (pathwise over a suitable class of candidate price paths) for the exotic option. The value of the dual problem is the cost of the cheapest superhedge. There is a growing literature, beginning with Beiglb\"{o}ck et al.~\cite{BeiglbockHenryLaborderePenkner:13} for discrete-time problems, and Galichon et al.~\cite{GalichonHenryLabordereTouzi:14} in continuous time, which aims to explain how to formulate the problem in such a way that there is no duality gap, i.e. the highest model-based price is equal to the cheapest superhedge, either for specific derivatives, or in general.

Many of the early papers on robust hedging exploited a link with the Skorokhod embedding problem (Skorokhod~\cite{Skorokhod:65}).
For example, in the study of the lookback option in Hobson~\cite{Hobson:98} the consistent model which achieves the highest lookback price is constructed from the Az\'{e}ma-Yor~\cite{AzemaYor:79} solution of the Skorokhod embedding problem.  More recently, Beiglb\"{o}ck et al.~\cite{BeiglbockHenryLaborderePenkner:13}
(see also Dolinsky and Soner~\cite{DolinskySoner:14} and Touzi~\cite{Touzi:13}) have championed the connection between robust hedging problems and martingale optimal transport. In this paper we will make use of the left-curtain martingale coupling introduced by Beiglb\"{o}ck and Juillet~\cite{BeiglbockJuillet:16}, and developed by Henry-Labord\`{e}re and Touzi~\cite{HenryLabordereTouzi:16} and Beiglb\"{o}ck et al.~\cite{BeiglbockHenryLabordereTouzi:17}.

The study of American style claims in the robust framework was initiated by Neuberger~\cite{Neuberger:07}, see also Hobson and Neuberger~\cite{HobsonNeuberger:17}, Bayraktar and Zhou~\cite{BayraktarZhou:16} and Aksamit et al.~\cite{AksamitDengOblojTan:17}. (There is also a paper by Cox and Hoeggerl~\cite{CoxHoeggerl:16} which asks about the possible shapes of the price of an American put, considered as a function of strike, given the prices of co-maturing European puts.) The main innovation of this paper is that rather than focussing on general American payoffs and proving that the pricing (primal) problem and the dual (hedging) problem have the same value, we focus explicitly on American puts and try to say as much as possible about the structure of the consistent price process for which the model-based American put price is maximised, and the structure of the cheapest superhedge.

Mathematically, it will turn out that our problem can be cast as follows. Let $\mu$ and $\nu$ be a pair of probability measures which are increasing in convex order and therefore necessarily have the same mean $\bar{\mu}$. A standing assumption in this paper will be that $\mu$ is continuous (or equivalently, $\mu$ has no atoms). Let $\hat{\Pi}_M(\mu,\nu)$ be the set of martingale couplings (which are often alternatively called martingale transports) between $\mu$ and $\nu$ and let $K_1>K_2$ be a pair of fixed constants. The problem we consider is to find
\begin{equation}
\label{eq:primalcts}
 \sup_{\pi \in \hat{\Pi}_M(\mu,\nu)} \sup_{B \in \sB(\R)} \E^{\sL(M_1,M_2) \sim \pi}  \left[ (K_1 - M_1)^+ I_{  \{  M_1 \in B \} } + (K_2 - M_2)^+ I_{ \{ M_1 \notin B \} }  \right] ,
\end{equation}
where $M=(\bar{\mu}, M_1, M_2)$ is a martingale with joint law $\Prob(M_1 \in dx, M_2 \in dy) = \pi(dx,dy)$ and $B$ is a Borel subset of $\R$.
In terms of the American put problem $M$ should be thought of as the discounted price of the underlying asset (to simplify notation we write $M_1 \equiv X$ and $M_2 \equiv Y$).  Further,  $K_1$ and $K_2$ are the discounted strikes of the put and $B$ represents the set of values of the discounted time-1 price of the underlying such that the option is exercised at time 1; otherwise the put is exercised at time 2.  Then \eqref{eq:primalcts} represents the primal problem of finding the highest model-based expected payoff of the American put. See Section~\ref{ssec:putproblem}.

There is a corresponding dual or hedging problem of finding the cheapest superhedge based on static portfolios of European puts and a piecewise constant holding of the underlying asset, see Section~\ref{ssec:superhedge}.

Our main achievement is {\em  to exhibit the model and stopping rule which achieves the highest possible price for the American put, to exhibit the cheapest superhedge, and to show that the highest model-based price is equal to the cost of the cheapest superhedge}.

For fixed $\mu,\nu$ and $K_1 > K_2$ there is typically a family of optimal models. Fixing $\mu$ and $\nu$ but varying $K_1$ and $K_2$ it turns out that there is a model which is optimal for all $K_1$ and $K_2$ simultaneously. This model is related to the left-curtain coupling of Beiglb\"ock and Juillet~\cite{BeiglbockJuillet:16}. In particular, given $\mu\leq_{cx} \nu$ (with $\mu$ continuous),  Beiglb\"ock and Juillet~\cite{BeiglbockJuillet:16} prove that there exist functions $T_d$ and $T_u$ with $T_d(x) \leq x \leq T_u(x)$ such that $T_u$ is increasing and such that if $x<x'$ then $T_d(x') \notin (T_d(x), T_u(x))$, and such that there is $\pi \in \hat{\Pi}_M(\mu,\nu)$ which is concentrated on the graphs of $T_d$ and $T_u$. Under this martingale coupling $Y \in \{ T_d(X), T_u(X) \}$ and by the martingale property $\Prob(Y=T_d(X)|X) = \frac{T_u(X)-X}{T_u(X) - T_d(X)}$ (assuming not both $T_d(X) = X$ and $T_u(X)=X$).

In this paper we will concentrate on the case where $\mu$ is continuous.
Indeed, if $\mu$ has atoms then the situation becomes more delicate. On one hand, we must allow for a wider range of possible candidates for exercise determining sets $B$. On atoms of $X$ we may want to sometimes stop and sometimes continue, although we must still take stopping decisions which do not violate the martingale property of future price movements. On the other hand, the functions $T_d$, $T_u$ that characterises the left-curtain coupling become multi-valued on the points where $\mu$ has atoms. Then it is not clear how the optimal model can be identified. For these reasons we must extend our notion of a martingale coupling and generalise, in a useful fashion, the left-curtain martingale coupling of Beiglb\"{o}ck and Juillet~\cite{BeiglbockJuillet:16} to the case with atoms. The appropriate extension of the left-curtain coupling to the case with atoms in $\mu$ is discussed in a companion paper (\cite{HobsonNorgilas:18}); in this paper we focus on the financial aspects of our results, namely the application to the robust hedging of American puts.

The remainder of the paper is structured as follows. In the next section we formulate precisely our problem of finding the robust, model-independent price of an American put and explain how the problem can be transformed into \eqref{eq:primalcts} in the atom-free case. We also explain how the pricing problem is related to the dual problem of constructing the cheapest superhedge. In Section~\ref{sec:continuous} we assume that $\mu$ is continuous, and we show by studying a series of ever more complicated set-ups how to determine the best model and hedge. The constructions in this section make use of results on the left-curtain coupling of Beiglb\"ock and Juillet~\cite{BeiglbockJuillet:16} and Henry-Labord\`{e}re and Touzi~\cite{HenryLabordereTouzi:16}.

By weak duality the highest model price is bounded above by the cost of the cheapest superhedge. Hence, if on the one hand we can identify a consistent model and stopping rule and on the other a superhedge, such that the expected payoff in that model with that stopping rule is equal to the cost of the superhedge then we must have identified an optimal model and an optimal stopping rule together with an optimal hedging strategy. Moreover there is no duality gap. This is the strategy of our proofs. One feature of our analysis is that wherever possible we provide pictorial explanations and derivations of our results. In our view this approach helps bring insights which may be hidden under calculus-based approaches.

\section{Preliminaries and set-up}
\label{sec:setup}

\subsection{Measures and Convex order}
Given an integrable measure $\eta$ (not necessarily a probability measure) on $\R$ define $\bar{\eta} = \frac{\int_\R x \eta(dx)}{\int_\R \eta(dx)}$ to be the barycentre of $\eta$. Let $\sI_\eta$ with endpoints $\{ \ell_\eta, r_\eta \}$ be the smallest interval containing the support of $\eta$.
Define $P_\eta : \R \mapsto \R^+$ by $P_\eta(k) = \int_{-\infty}^k (k-x) \eta(dx)$.
Then $P_\eta$ is convex and increasing, and represents the discounted European put-price, expressed as a function of strike, if the discounted underlying has law $\eta$ at maturity. Further, $\{k : P_{\eta}(k) > \eta(\R)(k - \bar{\eta})^+ \} \subseteq \sI_\eta$.
Note that $P_{\eta}$ is related to the potential $U_\eta$ defined by $U_\eta(k) : = - \int_{\R} |k-x| \eta(dx)$ by
$P_\eta(k) = \frac{1}{2}(- U_\eta(k) + (k- \bar{\eta}) \eta(\R))$.

For any $c<d$ and a measure $\eta$ let $\eta_{c,d}$ be the measure given by $\eta_{c,d}(A)= \eta(A \cap (c,d))$. Let $\tilde{\eta}_{c,d} = \eta - \eta_{c,d}$.

Two measures $\eta$ and $\chi$ are in convex order and we write $\eta \leq_{cx} \chi$ if and only if $\eta(\R)= \chi(\R)$, $\bar{\eta}= \bar{\chi}$ and $P_{\eta}(k) \leq P_{\chi}(k)$ on $\R$. Necessarily we must have
$\ell_\chi \leq \ell_\eta \leq r_\eta \leq r_\chi$. 
Let $\hat{\Pi}_M(\eta,\chi)$ be the set of martingale couplings of $\eta$ and $\chi$. Then
\[  \hat{\Pi}_M(\eta,\chi) = \left\{ \pi \in \sP( \R^2) : \mbox{$\pi$ has first marginal $\eta$ and second marginal $\chi$} ; \mbox{\eqref{eq:martingalepi} holds} \right\}
\]
where $\sP(\R^2)$ is the set of probability measures on $\R^2$ and \eqref{eq:martingalepi} is the martingale condition
\begin{equation}
\int_{x \in B} \int_{y \in \R} y \pi(dx,dy) = \int_{x \in B} \int_{y \in \R} x \pi(dx,dy) = \int_B x \eta(dx) \hspace{10mm}
\mbox{$\forall$ Borel $B \subseteq \R$}.
\label{eq:martingalepi}
\end{equation}

For a pair of measures $\eta,\chi$ define $D = D_{\eta,\chi}: \R \mapsto \R^+$ by $D_{\eta,\chi}(k) = P_\chi(k) - P_\eta(k)$. Note that if $\eta,\chi$ have equal mass and equal barycentre then $\eta \leq_{cx} \chi$ is equivalent to $D \geq 0$ on $\R$.
Let $\sI_D = [\ell_D,r_D]$ be the smallest closed interval containing $\{ k : D_{\eta,\chi}(k)>0 \}$. If $\sI_D$ is such that $\sI_D \subset \sI_\chi$ then we must have $\eta=\chi$ on $[\ell_\chi, \ell_D) \cup (r_D, r_\chi]$.

The following lemma tells us that if $D_{\eta,\chi}(x)=0$ for some $x$ then in any martingale coupling of $\eta$ and $\chi$, no mass can cross $x$.
A proof can be found in Hobson~\cite[p254]{Hobson:98b}.

\begin{lem} Suppose $\eta$ and $\chi$ are probability measures with $\eta \leq_{cx} \chi$. Suppose $D(x)=0$. If $\pi \in \hat{\Pi}_M(\eta,\chi)$ then we have $\pi((-\infty,x),(x,\infty)) + \pi((x,\infty),(-\infty,x))=0$.
\label{lem:chacon}
\end{lem}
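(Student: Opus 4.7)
The plan is to exploit conditional Jensen's inequality for the convex function $g(y) = (x-y)^+$ combined with the hypothesis $D(x) = 0$, which will force equality in Jensen and thereby constrain the support of the conditional law of $Y$ given $X$.

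Let $(X,Y)$ have joint law $\pi$, so $X \sim \eta$, $Y \sim \chi$ and $\E[Y\mid X] = X$ almost surely. By the tower property and conditional Jensen applied to $g$,
\begin{equation*}
P_\chi(x) = \E[g(Y)] = \E\bigl[\E[g(Y)\mid X]\bigr] \geq \E\bigl[g(\E[Y\mid X])\bigr] = \E[g(X)] = P_\eta(x).
\end{equation*}
This recovers $D(x) \geq 0$. Since by hypothesis $D(x) = 0$, the inequality above is an equality, so the non-negative random variable $\E[g(Y)\mid X] - g(X)$ has zero expectation and therefore vanishes $\Prob$-a.s. In particular $\E[g(Y)\mid X] = g(X)$ almost surely.

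Next I would convert this pointwise equality into the no-crossing statement by sandwiching $g$ against its one-sided affine minorants. On the left, $y \mapsto x - y$ satisfies $x - y \leq g(y)$ on $\R$ with equality iff $y \leq x$. Hence the non-negative random variable $g(Y) - (x-Y)$ has conditional expectation $\E[g(Y)\mid X] - (x - X) = g(X) - (x-X)$, which equals $0$ on $\{X \leq x\}$. It follows that $g(Y) = x - Y$, i.e.\ $Y \leq x$, almost surely on $\{X < x\}$, which gives $\pi((-\infty,x) \times (x,\infty)) = 0$. Symmetrically, using the affine minorant $y \mapsto 0$ of $g$ (equality iff $y \geq x$), on $\{X \geq x\}$ one has $g(X) = 0 = \E[g(Y)\mid X]$, so the non-negative $g(Y)$ vanishes almost surely there, forcing $Y \geq x$ a.s.\ on $\{X > x\}$ and hence $\pi((x,\infty) \times (-\infty,x)) = 0$. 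Summing the two identities yields the claim.

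The only delicate step is passing from the integrated Jensen equality $\E[g(Y)] = \E[g(X)]$ to the pointwise conditional equality and then to the sign conclusion on $Y$; this is handled cleanly because $g$ agrees with distinct affine minorants on $(-\infty,x]$ and $[x,\infty)$, so equality in the conditional Jensen bound on each of the events $\{X<x\}$ and $\{X>x\}$ localises $Y$ to the corresponding half-line. No further ingredient beyond the martingale property of $\pi$ is required.
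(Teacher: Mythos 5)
Your proof is correct. The paper does not prove Lemma~\ref{lem:chacon} internally but cites Hobson (1998b, p.~254), where the argument is essentially the one you give: one shows $P_\chi(x)\ge P_\eta(x)$ for any martingale coupling by splitting the put (or call) payoff over $\{X\le x\}$ and $\{X>x\}$ and using $\E[Y\mid X]=X$, and the hypothesis $D(x)=0$ forces equality in each piece, localising $Y$ to the same side of $x$ as $X$; your conditional-Jensen/affine-minorant phrasing is a clean equivalent of that equality analysis.
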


It follows from Lemma \ref{lem:chacon} that if there is a point $x$ in the interior of the interval $\sI_\eta$ such that $D_{\eta,\chi}(x)=0$ then we can separate the problem of constructing martingale couplings of $\eta$ to $\chi$ into a pair of subproblems involving mass to the left and right of $x$, respectively, always taking care to allocate mass of $\chi$ at $x$ appropriately. Indeed, if there are multiple $\{ x_j \}$ with $D_{\eta,\chi}(x_j)=0$ then we can divide the problem into a sequence of `irreducible' problems\footnote{The terminology `irreducible' is due to Beiglb\"{o}ck and Juillet~\cite{BeiglbockJuillet:16} although the idea of splitting a problem into separate components is also present in the earlier papers of Hobson~\cite{Hobson:98b} and Cox~\cite{Cox:08}.}, each taking place on an interval $\sI_i$ such that $D>0$ on the interior of $\sI_i$ and $D=0$ at the endpoints. All mass starting in a given interval is transported to a point in the same interval. However, in our setting, in addition to specifying a model (or equivalently a martingale coupling) we also need to specify a stopping rule, and this needs to be defined across all irreducible components simultaneously. For this reason we do not insist that $D>0$ on the interior of $\sI_\chi$, although this will be the case in the simple settings in which we build our solution.

\subsection[Financial model]{The financial model and model based prices for American puts}
\label{ssec:putproblem}
Suppose $\tilde{Z} = (\tilde{Z}_{T_i})_{i = 0,1,2}$ is the price of a financial security which pays no dividends, where $T_0=0$ is today's date. (In this section a superscript $\tilde{\cdot}$ denotes an undiscounted quantity.) Suppose interest rates are non-stochastic and positive. Let one unit of cash invested at time $T_0$ in a bank account paying the riskless rate be worth
$\tilde{B}_{T_i}$ at time $i$ for $i = 0,1,2$. Then $\tilde{B}_0=1$. Define $Z = ({Z}_{i})_{i = 0,1,2}$ by $Z_{i}= \tilde{Z}_{T_i}/\tilde{B}_{T_i}$ so that $Z$ is the discounted asset price with a simplified time-index $i= 0,1,2$. We assume that $Z_0$ is known at time 0.

Let $\Sigma$ be the set of stopping rules taking values in $\{ T_1, T_2 \}$ and let $\sT$ be the set of stopping rules taking values in $\{ 1,2 \}$.
Consider an American put with strike $\tilde{K}$ which may be exercised at $T_1$ or $T_2$ only. Define $K_i = \tilde{K}/\tilde{B}_{T_i}$. Under a fixed model the expected payoff of an American put under an exercise (stopping) rule $\sigma$ taking values in $\{ T_1, T_2 \}$ is given by $\E[\frac{1}{\tilde{B}_{\sigma}} (\tilde{K} - \tilde{Z}_{\sigma})^+]$ and the price of the American option (assuming exercise is only allowed at $T_1$ or $T_2$) is
\[ \sup_{\sigma \in \Sigma} \E \left[ \frac{1}{\tilde{B}_{\sigma}} (\tilde{K} - \tilde{Z}_{\sigma})^+ \right] = \sup_{\tau \in \sT} \E \left[ (K_\tau - {Z}_{\tau})\right] . \]

Assume we are given European put prices $\{ \tilde{P}_{T_i}(\tilde{k}) \}_{\tilde{k} \geq 0}$ for $i=1,2$ for a continuum of strikes $\tilde{k}$. If the call prices have come from a model for which the discounted price process is a martingale then
\[ \tilde{P}_{T_i}(\tilde{k}) = \frac{1}{\tilde{B}_{T_i}} \E[(\tilde{k} - \tilde{Z}_{T_i})^+] 
= \E\left[ \left( \frac{\tilde{k}}{\tilde{B}_{T_i}} - Z_i\right)^+ \right]=: P_i\left( \frac{\tilde{k}}{\tilde{B}_{T_i}} \right). \]
Then for fixed $i$ we have $P_i(k) = \tilde{P}_{T_i}(k \tilde{B}_{T_i})$, and if we are given European put prices with maturity $T_i$ then we can read off the law of $Z_{i}$:
\[ \Prob (Z_{i} < k) = P'_i(k-) = \frac{\partial}{\partial k} \tilde{P}_{T_i}(k \tilde{B}_{T_i} -) .\]

Henceforth we assume we work in a discounted setting and with time-index in the set $i=0,1,2$. In this setting the American put has payoff $(K_1 -Z_1)^+$ at time 1 and payoff $(K_2 - Z_2)^+$ at time 2 where $K_i = K/\tilde{B}_{T_i}$. Since interest rates are positive by hypothesis, we have $K_2<K_1$. We assume that we are given the prices of European puts (with maturities $T_1$ and $T_2$ in the original timescale) for all possible strikes. From these we can infer the laws of the discounted price process at times $1$ and $2$. We denote these laws by $\mu$ and $\nu$. It follows from Jensen's inequality that if $\mu$ and $\nu$ have arisen from sets of European put options in this way then $\mu \leq_{cx} \nu$.

\begin{defn}[Hobson and Neuberger~\cite{HobsonNeuberger:16}] Suppose $\mu \leq_{cx} \nu$.

Let $\sS = (\Omega, \sF, \Prob, \F = \{ \sF_0, \sF_1, \sF_2 \})$ be a filtered probability space. We say $M=(M_0,M_1,M_2)=(\bar{\mu},X,Y)$ is a $(\sS,\mu,\nu)$ consistent stochastic process and we write $M \in \sM(\sS, \mu, \nu)$ if
\begin{enumerate}
\item $M$ is a $\sS$-martingale,
\item $\sL(M_1) = \mu$ and $\sL(\sM_2) = \nu$.
\end{enumerate}
We say $(\sS,M)$ is a $(\mu,\nu)$-consistent model if $\sS$ is a filtered probability space and $M$ is a $(\sS,\mu,\nu)$ consistent stochastic process.
\end{defn}

Let $B \in \sF_1$. Define the stopping time $\tau_B$ by $\tau_B = 1$ on $B$ and $\tau_B=2$ on $B^c$. (Conversely, any stopping rule taking values in $\{1,2\}$ has a representation of this form.) Suppose $(\sS,M)$ is a $(\mu,\nu)$ consistent model. The $(\sS,M)$ model-based expected payoff of the American put under stopping rule $\tau_B$ is
\[ \sA(B, M, \sS) = \E[(K_{\tau_B} - M_{\tau_B})^+] . \]
Then, optimising over stopping rules under the model $(\sS,M)$ the price of the American put is
$ \sA(M,\sS) = \sup_B \sA(B,M,\sS) $.
The highest model based expected payoff for the American put is
\begin{equation}
\sP = \sup_{\sS} \sup_{M \in \sM(\sS, \mu, \nu)} \sup_B \sA(B,M,\sS) .
\label{eq:primal}
\end{equation}

\begin{rem}
It is important to note that the supremum in \eqref{eq:primal} can exceed the supremum in \eqref{eq:primalcts}, but only in the case where $\mu$ has atoms, see Hobson and Norgilas~\cite{HobsonNorgilas:18}. The supremum in \eqref{eq:primalcts} gives the highest model based price under the restriction that $\sF_0$ is trivial, $\sF_1 = \sigma(X)$ and $\sF_2 = \sigma(X,Y)$. However, as pointed out in Hobson and Neuberger~\cite{HobsonNeuberger:17}, see also Hobson and Neuberger~\cite{HobsonNeuberger:16}, Bayraktar and Zhou~\cite{BayraktarZhou:16} and Aksamit et al.~\cite{AksamitDengOblojTan:17}, it is sometimes possible to achieve a higher model price if we work on a richer probability space. In the financial context, the choice of probability space is typically not specified. Instead the choice of probability space is a modelling issue, and it seems unreasonable to restrict attention to a sub-class of models without good reason, especially if this sub-class does not include the optimum.

The case where $\mu$ has atoms will be excluded by our standing assumptions, so we find that it is always sufficient to work in a setting in which ${\mathbb F}$ is the natural filtration of $M$.
\label{rem:HN2}
\end{rem}

\subsection{Superhedging}
\label{ssec:superhedge}
The following notion of a robust superhedge for an American option was first introduced by Neuberger~\cite{Neuberger:07}, see also Bayraktar and Zhou~\cite{BayraktarZhou:16} and Hobson and Neuberger~\cite{HobsonNeuberger:17}.

We work in discounted units over two time-points.
Consider a general American-style option with payoff $a$ if exercised at time 1, and payoff $b$ if exercised at time 2, where $a:\R \mapsto \R_+$ and $b: \R \mapsto \R_+$ are positive functions.

\begin{defn}
$(\phi,\psi, \{\theta_i \}_{i = 1,2})$ is a superhedge for $(a,b)$ if
\begin{eqnarray}
\label{eq:a} a(x) & \leq & \phi(x) + \psi(y) + \theta_1(x)(y-x), \\
\label{eq:b} b(y) & \leq & \phi(x) +\psi(y) + \theta_2(x)(y-x).
\end{eqnarray}
The cost of a superhedge is given by
\[ \sC = \sC(\phi,\psi, \{\theta_i \}_{i = 1,2}; \mu,\nu) = \int \phi(x) \mu(dx) + \int \psi(y)\nu(dy) , \]
where we set $\sC = \infty$ if $\int \phi(x)^+ \mu(dx) + \int \psi(y)^+ \nu(dy) = \infty$.
We let ${\sH}(a,b)$ be the set of superhedging strategies $(\phi,\psi, \{\theta_i \}_{i = 1,2})$.
\end{defn}

The idea behind the definition is that the hedger purchases a portfolio of maturity-1 European puts (and calls) with payoff $\phi$ and a portfolio of maturity-2
European puts (and calls) with payoff $\psi$. (The fact that this can be done and has cost $\sC$ follows from arguments of Breeden and Litzenberger~\cite{BreedenLitzenberger:78}.)
In addition, if the American option is exercised at time 1 the hedger holds $\theta_1$ units of the underlying between times 1 and 2; otherwise the hedger holds
$\theta_2$ units of the underlying over this time-period. In the former case, \eqref{eq:a} implies that the strategy superhedges the American option payout; in the later case \eqref{eq:b} implies the same.

\begin{rem}
We could extend the definition and allow a holding of $\theta_0$ units of the discounted asset over the time-period $[0,1)$. Then the RHS of \eqref{eq:a} would be
\begin{equation}
\phi(x) + \psi(y) + \theta_0 (x-M_0) + \theta_1(x)(y-x).
\label{eq:addtime0}
\end{equation}
However, after a relabelling $\phi(x) + \theta_0(x-M_0) \mapsto \phi(x)$, \eqref{eq:addtime0} reduces to \eqref{eq:a}. (Note that $\int \theta_0(x-M_0) \mu(dx) =0$ by the martingale property so that $\sC$ is unchanged.) Similarly for \eqref{eq:b}. Hence there is no gain in generality by allowing non-zero strategies between times 0 and 1.
\end{rem}

The dual (superhedging) problem is to find
\begin{equation}
\label{eq:dual}
{\sD}(\mu, \nu; a,b) = \inf_{(\phi,\psi, \{\theta_i \}_{i = 1,2}) \in {\sH}(a,b)} \sC(\phi,\psi, \{\theta_i \}_{i = 1,2};\mu,\nu) .
\end{equation}

Potentially the space ${\sH}$ could be very large and it is extremely useful to be able to search over a smaller space. The next lemma shows that any convex $\psi$ with $\psi \geq b$ can be used to generate a superhedge $(\phi,\psi, \{\theta_i \}_{i = 1,2})$.

For a convex function $\chi$ let $\chi'_+$ denote the right-derivative of $\chi$.


\begin{lem}
Suppose $\psi \geq b$ with $\psi$ convex. Define $\phi = (a-\psi)^+$ and set $\theta_2=0$ and $\theta_1= - \psi'_+$. Then
$(\phi,\psi, \{\theta_i \}_{i = 1,2})$ is a superhedge.
\label{lem:phifrompsi}
\end{lem}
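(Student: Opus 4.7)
The plan is simply to verify the two superhedging inequalities \eqref{eq:a} and \eqref{eq:b} directly from the definitions, using only the convexity of $\psi$ and the pointwise bound $\psi\ge b$. No limiting or duality machinery is needed.

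First I would check \eqref{eq:b}, which is the easy one. With the choice $\theta_2\equiv 0$, the right-hand side reduces to $\phi(x)+\psi(y)$. Since $\phi=(a-\psi)^+\ge 0$ everywhere and $\psi(y)\ge b(y)$ by hypothesis, we immediately obtain $b(y)\le\phi(x)+\psi(y)$, as required.

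The substantive step is \eqref{eq:a}. With $\theta_1(x)=-\psi'_+(x)$, this amounts to showing
\[
a(x)\le \phi(x)+\psi(y)-\psi'_+(x)(y-x).
\]
Here I would invoke convexity of $\psi$: the tangent-line inequality at $x$ using the right-derivative gives $\psi(y)\ge\psi(x)+\psi'_+(x)(y-x)$, i.e.
\[
\psi(y)-\psi'_+(x)(y-x)\ge\psi(x).
\]
Combined with the trivial bound $\phi(x)=(a(x)-\psi(x))^+\ge a(x)-\psi(x)$, the right-hand side is at least $(a(x)-\psi(x))+\psi(x)=a(x)$, which is exactly \eqref{eq:a}.

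The proof is therefore a one-line application of convexity on top of the definition of the positive part; the only place one has to be a little careful is in choosing the right-derivative (rather than the left) so that the tangent inequality $\psi(y)\ge\psi(x)+\psi'_+(x)(y-x)$ holds for all $y\in\R$, including at points where $\psi$ fails to be differentiable. There is no genuine obstacle to overcome.
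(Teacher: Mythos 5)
Your proof is correct and follows essentially the same route as the paper's: inequality \eqref{eq:b} from $\phi\ge 0$ and $\psi\ge b$, and inequality \eqref{eq:a} from the right-derivative subgradient inequality $\psi(y)\ge\psi(x)+\psi'_+(x)(y-x)$ together with $(a(x)-\psi(x))^+\ge a(x)-\psi(x)$. Nothing further is needed.
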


\begin{proof}
We have
\[ b(y) \leq \psi(y) \leq \phi(x)+\psi(y) = \phi(x) +\psi(y) + \theta_2(x)(y-x) \]
and \eqref{eq:b} follows. Also, by the convexity of $\psi$, $\psi(x) \leq \psi(y) - \psi'_+(x)(y-x)$ and
\[ a(x) \leq (a(x)-\psi(x))^+ + \psi(x) \leq \phi(x) + \psi(y) + \theta_1(x)(y-x). \]
Hence \eqref{eq:a} follows.
\end{proof}

Let $\breve{\sH}= \breve{\sH}(b)$ be the set of convex functions $\psi$ with $\psi \geq b$.
For $\psi \in \breve{\sH}$  we can define the associated cost of the portfolio
\[ \breve{\sC}(\psi; \mu,\nu) =  \int (a(x)-\psi(x))^+ \mu(dx) + \int \psi(y) \nu(dy). \]
The reduced dual hedging problem restricts attention to superhedges generated from $\psi \in \breve{\sH}$ and is to find
\begin{equation}
 \breve{\sD} = \breve{\sD}(\mu,\nu;a,b) = \inf_{\psi \in \breve{\sH}(b)} \breve{\sC}(\psi;a,b) .
 \label{eq:dualS}
\end{equation}
Clearly we have ${\sD} \leq \breve{\sD}$: we will show that ${\sD} = \breve{\sD}$ for the American put.


\subsection{Weak and Strong Duality}
\label{ssec:strong=weak}
Let $(\sS,M)$ be a $(\mu,\nu)$ consistent model and let $\tau$ be a stopping time in this framework. The expected payoff of the American put under this stopping rule is $\E[(K_\tau - M_\tau)^+]$. Conversely, let $\psi$ be any convex function with $\psi(y) \geq (K_2-y)^+$ and let $\phi(x)= [(K_1-x)^+ - \psi(x)]^+$ and $\theta_i(x) = - \psi'_+(x) I_{\{ i = 1 \}}$.
Then for any $i \in \{1,2\}$ we have $(K_i - M_i)^+ \leq \psi(M_2) + \phi(M_1) + \theta_i(M_1)(M_2-M_1)$
and hence for any random time $\tau$ taking values in $\{1,2\}$, $(K_\tau - M_\tau)^+ \leq \psi(M_2) + \phi(M_1) + \theta_\tau(M_1)(M_2-M_1)$.
Then $\E[(K_\tau - M_\tau)^+] \leq \E^{X \sim \mu, Y \sim \nu}[\phi(X) + \psi(Y)]$ and we have weak duality $\sP \leq \sD$.

Suppose we can find $(\sS^*, M^*, B^*)$ with $M^* \in \sM(\sS^*,\mu, \nu)$ and $\psi^* \in \breve{H}$ such that
\[ \sA(B^*,M^*, \sS^*) = \breve{\sC}(\psi^*, \mu, \nu) . \]
Then $\sA(B^*,M^*, \sS^*) \leq \sP \leq \sD \leq \breve{\sD} \leq  \breve{\sC}(\psi^*, \mu, \nu)$ but since the two outer terms are equal we have $\sP=\sD$ and strong duality. Moreover, $(\sS^*,M^*)$ is a consistent model which generates the highest price for the American put (and $\tau^*$ given by $\tau^*=1$ if and only if $X \in B^*$ is the optimal exercise rule) and $\psi^*$ generates the cheapest superhedge.

\subsection{The left-curtain coupling}
The left-curtain coupling (or martingale transport) was introduced by Beiglb\"{o}ck and Juillet~\cite{BeiglbockJuillet:16} and further studied by Henry-Labord\`{e}re and Touzi~\cite{HenryLabordereTouzi:16} and Beiglb\"{o}ck et al.~\cite{BeiglbockHenryLabordereTouzi:17}.

For real numbers $c,d$ with $c \leq x \leq d$ define the probability measure $\chi_{c,x,d}$ by $\chi_{c,x,d} = \frac{d-x}{d-c}\delta_c + \frac{x-c}{d-c} \delta_d$ with $\chi_{c,x,d} = \delta_x$ if $(d-x)(x-c)=0$. Note that $\chi_{c,x,d}$ has mean $x$. $\chi_{c,x,d}$ is the law of a Brownian motion started at $x$ evaluated on the first exit from $(c,d)$.

\begin{lem}[Beiglb\"ock and Juillet\cite{BeiglbockJuillet:16}, Corollary 1.6]\label{lem:LC}
Let $\mu,\nu$ be probability measures in convex order and assume that $\mu$ is continuous. Then there exists a pair of measurable functions $T_d : \R \mapsto \R$ and $T_u : \R \mapsto \R$ such that $T_d(x) \leq x \leq T_u(x)$, such that for all $x<x'$ we have $T_u(x) \leq T_u(x')$ and $T_d(x') \notin (T_d(x),T_u(x))$, and such that if
we define $\pi_{lc}(dx,dy) = \mu(dx) \chi_{T_d(x),x,T_u(x)}(dy)$ then $\pi_{lc} \in \hat{\Pi}_M(\mu,\nu)$. $\pi_{lc}$ is called the left-curtain martingale coupling.
\end{lem}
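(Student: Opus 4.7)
The plan is to follow Beiglb\"ock and Juillet's construction via the \emph{shadow} measure. Given $\mu\leq_{cx}\nu$ and any submeasure $\alpha\leq\mu$ (pointwise, with mass at most $\nu(\R)$), the shadow $S^{\nu}(\alpha)$ is defined as the unique measure $\beta$ satisfying $\beta\leq\nu$ as measures, $\alpha\leq_{cx}\beta$, and $\beta\leq_{cx}\beta'$ for every other competitor $\beta'$ with the first two properties. The first, and hardest, step is to prove existence and uniqueness of $S^{\nu}(\alpha)$. Uniqueness is immediate from antisymmetry of $\leq_{cx}$. Existence is obtained by a lattice-style argument: given two competitors $\beta_{1},\beta_{2}$, note that $P_{\beta_{i}}\leq P_{\nu}$ (since $\beta_{i}\leq\nu$) while $P_{\alpha}\leq\min(P_{\beta_{1}},P_{\beta_{2}})$ (from $\alpha\leq_{cx}\beta_{i}$), so the largest convex minorant of $\min(P_{\beta_{1}},P_{\beta_{2}})$ lies in the interval $[P_{\alpha},P_{\nu}]$; one verifies that this minorant is itself the put-function of an admissible competitor. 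Tightness (all competitors are dominated by $\nu$) combined with weak lower semicontinuity of $P_{\cdot}$ then lets one extract the minimiser.

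Once the shadow is in hand, the coupling is built by setting $\nu_{x}:=S^{\nu}\bigl(\mu|_{(-\infty,x]}\bigr)$ and defining $\pi_{lc}$ to be the unique probability measure on $\R^{2}$ whose restriction to $(-\infty,x]\times \R$ has first marginal $\mu|_{(-\infty,x]}$ and second marginal $\nu_{x}$. An associativity-style identity $S^{\nu}(\alpha_{1}+\alpha_{2})=S^{\nu}(\alpha_{1})+S^{\nu-S^{\nu}(\alpha_{1})}(\alpha_{2})$, derived directly from the definition, makes $x\mapsto\nu_{x}$ nondecreasing in the sense of measures with $\nu_{+\infty}=\nu$; the hypothesis that $\mu$ is continuous rules out jumps and yields a regular disintegration $\pi_{lc}(dx,dy)=\mu(dx)q(x,dy)$. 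The martingale property is equivalent to each increment $d\nu_{x}$ having barycentre $x$, which in turn follows from the fact that $\nu_{x}$ has the same total mass and barycentre as $\mu|_{(-\infty,x]}$---a consequence of $\mu|_{(-\infty,x]}\leq_{cx}\nu_{x}$ together with the extremality that defines $\nu_{x}$.

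It then remains to read off the structure of the kernel $q$. The extremal measure dominated by $\nu-\nu_{x-}$ with infinitesimal mass and barycentre $x$ must concentrate on the two endpoints of the largest interval around $x$ that the already-used mass $\nu_{x-}$ does not exhaust: any atom placed strictly inside such an interval could be spread outwards to produce a strictly smaller measure in convex order, contradicting minimality. Calling these endpoints $T_{d}(x)\leq x\leq T_{u}(x)$ yields $q(x,\cdot)=\chi_{T_{d}(x),x,T_{u}(x)}$ automatically, since the two-point measure with prescribed support and prescribed barycentre is unique. Monotonicity of $T_{u}$ and the non-crossing property $T_{d}(x')\notin(T_{d}(x),T_{u}(x))$ for $x<x'$ follow because the incremental shadow $\nu_{x'}-\nu_{x}$ draws only on mass of $\nu$ outside $(T_{d}(x),T_{u}(x))$; any violation would mean the increment at $x'$ is helping to certify $\mu|_{(-\infty,x]}\leq_{cx}\beta'$ for some $\beta'$ strictly smaller (in convex order) than $\nu_{x}$, contradicting minimality. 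Measurability of $T_{u}$ is then immediate from its monotonicity, and measurability of $T_{d}$ follows from the martingale equation $(T_{u}(x)-x)/(T_{u}(x)-T_{d}(x))$ once $T_{u}$ is known. The decisive obstacle throughout is the very first step: establishing the shadow's existence. Convex order is not a lattice on arbitrary measures, and exploiting the restriction $\beta\leq\nu$ to recover enough lattice structure is where the real technical work lies; every subsequent property of $\pi_{lc}$ is essentially an unpacking of that minimality.
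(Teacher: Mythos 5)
This lemma is not proved in the paper at all: it is imported verbatim from Beiglb\"ock and Juillet (their Corollary 1.6), so the only meaningful comparison is with their argument. Your overall route --- shadows $S^{\nu}(\alpha)$, the family $\nu_x=S^{\nu}(\mu|_{(-\infty,x]})$, associativity, then reading off the kernel --- is indeed the construction behind the left-curtain coupling, and your existence argument for the shadow via the largest convex minorant of $\min(P_{\beta_1},P_{\beta_2})$ is a legitimate variant (closer to later potential-function treatments than to Beiglb\"ock--Juillet's approximation by atomic measures), although the key verification there --- that the second-derivative measure of the minorant is still dominated by $\nu$, including at the kinks where the contact set changes --- is asserted rather than checked.

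The genuine gap is at the step you need most: the claim that for $\mu$-a.e.\ $x$ the disintegration kernel is the two-point measure $\chi_{T_d(x),x,T_u(x)}$. Your justification is backwards: spreading an atom outwards while preserving its barycentre produces a \emph{larger} measure in convex order, not a smaller one, so it cannot contradict minimality of the shadow. The correct obstruction to placing mass strictly inside $(T_d(x),T_u(x))$ is the domination constraint $\beta\leq\nu-\nu_{x-}$, i.e.\ you must first prove that $\nu_{x-}$ exhausts $\nu$ on that interval; but that exhaustion is essentially the structural content of the corollary, so as written the argument is circular. In addition, identifying the kernel $q(x,\cdot)$ with a limit of normalised incremental shadows $S^{\nu-\nu_{x-\epsilon}}(\mu|_{(x-\epsilon,x]})/\mu((x-\epsilon,x])$ requires a differentiation-of-measures argument you do not supply, and this is precisely where the continuity of $\mu$ does real work --- it is not merely ``ruling out jumps'' of $x\mapsto\nu_x$. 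Beiglb\"ock and Juillet avoid both issues by a different mechanism: they establish left-monotonicity (a $c$-cyclical-monotonicity property) of the support of $\pi_{lc}$ and then show, via a countability argument, that the set of $x$ whose fibre contains three or more points is at most countable, hence $\mu$-null when $\mu$ is continuous; the monotonicity of $T_u$ and the non-crossing property of $T_d$ are then consequences of left-monotonicity of the support rather than of an exhaustion property of the shadows. To make your route rigorous you would need to prove the exhaustion/gap property of $\nu-\nu_{x-}$ around $x$ independently, which is a substantial missing piece.
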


Note that there is no claim of uniqueness of the functions $T_d,T_u$ in Lemma~\ref{lem:LC}. For example, the definitions of $T_d$ and $T_u$ are immaterial outside $[\ell_\mu,r_\mu]$. Further, if $T_u$ has a (necessarily upward) jump at $x'$ then it does not matter what value we take for $T_u(x')$ provided $T_u(x') \in [T_u(x'-),T_u(x'+)]$. (Since we are assuming $\mu$ is continuous, the probability that we choose an $x$-coordinate value of $x'$ is zero.) More importantly, if $(T_d,T_u)$ satisfy the properties of Lemma~\ref{lem:LC} and if $T_u(x)= x$ on an interval $[\underline{x},\overline{x})$ then we can modify the definition of $T_d$ on $[\underline{x},\overline{x})$ to either $T_d(x)= x$ or $T_d(x)= T_d(\underline{x}-)$ and still satisfy the relevant monotonicity properties.
Henry-Labord\`{e}re and Touzi~\cite{HenryLabordereTouzi:16} resolve this indeterminacy by setting $T_d(x) = x$ on the set $T_u(x)=x$ and also taking $T_u$ and $T_d$ to be right-continuous.

We follow Henry-Labord\`{e}re and Touzi~\cite{HenryLabordereTouzi:16} by taking $T_d(x) = x$ on the set $T_u(x)=x$ but we do not make right-continuity assumptions on $T_d$ and $T_u$. Also we write $(f,g)$ in place of $(T_d,T_u)$.

\begin{lem}
Let $(T_d,T_u)$ be a pair of functions satisfying the monotonicity properties listed in Lemma~\ref{lem:LC}. Suppose they lead to a solution $\pi_{lc} \in \hat{\Pi}_M(\mu,\nu)$. \\
Set $g(x)=T_u(x)$. On $g(x)>x$ set $f(x)=T_d(x)$ and on $g(x)=x$ set $f(x)=x$. Then $(f,g)$
are such that $f(x) \leq x \leq g(x)$ and for all $x'>x$ we have $g(x') \geq g(x)$ and $f(x') \notin (f(x),g(x))$.
Moreover, $\mu(dx) \chi_{f(x),x,g(x)}(dy) = \mu(dx) \chi_{T_d(x),x,T_u(x)}(dy)$.
\end{lem}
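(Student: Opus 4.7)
The lemma is essentially a bookkeeping exercise: the pair $(f,g)$ agrees with $(T_d,T_u)$ except where $T_u(x)=x$, and on that set $\chi_{T_d(x),x,T_u(x)}$ degenerates to $\delta_x$ anyway. So my plan is to check each claim by a short case split on whether $g(x)=x$ or $g(x)>x$, and similarly for $g(x')$.

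First I would verify the basic inequality $f(x)\le x\le g(x)$. The right inequality is immediate from $g=T_u$ and $T_u(x)\ge x$. For the left one, on $\{g(x)>x\}$ we have $f(x)=T_d(x)\le x$ by hypothesis, and on $\{g(x)=x\}$ we defined $f(x)=x$. Next, monotonicity of $g$ is inherited directly from that of $T_u$.

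The key claim is that for $x'>x$ we have $f(x')\notin(f(x),g(x))$. I split into cases. If $g(x)=x$ then $f(x)=x=g(x)$, so the open interval $(f(x),g(x))$ is empty and the assertion is vacuous. If $g(x)>x$ then $f(x)=T_d(x)$ and $g(x)=T_u(x)$, so $(f(x),g(x))=(T_d(x),T_u(x))$. Now I split on $x'$: if $g(x')>x'$ then $f(x')=T_d(x')$ and the hypothesis of Lemma \ref{lem:LC} gives $T_d(x')\notin(T_d(x),T_u(x))$ directly. The only delicate case is $g(x')=x'$: then $f(x')=x'$, and I need $x'\ge g(x)=T_u(x)$. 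But $T_u$ is non-decreasing, so $T_u(x)\le T_u(x')=x'$, as required. This case is the one point where a little thought is needed, and it is the only step I would flag as non-routine.

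Finally, for the identity of the measures $\chi_{f(x),x,g(x)}=\chi_{T_d(x),x,T_u(x)}$, I note that on $\{g(x)>x\}$ the pairs $(f,g)$ and $(T_d,T_u)$ coincide, so there is nothing to check. On $\{g(x)=x\}=\{T_u(x)=x\}$, the convention $\chi_{c,x,d}=\delta_x$ whenever $(d-x)(x-c)=0$ forces both $\chi_{T_d(x),x,T_u(x)}=\chi_{T_d(x),x,x}=\delta_x$ and $\chi_{f(x),x,g(x)}=\chi_{x,x,x}=\delta_x$. Since the two kernels agree pointwise (and in particular $\mu$-a.e.), the product measures $\mu(dx)\chi_{f(x),x,g(x)}(dy)$ and $\mu(dx)\chi_{T_d(x),x,T_u(x)}(dy)$ coincide, which also shows that $\pi_{lc}$ remains a martingale coupling of $\mu$ and $\nu$ when written in terms of $(f,g)$.
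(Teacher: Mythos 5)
Your proof is correct and follows essentially the same route as the paper: the inequality and monotonicity of $g$ are immediate, and the key case split (empty interval when $g(x)=x$; the hypothesis on $T_d$ when $g(x')>x'$; and $x'=g(x')\ge g(x)$ by monotonicity of $T_u$ when $g(x')=x'$) is exactly the paper's argument. Your explicit check of the kernel identity via the convention $\chi_{c,x,d}=\delta_x$ when $(d-x)(x-c)=0$ is a detail the paper leaves implicit, and it is correct.
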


\begin{proof} The property $f(x) \leq x \leq g(x)$ is immediate so we simply need to check that for $x'>x$ we have $g(x') \geq g(x)$ and $f(x') \notin (f(x),g(x))$. Monotonicity of $g$ is inherited from monotonicity of $T_u$. If $g(x)=x$ then $f(x)=x$ and $f(x') \notin (f(x),g(x))= \emptyset$. If $g(x)> x$ and $g(x')>x'$ then $f(x') = T_d(x') \notin (T_d(x),T_u(x)) = (f(x),g(x))$.  Finally, if $g(x)>x$ and $g(x') = x'$ then $f(x')=x' \notin (f(x),x'=g(x')) \supseteq (f(x),g(x))$.
\end{proof}

Figure~\ref{fig:generalFG} gives a stylized representation of $f$ and $g$ in the case where $\nu$ has no atoms. (Atoms of $\nu$ lead to horizontal sections of $f$ and $g$, see Section~\ref{ssec:atomsinnu}.) In the figure the set $\{ g(x)>x \}$ is a finite union of intervals whereas in general it may be a countable union of intervals. Similarly, in the figure $f$ has finitely many downward jumps, whereas in general it may have countably many jumps. Nonetheless Figure~\ref{fig:generalFG} captures the essential behaviour of $f$ and $g$. 

\begin{figure}[H]
\centering
\resizebox{8cm}{8cm}{
\begin{tikzpicture}[dot/.style={circle,inner sep=1pt, fill, label={#1},name#1},
extended line/.style={shorten >=-#1, shorten <=-#1},
extended line/.default=3cm,
declare function={	
diag(\x)=\x;
    	f=1;
	e1=1.5;
	g=3;
	e2=3.5;
	f2=4;
	f3=5;
	fx=5.5;
	e1x=6;
	gx=7;
	e2x=7.5;
	f2x=8;
	f3x=8.5;
	s=8.5;
	sx=8.5;
	k1=6.37;
	k2=4.5;
	a(\x)=(k1-\x)*(\x<k1)-8;
	b(\x)=(k2-\x)*(\x<k2)-8;
}]

           \draw[name path=diag,black] (0,0) -- (10,10);


          \coordinate (g) at (g, {diag(g)});
           \coordinate (e1) at (e1, {diag(e1)});
            \coordinate (e2) at (e2, {diag(e2)});
             \coordinate (s) at (s, {diag(s)});

             \coordinate (a) at (g, {diag(f)});
              \coordinate (b) at (f2, {diag(g)});
               \coordinate (c) at (f2, {diag(f)});

             \draw[thick, blue, name path=tu1] (e1) to[out=75,in=180] (g);
             \draw[thick, blue, name path=td1] (e1) to[out=315,in=180] (a);
              \draw[thick, blue, name path=tu2] (e2) to[out=75,in=205] (f3,{diag(f3)});
              \draw[thick, blue, name path=td2] (e2) to[out=300,in=140] (b);
              \draw[thick, blue, name path=td3,] (c) to[out=300,in=180] (f3,0.5);

               \coordinate (gx) at (gx, {diag(gx)});
           \coordinate (e1x) at (e1x, {diag(e1x)});
            \coordinate (e2x) at (e2x, {diag(e2x)});
             \coordinate (sx) at (sx, {diag(sx)});

             \coordinate (ax) at (gx, {diag(fx)});
              \coordinate (bx) at (f2x, {diag(gx)});
               \coordinate (cx) at (f2x, {diag(fx)});
               \draw[thick, blue, name path=tu1x] (e1x) to[out=75,in=180] (gx);
             \draw[thick, blue, name path=td1x] (e1x) to[out=315,in=180] (ax);
              \draw[thick, blue, name path=tu2x] (e2x) to[out=75,in=205] (10,10);
              \draw[thick, blue, name path=td2x] (e2x) to[out=300,in=140] (bx);
              \draw[thick, blue, name path=td3x] (cx) to[out=300,in=180] (f3x,f3);
               \draw[thick, blue, name path=td3x] (f3x,0.5) to[out=300,in=180] (10,0);

		 \draw [blue, dashed] (g) -- (b) -- (c) -- (a) -- (g);
		  \draw [blue, dashed] (gx) -- (bx) -- (cx) -- (ax) -- (gx);
		  \draw [blue, dashed] (f3,{diag(f3)}) -- (f3x,f3) -- (f3x,0.5) -- (f3,0.5) -- (f3,{diag(f3)});

                   \path[name path=base] (0,0) to (10,0);
                   
                   \node (Tu)[scale=1] at (8.5,9.4) {$g$};
                     \node (Tu)[scale=1] at (9.4,0.3) {$f$};

\end{tikzpicture}
}
\caption{Stylized plot of the functions $f$ and $g$ in the general case (with no atoms). Note that on the set $g(x)=x$ we have $f(x)=x$.}
\label{fig:generalFG}
\end{figure}

\begin{rem}
\label{rem:lcinterpretation}
The left-curtain martingale coupling can be identified with Figure~\ref{fig:generalFG} in the following way: choose an $x$-coordinate according to $\mu$; then if $g(x)=x$ set $Y=x=X$ so the pair $(X,Y)$ lies on the diagonal; otherwise if $g(x)>x$ then $f(x)<x$  and we set the $y$-coordinate to be $g(x)$ with probability $\frac{x-f(x)}{g(x)-f(x)}$ and $f(x)$ with probability $\frac{g(x)-x}{g(x)-f(x)}$. Then the coordinates $(x,y)$ represent the realised values of $(X,Y)$.

For a horizontal level $y$ there are two cases. Either, $g(y)>y$ and then the value of $y$ arises from a choice according to $\mu$ of $x=g^{-1}(y)$ for which $g(x)$ is chosen rather than $f(x)$; or $g(y)=y$  and the value $y$ arises either from a choice according to $\mu$ of $x=y$, or from a choice according to $\mu$ of $f^{-1}(y)$ combined with a choice of $y$-coordinate of $f(f^{-1}(y))=y$.
\end{rem}

Suppose $\nu$ is also continuous and fix $x$. Then, by the first paragraph of Remark~\ref{rem:lcinterpretation}, under the left-curtain martingale coupling mass in the interval $(f(x),x)$ at time 1 is mapped to the interval $(f(x), g(x))$ at time 2. Thus $\{ f(x), g(x) \}$ with $f(x) \leq x \leq g(x)$ are solutions to
\begin{eqnarray}
\int_{f}^x \mu(dz) & = & \int_{f}^g \nu(dz),  \label{eq:mass} \\
\int_{f}^x z \mu(dz) & = & \int_{f}^g z \nu(dz). \label{eq:mean}
\end{eqnarray}

Essentially, \eqref{eq:mass} is preservation of mass condition and \eqref{eq:mean} is preservation of mean and the martingale property. If $\nu$ has atoms then
\eqref{eq:mass} and \eqref{eq:mean} become
\begin{eqnarray}
\label{eq:massAtom}
\int_{f}^x \mu(dz) &  = & \int_{(f,g)} \nu(dz) + \lambda_f + \lambda_g, \\
\label{eq:meanAtom}
\int_{f}^x z \mu(dz) &  = & \int_{(f,g)} z \nu(dz) + f \lambda_f + g \lambda_g ,
\end{eqnarray}
respectively, where $0 \leq \lambda_f \leq \nu( \{ f \} )$ and $0 \leq \lambda_g \leq \nu( \{ g \} )$.

Returning to the case of continuous $\mu$ and $\nu$, for fixed $x$ there can be multiple solutions to \eqref{eq:mass} and \eqref{eq:mean}. If, however, we consider $f$ and $g$ as functions of $x$ and impose the additional monotonicity properties of Lemma~\ref{lem:LC} (for $x<x'$, $g(x) \leq g(x')$ and $f(x') \notin (f(x),g(x))$), then typically, for almost all $x$ there is a unique solution to \eqref{eq:mass} and \eqref{eq:mean}. However, there are exceptional $x$ at which $f$ jumps and at which there are multiple solutions, see Section~\ref{ssec:bimodal}.

\begin{rem}
\label{rem:trivial}
Note that if $g(x)=T_u(x)=x=f(x)>T_d(x)$ then we typically do not have $\int_{T_d(x)}^x \mu(dz) =  \int_{T_d(x)}^{g(x)} \nu(dz)$ and
$\int_{T_d(x)}^x z \mu(dz)  =  \int_{T_d(x)}^{g(x)} z \nu(dz)$. However, we trivially have $\int_{f(x)}^x \mu(dz) =  \int_{f(x)}^{g(x)} \nu(dz)$ and
$\int_{f(x)}^x z \mu(dz)  =  \int_{f(x)}^{g(x)} z \nu(dz)$. This explains our choice of $f(x)$ when $g(x)=x$.
\end{rem}

\begin{rem}
\label{rem:reverseconstruction}
There are many pairs $(\mu,\nu)$ which lead to the same pair of functions $(f,g)$. Conversely, let $\sI_1 \subseteq \sI_2 \subseteq \R$ be intervals and define
\[ \Xi^{\sI_1,\sI_2}  =  \{ (f,g) : g:\sI_1 \rightarrow \sI_2, g(x) \geq x, f:\sI_1 \rightarrow \sI_2, f(x)\leq x \}
\]
and let $\Xi = \cup_{\sI_1 \subseteq \sI_2} \Xi^{\sI_1,\sI_2}$.
Suppose $\mu$ is any integrable measure with support
in $\sI_1$ and define $\pi$ via $\pi(dx,dy) = \mu(dx) \chi_{f(x),x,g(x)}(dy)$ and $\nu$ via
\begin{equation}
\nu(dy) = \int_x \mu(dx) \chi_{f(x),x,g(x)}(dy).
\label{eq:nudef}
\end{equation}
Then (subject to integrability conditions\footnote{We require that
$\int \mu(dx) \frac{(g(x)-x)(x-f(x))}{g(x)-f(x)} I_{ \{ g(x)>f(x) \} } < \infty$. Then $\nu$ is integrable, $\mu \leq_{cx} \nu$ and $\pi$ is a martingale coupling.})
we have $\pi \in \hat{\Pi}_M(\mu,\nu)$.

Moreover, if we set
\[ \Xi^{\sI_1,\sI_2}_{Mon}  = \{ (f,g) \in \Xi^{\sI_1,\sI_2}: \mbox{$g$ increasing, $f(x)=x$ on $g(x)=x$, for $x'>x$ $f(x') \notin (f(x),g(x))$} \}  \]
and $\Xi_{Mon} = \cup_{I_1 \subseteq I_2} \Xi^{\sI_1,\sI_2}_{Mon}$, then provided the same integrability conditions are satisfied we have that if $\nu$ is given by \eqref{eq:nudef} then $\pi$ given by $\pi(dx,dy) = \mu(dx) \chi_{f(x),x,g(x)}(dy)$ is the left-curtain coupling.

The relevance of this remark is as follows. Given a pair $\mu \leq_{cx} \nu$ it may be difficult to determine the properties of $(f,g)$ which define the left-curtain coupling, beyond the fact that $(f,g) \in \Xi_{Mon}$. (For example, it may be difficult to ascertain the number of downward jumps of $f$ without calculating $f$ and $g$ everywhere.) However, if we want to construct examples for which $(f,g)$ have additional properties (such as no downward jump) then we can start with an appropriate pair $(f,g)$, take arbitrary (continuous) initial law $\mu$ with support on the interval where $f$ is defined, and then define $\nu$ via \eqref{eq:nudef}. This observation underpins our analysis in Sections~\ref{ssec:dispersion} and \ref{ssec:bimodal}.
\end{rem}

\section[Robust bounds when $\mu$ is atom-free]{Robust bounds for American puts when $\mu$ is atom-free}
\label{sec:continuous}

\subsection{Problem formulation}
\label{ssec:formulation}
Our goal in this section is to derive the highest consistent model price for the American put. We begin by giving a concise formulation of the problem, and stating a version of our main result. Then we first study the problem in a simple special case, second generalise to a case which exhibits all the main features and third present the analysis in the general case. 

Throughout this paper we assume that $\mu$ has no atoms. The same assumption is made in Beiglb\"{o}ck and Juillet~\cite{BeiglbockJuillet:16}, Henry-Labord\`{e}re and Touzi~\cite{HenryLabordereTouzi:16} and Beiglb\"{o}ck et al~\cite{BeiglbockHenryLabordereTouzi:17}. The extension of the left-curtain martingale coupling to the case where $\mu$ has atoms is the subject of Hobson and Norgilas~\cite{HobsonNorgilas:18}.

\begin{sass}
$\mu$ has no atoms.
\label{sa:noatoms}
\end{sass}

We consider an American put on an asset. Under the bond numeraire, we represent the price of the underlying security by $M = (M_0 = \bar{\mu}, M_1 = X, M_2=Y)$. The American put may only be exercised at time $1$ or time $2$: if the put is exercised at time 1 the payoff is $(K_1 - X)^+$; if the put is exercised at time 2   the payoff is $(K_2-Y)^+$. We say the put is in-the-money at time 1 (respectively time 2) if $X < K_1$ (respectively $Y<K_2$). Otherwise the put is out-of-the-money. The laws of $X$ and $Y$ are presumed to be given and $\sL(X) = \mu$ and $\sL(Y)= \nu$. 

Under Standing Assumption~\ref{sa:noatoms} our problem is to
\begin{prob}
Find
\begin{enumerate}
\item the highest possible expected payoff of the American option, where expectations are calculated under models which are consistent with the marginal laws of $M$,
\item the cheapest superhedging price.
\end{enumerate}
\end{prob}
Our main result is as follows:
\begin{thm}
\label{thm:main}
The highest model-based expected payoff of the American put is equal to the cheapest superhedging price. Moreover, the highest model-based expected payoff is attained by the model associated with the left-curtain martingale coupling (and a judiciously chosen stopping rule). Further, we can characterise the cheapest super-hedging strategy: it takes the form described in Lemma~\ref{lem:phifrompsi} and it is one of four possible types.
\end{thm}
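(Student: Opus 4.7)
The plan is to instantiate the weak-to-strong duality bridge described in Section~\ref{ssec:strong=weak}: it suffices to exhibit a $(\mu,\nu)$-consistent model $(\sS^*, M^*)$, a Borel exercise set $B^*$, and a convex $\psi^* \in \breve{\sH}((K_2-\cdot)^+)$ for which $\sA(B^*, M^*, \sS^*) = \breve{\sC}(\psi^*; \mu, \nu)$. Once such a matching pair is produced, the chain $\sA(B^*, M^*, \sS^*) \leq \sP \leq \sD \leq \breve{\sD} \leq \breve{\sC}(\psi^*; \mu, \nu)$ collapses, proving strong duality together with the optimality of the chosen model, stopping rule, and hedge.

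For the primal candidate I take $\pi_{lc}$, the left-curtain coupling of Lemma~\ref{lem:LC}, with the natural filtration (this is licensed by Remark~\ref{rem:HN2} under Standing Assumption~\ref{sa:noatoms}). Under this model $Y\in\{f(X),g(X)\}$ a.s.~with martingale weights $\tfrac{g(X)-X}{g(X)-f(X)}$ and $\tfrac{X-f(X)}{g(X)-f(X)}$, so the conditional continuation value is
\[
C(x) := \E[(K_2-Y)^+\mid X=x] = \tfrac{g(x)-x}{g(x)-f(x)}(K_2-f(x))^+ + \tfrac{x-f(x)}{g(x)-f(x)}(K_2-g(x))^+,
\]
with $C(x)=(K_2-x)^+$ when $f(x)=x=g(x)$. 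Define the candidate exercise set by $B^* = \{x : (K_1-x)^+ > C(x)\}$; this is the pointwise Bellman rule, so is optimal among stopping rules under $\pi_{lc}$.

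For the dual candidate, I build $\psi^*$ as a convex majorant of $(K_2-y)^+$ obtained by pasting straight-line segments onto the kink of $(K_2-y)^+$. Following Lemma~\ref{lem:phifrompsi}, this then determines $\phi^*=((K_1-\cdot)^+-\psi^*)^+$ and $\theta^*_1=-(\psi^*)'_+$, $\theta^*_2\equiv 0$. The correct tangency points are dictated by the boundary of $B^*$: at each boundary point $x^*$ of $B^*$ one inserts a chord of $\psi^*$ through $(f(x^*),(K_2-f(x^*))^+)$ and $(g(x^*),(K_2-g(x^*))^+)$, so that along this chord $\psi^*$ agrees with the martingale-expectation of $(K_2-Y)^+$ started at $x^*$. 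Depending on whether $K_2$, respectively $K_1$, lies above or below $f(x^*)$ or $g(x^*)$ at the critical transitions, the construction splits into exactly four qualitative regimes, which will be the four types claimed in the statement. In each regime one verifies by direct inspection of the geometry of $(f,g)$ that (i) the inserted chords fit together into a globally convex function, (ii) $\psi^* \geq (K_2-\cdot)^+$ on all of $\R$, and (iii) on $B^*$ one has $\phi^*(x)=(K_1-x)^+-\psi^*(x)$, while on $(B^*)^c$ one has $\phi^*(x)=0$.

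With these properties in hand, integrating against $\mu$ and exploiting the martingale identity $\int\psi^*(y)\nu(dy) = \E[\psi^*(Y)]=\E[\E[\psi^*(Y)\mid X]]$ together with the tangency equality $\E[\psi^*(Y)\mid X=x]=\psi^*(x)+\text{(linear correction)}$ along each chord yields
\[
\breve{\sC}(\psi^*;\mu,\nu) = \int\phi^*(x)\mu(dx)+\int\psi^*(y)\nu(dy) = \E[(K_{\tau_{B^*}}-M_{\tau_{B^*}})^+] = \sA(B^*,M^*,\sS^*),
\]
closing the duality. The main obstacle will be the case analysis: establishing that the chord construction produces a globally convex $\psi^*$ requires careful use of the monotonicity properties of $g$ and the no-crossing property of $f$ from Lemma~\ref{lem:LC}, and the count of exactly four types must be justified by ruling out further configurations on each irreducible component of the potential $D_{\mu,\nu}$. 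This explains the paper's strategy of warming up with simpler configurations (as in Section~\ref{sec:continuous}) before assembling the general statement.
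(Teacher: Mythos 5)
Your overall strategy (exhibit a consistent model, a stopping rule and a convex $\psi$ whose cost matches the model-based payoff, then invoke the chain of inequalities from Section~\ref{ssec:strong=weak}) is exactly the paper's, and your primal side is sound: under Standing Assumption~\ref{sa:noatoms} the left-curtain model with the one-step Bellman rule attains the optimal value, and on the paper's region $\sR$ your dual recipe reproduces the correct hedge, since there the Bellman boundary $x^*$ is precisely the root of $\Lambda(x^*)=0$ and the chord of $(K_2-\cdot)^+$ through $\bigl(f(x^*),K_2-f(x^*)\bigr)$ and $\bigl(g(x^*),0\bigr)$ is the paper's $\psi^*$ of \eqref{eq:psi*}.

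The genuine gap is in the dual construction for the configurations the paper calls $\sW$ and $\sG$ (Sections~\ref{sssec:W} and \ref{sssec:G}). There the kinks of the cheapest $\psi$ are \emph{not} dictated by the boundary of $B^*$: they sit at points $f'<x'$ (and $x''$, $g(x'')$) determined by the self-embedding equations \eqref{eq:massmeanBimodal} and \eqref{eq:massmeanx''}, i.e.\ by intervals that the left-curtain coupling maps onto themselves, and $x'$ lies strictly inside the exercise region, not on its boundary. Concretely, for $(K_1,K_2)\in\sW$ the Bellman boundary is $K_1$ and both $f(K_1)$ and $g(K_1)$ exceed $K_2$, so your recipe degenerates to $\psi=(K_2-\cdot)^+$; its cost is $P_\mu(K_1)+D(K_2)$, which exceeds the true optimum $P_\mu(K_1)+\Theta D(x')+(1-\Theta)D(f')$ of Theorem~\ref{thm:dotted} by the (generically strictly positive) excess of $D$ over its chord on $[f',x']$ at $K_2$, so your claimed identity $\breve{\sC}(\psi^*;\mu,\nu)=\sA(B^*,M^*,\sS^*)$ fails and the duality chain does not close. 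Similarly, for $(K_1,K_2)\in\sG$ the boundary point is $x''$, where $f$ jumps and $\Lambda$ has no zero; a single chord from $f'$ (or $x'$) to $g(x'')$ is not tangent to the time-1 payoff at $x''$ (that tangency is exactly $\Upsilon(f',x'',g(x''))=0$, i.e.\ $K_2=L_d(K_1)$, which fails in the interior of $\sG$), and the cheapest hedge is the two-kink function $\psi^{x',x''}$ of \eqref{eq:psiPrime} with distinct slopes $-\Theta_1$, $-\Theta_2$ on $(f',x')$ and $(x',g(x''))$. Your "exactly four regimes according to whether $K_1,K_2$ lie above or below $f(x^*),g(x^*)$" is therefore asserted rather than derived and does not coincide with the actual classification $\sR,\sB,\sW,\sG$; to repair the argument you must identify the pairs $(f'_i,x'_i)$ solving \eqref{eq:massmeanBimodal} (and the associated $x''$, $g(x'')$), build the region-dependent hedges \eqref{eq:psiPrime0} and \eqref{eq:psiPrime}, and redo the matching computations as in Theorems~\ref{thm:dotted} and \ref{thm:bimodal}.
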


We begin by considering a couple of degenerate cases.

If $K_1 \leq \ell_\mu$ then the American put is always out-of-the-money at time 1, and the American put is equivalent to the European put with strike $K_2$ and maturity $2$. Since puts with strike $K_1$ and maturity 1 are costless, a simple superhedging strategy is to purchase one European put with strike $K_1$ and maturity 1, and one European put with strike $K_2$ and maturity $2$. The cost of this hedge is $P_\nu(K_2)$, this is also the model-based expected payoff of the American put under {\em any} consistent model.

If $K_1 \leq K_2$ then $\E[(K_2-Y)^+|X] \geq (K_2 - X)^+ \geq (K_1-X)^+$
and $\tau=2$ is optimal. Again the American put is equivalent to the European put with strike $K_2$ and maturity $2$. In this case, for a superhedge it is sufficient to purchase one European put with strike $K_2$ and maturity $2$. By Lemma~\ref{lem:phifrompsi} (with $\psi(y) = (K_2-y)^+$ and $\phi=0$) this generates a superhedge with cost $P_\nu(K_2)$. Again, this is the the model-based expected payoff of the American put under {\em any} consistent model.

For the remainder of the paper we make
\begin{sass}
$K_1 > \max \{ \ell_\mu , K_2 \}$.
\end{sass}

\subsection[Dispersion assumption]{American puts under the dispersion assumption}
\label{ssec:dispersion}

\subsubsection{The left-curtain coupling}
The goal in this section is to present the theory in a simple special case, and to illustrate the main features and solution techniques of our approach unencumbered by technical issues or the consideration of exceptional cases.
The following assumption is a small modification of one introduced by Hobson and Klimmek~\cite{HobsonKlimmek:15}, see also Henry-Labord\`ere and Touzi~\cite{HenryLabordereTouzi:16}. See Figure~\ref{fig:densities}.

\begin{ass}[Dispersion Assumption]
$\mu$ and $\nu$ are absolutely continuous with continuous densities $\rho$ and $\eta$, respectively. $\nu$ has support on $(\ell_\nu,r_\nu) \subseteq (-\infty,\infty)$ and $\eta>0$ on $(\ell_\nu,r_\nu)$. $\mu$ has support on $(\ell_\mu,r_\mu) \subseteq (\ell_\nu,r_\nu)$ and $\rho>0$ on $(\ell_\mu,r_\mu)$. In addition: \\
$(\mu-\nu)^+$ is concentrated on an interval $E = (e_-,e_+)$ and $\rho>\eta$ on $E$; \\
$(\nu - \mu)^+$ is concentrated on $(\ell_\nu, r_\nu) \setminus E$ and $\eta>\rho$ on $(\ell_\nu,e_-) \cup (e_+,r_\nu)$.
\label{ass:dispersion}
\end{ass}

If $\mu \leq_{cx} \nu$ are centred normal distributions with different variances or distinct lognormal random variables with common mean then Assumption~\ref{ass:dispersion} is satisfied.

Under the Dispersion Assumption $\{ k:D_{\mu,\nu}(k)>0 \}$ is an interval and $D=D_{\mu,\nu}$ is convex to the left of $e_-$, concave on $(e_-,e_+)$ and again convex above $e_+$. 

\begin{figure}[H]
\centering
\begin{tikzpicture}[
declare function={	
    	phi(\x,\m,\s)=(1/sqrt(2*pi*\s))*exp(-pow(\m-\x,2)/(2*\s));
	s1=0.7;
	s2=2;
	m=0;
	f=-2;
	z=0.15;
	g=0.8;
	e1=sqrt(-(ln(sqrt(s2))-ln(sqrt(s1)))/(pow(2*s2,-1)-pow(2*s1,-1)));
	}]
\begin{axis}[axis lines=middle,
            ytick=\empty,
            xtick=\empty,
             xmin=-4, xmax=4,
             ymin=-0.15, ymax=0.5,
             yticklabels={},
            xticklabels={},
            axis line style={draw=none}]

\addplot[name path=base,black,domain={-5:5}] {0} node[pos=1, below]{};
\addplot[name path=rho,black,domain={-5:5},samples=100] {phi(\x,m,s1)} node at (0.55,0.45) {$\rho$};
\addplot[name path=eta,black,domain={-5:5},samples=100] {phi(\x,m,s2)} node at (1.7,0.17) {$\eta$};

\path [name path=lineA](f,0)--(f,0.5);
\draw [name intersections={of=lineA and eta}, black] (f,0) -- (intersection-1);
\draw[dashed] (f,0) -- (f,-0.06) node[below]  {$f$};

\path [name path=lineB](-e1,0)--(-e1,0.5);
\draw [name intersections={of=lineB and eta}, black, dashed] (-e1,0) -- (intersection-1) node at (-e1+0.1,-0.03) {$e_-$};

\path [name path=lineC](z,0)--(z,0.5);
\draw [name intersections={of=lineC and rho}, black] (z,0) -- (intersection-1);
\draw[dashed] (z,0) -- (z,-0.06) node[below]  {$x$};

\path [name path=lineD](g,0)--(g,0.5);
\draw [name intersections={of=lineD and eta}, black] (g,0) -- (intersection-1);
\draw[dashed] (g,0) -- (g,-0.06) node[below]  {$g$};

\path [name path=lineE](e1,0)--(e1,0.5);
\draw [name intersections={of=lineE and eta}, black,dashed] (e1,0) -- (intersection-1) node at (e1+0.1,-0.03) {$e_+$};

\addplot[pattern=crosshatch, pattern color=red!50] fill between[of=eta and rho, soft clip={domain=f:-e1}];
\addplot[pattern=crosshatch, pattern color=red!50] fill between[of=eta and rho, soft clip={domain=-e1:z}];
\addplot[pattern=crosshatch, pattern color=red!50] fill between[of=eta and base, soft clip={domain=z:g}];
\addplot[pattern=north west lines, pattern color=blue!50] fill between[of=rho and base, soft clip={domain=f:-e1}];
\addplot[pattern=north west lines, pattern color=blue!50] fill between[of=eta and base, soft clip={domain=-e1:z}];

	\draw[-latex', thick] (-0.2, .35) to[out=0,in=90] (0.5, 0.15);
	\draw[-latex', thick] (-0.3, .35) to[out=180,in=90] (-1.7, 0.1);

\end{axis}
\end{tikzpicture}

\caption{Sketch of the densities $\rho$ and $\eta$ and the locations of $f,g$ for given $x>e_-$. Time-1 mass in the interval $(f,x)$ stays in the same place if possible. Mass which cannot stay constant is mapped to $(f,e_-)$ or $(x,g)$ in a way which respects the martingale property.}
\label{fig:densities}
\end{figure}
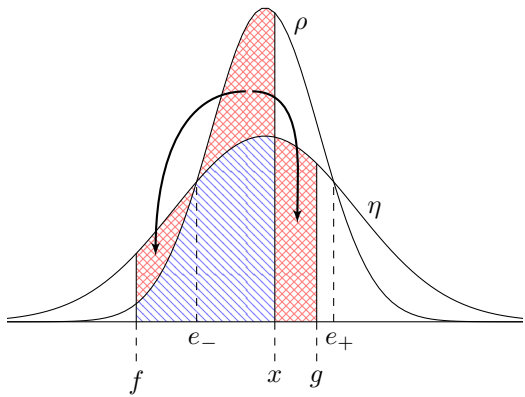



\begin{lem}[Henry-Labord\`{e}re and Touzi~\cite{HenryLabordereTouzi:16}, Section 3.4]\label{lem:LCdispersion}
Suppose Assumption~\ref{ass:dispersion} holds.
For all $x \in (e_-, r_\mu)$, there exist $f,g$ with $f < e_- < x < g$ such that
\eqref{eq:mass} and \eqref{eq:mean} hold. Moreover, if we consider $f$ and $g$ as functions of $x$ on $(e_-,r_\mu)$ then $f$ and $g$ are continuous, $f$ is strictly decreasing and $g$ is strictly increasing, $\lim_{x \downarrow e_-} f(x) = e_- = \lim_{x \downarrow e_-} g(x)$, $\lim_{x \uparrow r_\mu} f(x) = \ell_\nu$ and $\lim_{x \uparrow r_\mu} g(x) = r_\nu$. Finally, if we extend the domain of $f$ and $g$ to $[\ell_\mu,r_\mu]$ by setting $f(x)=x=g(x)$ on $[\ell_\mu, e_-]$ and $f(r_\mu) = \ell_\nu$ and $g(r_\mu)=r_\nu$ then $(f,g) \in \Xi^{[\ell_\mu,r_\mu], [\ell_\nu,r_\nu]}_{Mon}$.
\label{lem:FGconstruction}
\end{lem}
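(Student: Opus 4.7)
The plan is to construct $f$ and $g$ as the solution of an initial value problem obtained by differentiating \eqref{eq:mass} and \eqref{eq:mean} in $x$. Treating $f = f(x)$ and $g = g(x)$ as unknowns and eliminating between the two differentiated equations gives
\[
f'(x) = -\frac{(g-x)\rho(x)}{(g-f)\Delta(f)}, \qquad g'(x) = \frac{(x-f)\rho(x)}{(g-f)\eta(g)},
\]
where $\Delta(z) := \eta(z) - \rho(z)$. Under Assumption~\ref{ass:dispersion}, on the admissible region $\ell_\nu < f < e_- < x < g < r_\nu$ we have $\Delta(f) > 0$, $\eta(g) > 0$ and $g - f > 0$, so the vector field is smooth with $f' < 0$ and $g' > 0$. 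Once local existence is in hand, standard Picard theory delivers continuity, and the signs of $f'$ and $g'$ give the strict monotonicity claimed on $(e_-, r_\mu)$.

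The central technical obstacle is the singular initial condition $f(e_-) = g(e_-) = e_-$, at which the ODE is indeterminate. To resolve it I would Taylor-expand \eqref{eq:mass} and \eqref{eq:mean} about $e_-$. Setting $a = x - e_-$, $b = e_- - f$, $\delta = g - e_-$, writing $c := \rho(e_-) = \eta(e_-)$ (forced by continuity and the sign change of $\rho - \eta$ at $e_-$) and $d := \rho'(e_-) - \eta'(e_-) > 0$, and putting $\epsilon := \delta - a$, the leading-order mass and mean equations reduce to
\[
-c\,\epsilon + \tfrac{d}{2}(a^2 - b^2) = 0 \quad \text{and} \quad -c\,a\,\epsilon + \tfrac{d}{3}(a^3 + b^3) = 0.
\]
Eliminating $\epsilon$ yields the cubic $a^3 - 3ab^2 - 2b^3 = (a-2b)(a+b)^2 = 0$, whose unique positive root is $a = 2b$, giving $\epsilon = \tfrac{3d}{8c}a^2$. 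Hence the leading local behaviour is $f(x) = e_- - \tfrac{1}{2}(x-e_-) + O((x-e_-)^2)$ and $g(x) = x + \tfrac{3d}{8c}(x-e_-)^2 + o((x-e_-)^2)$, with $f'(e_-+) = -\tfrac{1}{2}$ and $g'(e_-+) = 1$. With the ansatz $f(x) = e_- - a/2 + a^2\phi(a)$, $g(x) = x + a^2\gamma(a)$ the ODE rewrites as a regular system for the corrections $(\phi, \gamma)$, and a standard contraction-mapping argument on a short interval $(0, h]$ yields a unique $C^1$ local solution emerging from $(e_-, e_-, e_-)$. This desingularization is the most delicate step.

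With local existence secured, I would extend the solution to the right by continuation. Strict monotonicity prevents finite-$x$ blow-up, so extension can fail only at some $x^* \leq r_\mu$ at which $f \downarrow \ell_\nu$ or $g \uparrow r_\nu$. Suppose $g(x^*-) = r_\nu$ with $x^* < r_\mu$; by \eqref{eq:mass} then $\mu((f(x^*-), x^*)) = \nu((f(x^*-), r_\nu))$, i.e.\ all remaining $\nu$-mass to the right of $f(x^*-)$ has been used up. But then the remaining $\mu$-mass on $(x^*, r_\mu)$ could only be coupled in a martingale manner to $\nu$-mass on $(\ell_\nu, f(x^*-))$, which is impossible since every such pair $(x,y)$ satisfies $y < f(x^*-) < x^* < x$ and so violates the martingale property. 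A symmetric argument rules out $f(x^*-) = \ell_\nu$ with $x^* < r_\mu$. Therefore the solution extends to all of $(e_-, r_\mu)$, and the overall balances $\mu(\R) = \nu(\R)$ and $\int z\,\mu(dz) = \int z\,\nu(dz)$ force $f(r_\mu-) = \ell_\nu$ and $g(r_\mu-) = r_\nu$. Finally, extending $(f, g)$ to $[\ell_\mu, r_\mu]$ by $f(x) = g(x) = x$ on $[\ell_\mu, e_-]$ (with boundary values $f(r_\mu) = \ell_\nu$, $g(r_\mu) = r_\nu$) gives a pair in $\Xi_{Mon}^{[\ell_\mu, r_\mu], [\ell_\nu, r_\nu]}$: $g$ is increasing and $f(x) = x$ on $\{g(x) = x\}$ by construction, while the disjointness $f(x') \notin (f(x), g(x))$ for $x' > x$ follows from strict monotonicity of $f$ on $(e_-, r_\mu]$ together with the observation that $(f(x), g(x))$ is empty whenever $x \leq e_-$.
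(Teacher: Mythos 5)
Your ODE strategy follows the route the paper merely gestures at in Remark~\ref{rem:ode} (the paper itself does not prove Lemma~\ref{lem:FGconstruction}; it imports it from Henry-Labord\`ere and Touzi, Section 3.4), so the proposal has to stand on its own, and as written it does not: it uses regularity that Assumption~\ref{ass:dispersion} does not supply. The assumption gives only \emph{continuity} of $\rho$ and $\eta$. Your desingularisation at $e_-$ needs (i) $c=\rho(e_-)=\eta(e_-)>0$, which can fail (if $\ell_\mu=e_-$, continuity of $\rho$ forces $\rho(e_-)=0<\eta(e_-)$); (ii) differentiability of $\rho-\eta$ at $e_-$ with $d=\rho'(e_-)-\eta'(e_-)>0$, whereas nothing prevents $\rho-\eta$ from vanishing to higher order, or from being non-differentiable there, in which case the asserted local behaviour $f'(e_-+)=-\tfrac12$, $g'(e_-+)=1$ is false and the ansatz on which your contraction argument is built collapses; and (iii) the Picard/contraction steps away from $e_-$ require the vector field, hence $\rho$ and $\eta$, to be locally Lipschitz, again not assumed. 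So the argument establishes the lemma only for a strictly smaller class of pairs $(\mu,\nu)$ than the one in the statement.

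The lemma can be proved with no differentiability at all, which is essentially how the cited construction works: fix $x\in(e_-,r_\mu)$; for each $f\in(\ell_\nu,e_-]$ the mass equation \eqref{eq:mass} determines a unique $g=G(f,x)\geq x$ because $\eta>0$ on $(\ell_\nu,r_\nu)$, and $G$ is continuous and monotone in $f$; the map $f\mapsto \int_f^x z\,\mu(dz)-\int_f^{G(f,x)}z\,\nu(dz)$ is then continuous, strictly monotone (using the sign conditions on $\rho-\eta$) and changes sign, so \eqref{eq:mass} and \eqref{eq:mean} have a unique solution $(f(x),g(x))$; continuity, strict monotonicity in $x$, and the limits at $e_-$ and $r_\mu$ follow from this uniqueness together with mass/mean balance, with no expansion at the singular point needed. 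Two smaller points in your continuation step: the exclusion of $g(x^*-)=r_\nu$ with $x^*<r_\mu$ should be argued directly from mass-and-mean balance (the leftover parts of $\mu$ and $\nu$ have equal mass and equal mean, yet the leftover $\mu$ charges $(x^*,r_\mu)$ while the leftover $\nu$ is supported below $f(x^*-)$, so the means cannot agree), rather than by presupposing that a martingale coupling of the leftovers exists; and $\ell_\nu$, $r_\nu$, $r_\mu$ may be infinite (e.g.\ the Gaussian example satisfying Assumption~\ref{ass:dispersion}), so the boundary statements must be read as limits, which your blow-up discussion currently does not accommodate.
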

\begin{figure}[H]
\centering
\begin{tikzpicture}[scale=1]
\begin{axis}[axis lines=middle,
            ytick=\empty,
            xtick=\empty,
             xmin=0,xmax=20,
             ymin=-1,ymax=21,
             yticklabels={},
            xticklabels={},
            axis line style={draw=none}]

\addplot[name path=diag,black,domain={0:20}] {x} node[pos=1, below]{};

\draw[name path=tu1,thick] (4, 4) to[out=85,in=225] (15.5,17.5);
\draw[name path=tu2, thick] (15.5, 17.5) to[out=45,in=225] (20,20.5 );
\draw[name path=td, thick] (4, 4) to[out=335,in=180] (20, 0);

\addplot[pattern=crosshatch, pattern color=red!50]fill between[of=diag and td, smooth clip={domain=2:10}];
\path [name path=hor](0,0) to (20,0);
\addplot[pattern=north west lines, pattern color=black!50]fill between[of=td and hor, smooth clip={domain=0:1}];

\node (e)[scale=0.7] at (6,4) {$(e_-,e_-)$};
\node (f)[scale=1] at (16,1) {$f$};
\node (g)[scale=1] at (16,19) {$g$};

\end{axis}
\end{tikzpicture}

\caption{Sketch of functions $f$ and $g$ under the Dispersion Assumption, with the regions $K_2 < f(K_1)$ and $K_2 > f(K_1)$ shaded. This is a simple special case of Figure~\ref{fig:generalFG}.}
\label{fig:fgdispersion}
\end{figure}
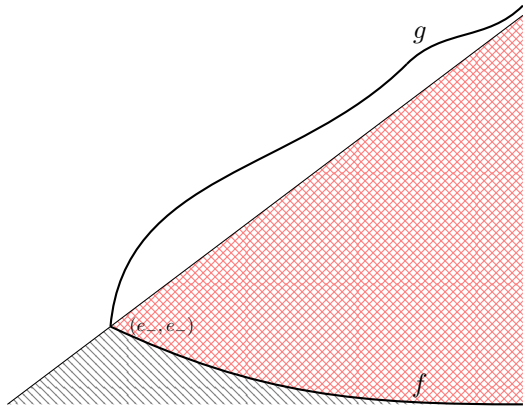

\begin{rem}
\label{rem:pilcsimple}
As discussed at the end of Remark~\ref{rem:reverseconstruction}, for the purposes of the analysis of this section it is not the fact that the measures $\mu$ and $\nu$ satisfy the Dispersion Assumption which is important, but rather that $\pi_{lc}$ is so simple, and $\{k : g(k)>k \}$ is a single interval on which $f$ is a monotone decreasing function.

Starting with monotonic $f$ and $g$, letting $\mu$ be continuous and defining $\nu$ by $\nu(dy) = \int_x \mu(dx) \chi_{f(x),x,g(x)}(dy)$ and $\pi_{lc}$ by
\begin{equation}
\label{eq:pilcdef}
 \pi_{lc}(dx,dy) = \mu(dx) \delta_x(dy) I_{\{ x \leq e_- \}} + \mu(dx) \chi_{f(x),x,g(x)} (dy) I_{\{ x > e_- \}},
\end{equation}
the pair $(\mu,\nu)$ may or may not satisfy Assumption~\ref{ass:dispersion} but nonetheless, a candidate optimal model, stopping time and hedge can be constructed exactly as described in this section, and can be proved to be optimal by the methods of this section.

Since our analysis depends on the pair $(\mu,\nu)$ only through the functions $(f,g)$ we may take as our starting point any $(f,g) \in \Xi_{Mon}$.
\end{rem}

\begin{rem}
\label{rem:ode}
In a related problem, Hobson and Klimmek~\cite{HobsonKlimmek:15} show how under the Dispersion Assumption, upper and lower functions can be characterised as solutions of a pair of coupled differential equations.
In our case $(f,g)$ solve a pair of coupled differential equations on $[e_-,r_\mu)$ obtained from differentiating \eqref{eq:mass} and \eqref{eq:mean}:
\begin{align*}
\frac{df}{dx} \phantom{-}&= - \frac{g-x}{g-f} \frac{\rho(x)}{\eta(f)-\rho(f)}, \\
\frac{dg}{dx} \phantom{-}&= \phantom{-}  \frac{x-f}{g-f} \frac{\rho(x)}{\eta(g)},
\end{align*}
with the initial condition $f(e_-)= e_- =g(e_-)$. See also Henry-Labord\`{e}re and Touzi~\cite[Equations (3.10) and (3.9)]{HenryLabordereTouzi:16}.
\end{rem}

The principle behind the left-curtain martingale coupling in Beiglb\"{o}ck-Juillet~\cite{BeiglbockJuillet:16} is that they determine where to map mass at $x$ at time 1 sequentially working from left to right. In our current setting there is an interval $(\ell_\mu,e_-]$ on which mass can remain unmoved between times 1 and 2.
To the right of $e_-$ we can define $f,g$ in such a way that mass is moved as little as possible. This leads to the ODEs in Remark~\ref{rem:ode}.

\subsubsection{The American put}
\label{ssec:amput}

Suppose $K_1 \in (e_-,r_\mu]$ and suppose $f$ and $g$ are constructed as in Lemma~\ref{lem:FGconstruction}.
Define $\Lambda:[g^{-1}(K_1),K_1] \mapsto \R$ by
\begin{equation} \Lambda(x) = \frac{(K_2 - f(x))-(K_1-x)}{x-f(x)} - \frac{(K_1 -x)}{g(x)-x} = \frac{(g(x)-K_1)}{g(x)-x} - \frac{(K_1 -K_2)}{x-f(x)}.
\label{eq:Lambdadef}
\end{equation}
Pictorially $\Lambda$ is the difference in slope of the two dashed lines in Figure~\ref{fig:xfgLambda}.

\begin{figure}[H]
\centering
\begin{tikzpicture}[scale=0.5,
    declare function={	
    	k1=10;
	k2=5;
	f=2;
	f1=0.5;
	g=13;
	g1=14.5;
	z=8;
	z1=9;
	a(\x)=(k1-\x)*(\x<k1);
	b(\x)=(k2-\x)*(\x<k2);
		}]

	\draw[name path=a, black, thick] plot[domain=0:15, samples=100] (\x,{a(\x)});
	\draw[name path=b, black, thick] plot[domain=0:15, samples=100] (\x,{b(\x)}) ;
	
	\draw[name path=h1, blue, thick, densely dashed] plot[domain=f:z, samples=100] (\x,{b(f)-((b(f)-a(z))/(z-f))*(\x-f)}) ;
	\draw[name path=h2, blue, thick, densely dashed] plot[domain=z:g, samples=100] (\x,{a(z)-(a(z)/(g-z))*(\x-z)});

\node (strike1) at (k1,-0.5) {$K_1$};
\node (strike2) at (k2,-0.5) {$K_2$};
\node[blue] (f) at (f,-0.5) {$f$};
\node[blue] (z) at (z,-0.5) {$x$};
\node[blue] (g) at (g,-0.5) {$g$};

\node (slope1)[blue] at (10,5) {\text{slope } $\frac{(K_2-f)-(K_1-x)}{x-f}$};
\node (slope2)[blue] at (15,3) {\text{slope } $\frac{(K_1-x)}{g-x}$};

	\draw[-latex', blue] (12.5, 3) to[out=180,in=45] (10, 1.5);
	\draw[-latex', blue] (6.5, 5) to[out=180,in=75] (2.5,3);


\draw [black, dashed] (f,0) -- (f,{b(f)});

\draw [black,  dashed] (z,0) -- (z,{a(z)});

\draw [black,  dashed] (g,0) -- (g,{a(g)});

\end{tikzpicture}
\caption{Sketch of put payoffs with points $x$, $f$ and $g$ marked. $\Lambda(x)$ is the difference in slope of the two dashed lines. }
\label{fig:xfgLambda}
\end{figure}
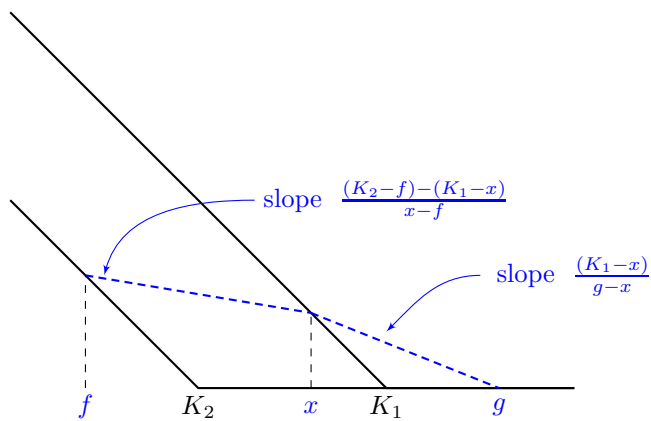

\begin{lem}
Suppose $K_1 \in (e_-,r_\mu]$ and $f(K_1) < K_2$.
Then there is a unique $x^*=x^*(\mu,\nu;K_1,K_2) \in (g^{-1}(K_1),K_1)$ such that $\Lambda(x^*)=0$. Moreover $f(x^*)<K_2$ and
\begin{equation}
\label{eq:propertiesx*}
\frac{(K_2 - f(x^*))}{g(x^*)-f(x^*)} = \frac{(K_1 -x^*)}{g(x^*)-x^*} = \frac{((x^* - f(x^*) -(K_1 -K_2))}{x^* - f(x^*)} = 1 - \frac{(K_1 -K_2))}{x^* - f(x^*)}.
\end{equation}
\label{lem:uniquex*}
\end{lem}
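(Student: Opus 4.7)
The plan is to apply the intermediate value theorem after showing that $\Lambda$ is continuous, strictly increasing, negative at $g^{-1}(K_1)$, and positive at $K_1$, and then to derive \eqref{eq:propertiesx*} and the sign $f(x^*)<K_2$ by direct algebraic manipulation of $\Lambda(x^*)=0$.

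Continuity of $\Lambda$ is immediate from continuity of $f,g$ (Lemma~\ref{lem:FGconstruction}) together with the strict inequalities $f(x)<x<g(x)$ on $(g^{-1}(K_1),K_1]\subset(e_-,r_\mu]$, which keep every denominator bounded away from zero. Using the second representation of $\Lambda$ in \eqref{eq:Lambdadef}: at $x=g^{-1}(K_1)$ the first term vanishes while the second is strictly positive, so $\Lambda(g^{-1}(K_1))<0$; at $x=K_1$ the first term equals $1$ while the second equals $\frac{K_1-K_2}{K_1-f(K_1)}$, which is strictly less than $1$ precisely because of the standing hypothesis $f(K_1)<K_2$, so $\Lambda(K_1)>0$.

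For strict monotonicity, recall from Lemma~\ref{lem:FGconstruction} that $f$ is strictly decreasing and $g$ is strictly increasing, so $x-f(x)$ is strictly increasing and $-\frac{K_1-K_2}{x-f(x)}$ is strictly increasing in $x$. Since $f,g$ are $C^1$ via the ODE system of Remark~\ref{rem:ode}, direct differentiation yields
\begin{equation*}
\frac{d}{dx}\left(\frac{g(x)-K_1}{g(x)-x}\right)=\frac{g'(x)(K_1-x)+(g(x)-K_1)}{(g(x)-x)^2},
\end{equation*}
which is strictly positive on $(g^{-1}(K_1),K_1)$ because $g'>0$ there, $K_1-x\geq 0$, and $g(x)-K_1>0$. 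Hence $\Lambda$ is strictly increasing on $[g^{-1}(K_1),K_1]$, and the IVT delivers a unique root $x^*$ in the open interval.

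Finally, set $\alpha:=\frac{K_1-K_2}{x^*-f(x^*)}$, so that the two forms of $\Lambda(x^*)=0$ also give $\alpha=\frac{g(x^*)-K_1}{g(x^*)-x^*}$, and note $\alpha\in(0,1)$ since $x^*\in(g^{-1}(K_1),K_1)$. Then $\frac{K_1-x^*}{g(x^*)-x^*}=1-\alpha$ and $\frac{(x^*-f(x^*))-(K_1-K_2)}{x^*-f(x^*)}=1-\alpha$ are immediate; substituting the identities $K_2=K_1-\alpha(x^*-f(x^*))$ and $K_1=(1-\alpha)g(x^*)+\alpha x^*$ into $K_2-f(x^*)$ produces $K_2-f(x^*)=(1-\alpha)(g(x^*)-f(x^*))$, which supplies the remaining equality in \eqref{eq:propertiesx*}. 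Because $\alpha\in(0,1)$ and $g(x^*)>f(x^*)$, this last identity also forces $K_2-f(x^*)>0$, i.e. $f(x^*)<K_2$. The only genuinely non-routine step is the strict monotonicity of the first term of $\Lambda$, which is handled by the differential computation above; the rest is continuity, sign-checking, and algebra.
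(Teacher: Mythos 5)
Your proof is correct and follows essentially the same route as the paper: continuity and strict monotonicity of $\Lambda$, sign checks at $g^{-1}(K_1)$ and at $K_1$ (where the hypothesis $f(K_1)<K_2$ enters), the intermediate value theorem, and then algebra at the root. The only differences are cosmetic: the paper asserts strict monotonicity directly from the monotonicity of $f$ and $g$ (a chord comparison on $\frac{g(x)-K_1}{g(x)-x}$ gives this without the $C^1$ regularity you invoke via Remark~\ref{rem:ode}), and it leaves the verification of \eqref{eq:propertiesx*} and of $f(x^*)<K_2$ implicit, which you spell out.
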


\begin{proof}
It is clear, see Figure~\ref{fig:xfgLambda}, that since $f$ and $g$ are continuous monotonic functions we have that $\Lambda$ is continuous and strictly increasing. Moreover, $\Lambda(g^{-1}(K_1))= -\frac{(K_1-K_2)}{g^{-1}(K_1) - (f \circ g^{-1})(K_1)} < 0$ and $\Lambda(K_1)=\frac{K_2-f(K_1)}{K_1 - f(K_1)}>0$ by hypothesis. Hence there is a unique root to $\Lambda=0$. At this root the equalities in \eqref{eq:propertiesx*} hold.
\end{proof}

Suppose $K_1 > e_-$ and $f(K_1) < K_2$ and that $x^*=x^*(\mu,\nu;K_1,K_2) \in (e_-,K_1)$ is such that $\Lambda(x^*)=0$.
We define a martingale coupling as follows (the superscript $*$ denotes the fact that the quantities we define are candidates for optimality):

Consider $\Omega^* = \R \times \R = \{\omega=(\omega_1,\omega_2) \}$ equipped with the Borel-sigma algebra $\sF^* = \sB(\Omega^*)$. Let $(X,Y)$ be the co-ordinate process so that $X(\omega)=\omega_1$ and $Y(\omega)=\omega_2$.
Let $\sF^*_0$ be trivial and suppose $\sF^*_1 = \sigma(X)$ and $\sF^*_2= \sigma(X,Y)$. Finally, for $\hat{\pi} \in \Pi_M(\mu,\nu)$ let $\Prob_{\hat{\pi}}$ be the martingale coupling $\Prob_{\hat{\pi}}(X\in dx, Y \in dy) = \hat{\pi}(dx,dy)$ and set
$\sS^* = (\Omega^*, \sF^*, \F^* := (\sF^*_0, \sF^*_1, \sF^*_2), \Prob_{\hat{\pi}})$. Then $(\sS^*, M_{\hat{\pi}} = (\bar{\mu},X,Y))$ is a $(\mu,\nu)$-consistent model.

It is easy to find a martingale coupling of $\mu$ and $\nu$ such that $(f(x^*),x^*)$ is mapped to $(f(x^*),g(x^*))$.
For example, we may take $\hat{\pi} = \pi_{lc} = \pi_{lc}(\mu,\nu)$, the left-curtain martingale coupling of Beiglb\"{ock} and Juillet~\cite{BeiglbockJuillet:16}.  More generally
let $\hat{\pi}_{x^*} \in \hat{\Pi}_M(\mu,\nu)$ be any martingale coupling such that $\hat{\pi}_{x^*}$ maps $(f(x^*),x^*)$ to $(f(x^*),g(x^*))$ and $(f(x^*),x^*)^C$ to $(f(x^*),g(x^*))^C$. The martingale coupling represented in Figure~\ref{fig:fgdispersion} has this property.
Let $M^*=(M^*_0=\bar{\mu},M^*_1=X,M^*_2=Y)$ be the stochastic process such that $\Prob(X \in dx, Y \in dy) = \hat{\pi}_{x^*}(dx,dy)$.
Then $M^* \in \sM(\sS^*, \mu, \nu)$.
Let $\tau^*$ be the stopping time such that $\tau^*=1$ on $(-\infty,x^*)$ and $\tau^*=2$ otherwise.
Our claim in Theorem~\ref{thm:dispersion} below is that $(\sS^*,M^*)$ and the stopping time $\tau^*$ are such that the model-based price of the American put under this stopping time is the highest possible, over all consistent models.

\begin{figure}[H]
\centering
\resizebox{4.5cm}{9cm}{
\begin{tikzpicture}[scale=0.5,
    declare function={	
    	k1=7;
	k2=4;
	a(\x)=(k1-\x)*(\x<k1);
	b(\x)=(k2-\x)*(\x<k2);
		}]

	\draw[name path=diag] (0,10) -- (10,20) node[right] {};
	
	\draw[name path=a, black, thick] plot[domain=0:10, samples=100] (\x,{a(\x)}) node[right] at (3,0.15) {};
	\draw[name path=b, black, thick] plot[domain=0:10, samples=100] (\x,{b(\x)}) node[right] at (3,0.15) {};
	
	\draw[name path=tu1, thick] (2, 12) to[out=85,in=225] (5.5,17.5);
	\draw[name path=tu2, thick] (5.5, 17.5) to[out=45,in=225] (10,20.5 );
	\draw[name path=td, thick] (2, 12) to[out=335,in=180] (10, 10);

\path [name path=lineA](8,0)--(8,20);
\draw [name intersections={of=lineA and diag}, red, loosely dashed] (8,0) -- (intersection-1);

\gettikzxy{(intersection-1)}{\ax}{\ay};
\path [name path=lineB](\ax,\ay)--(0,\ay);
\draw [name intersections={of=lineB and tu2}, red, loosely dashed] (\ax,\ay) -- (intersection-1);

\gettikzxy{(intersection-1)}{\ax}{\ay};
\path [name path=lineC](\ax,\ay)--(\ax,0);
\draw [name intersections={of=lineC and td}, red] (\ax,\ay) -- (intersection-1);

\gettikzxy{(intersection-1)}{\ax}{\ay};
\draw [red, loosely dashed] (\ax,\ay) -- (\ax,-1) node[scale=1.2, below,red] {$x^*$};
\path [name path=lineD](\ax,\ay)--(0,\ay);
\draw [name intersections={of=lineD and diag}, red, loosely dashed] (\ax,\ay) -- (intersection-1);

\gettikzxy{(intersection-1)}{\ax}{\ay};
\path [name path=lineE](\ax,\ay)--(\ax,0);
\draw [name intersections={of=lineE and b}, red, loosely dashed] (\ax,\ay) -- (intersection-1);
\draw[red, -] (intersection-1) -- (8,0) node[right] {};

\path [name path=lineA](5,0)--(5,20);
\draw [name intersections={of=lineA and diag}, blue, densely dotted] (5,0) -- (intersection-1);

\gettikzxy{(intersection-1)}{\ax}{\ay};
\path [name path=lineB](\ax,\ay)--(0,\ay);
\draw [name intersections={of=lineB and tu1}, blue, densely dotted] (\ax,\ay) -- (intersection-1);

\gettikzxy{(intersection-1)}{\ax}{\ay};
\path [name path=lineC](\ax,\ay)--(\ax,0);
\draw [name intersections={of=lineC and td}, blue] (\ax,\ay) -- (intersection-1);

\gettikzxy{(intersection-1)}{\ax}{\ay};
\node (temp) at (\ax,\ay) {};

\draw [name intersections={of=lineC and a}, blue,densely dotted] (temp) -- (intersection-1);

\gettikzxy{(intersection-1)}{\ax}{\ay};
\draw  [blue,densely dotted]   (intersection-1)--(\ax,-1) node[scale=1.2, below, blue] {$x$};
\node (temp1) at (\ax,\ay) {};

\gettikzxy{(temp)}{\ax}{\ay};
\path [name path=lineD](temp)--(0,\ay);
\draw [name intersections={of=lineD and diag}, blue,densely dotted] (\ax,\ay) -- (intersection-1);

\gettikzxy{(intersection-1)}{\ax}{\ay};
\path [name path=lineE](\ax,\ay)--(\ax,0);
\draw [name intersections={of=lineE and b}, blue,densely dotted] (\ax,\ay) -- (intersection-1);

\gettikzxy{(temp1)}{\ax}{\ay};
\draw[blue, -] (intersection-1) -- (\ax,\ay) node[right] {};
\draw[blue, -] (\ax,\ay) -- (5,0) node[right] {};

\node [scale=1.2] (strike1) at (k1,-0.5) {$K_1$};
\node [scale=1.2] (strike2) at (k2,-0.5) {$K_2$};

\end{tikzpicture}
}
\caption{A combination of Figures~\ref{fig:fgdispersion} and \ref{fig:xfgLambda}, showing how jointly they define the best model and best hedge. By adjusting $x$ we can find $x^*$ such that $\Lambda(x^*)=0$. Together the quantities $(f(x^*),x^*,g(x^*))$ define the optimal model, stopping time and hedge. }
\label{fig:Lambda}
\end{figure}
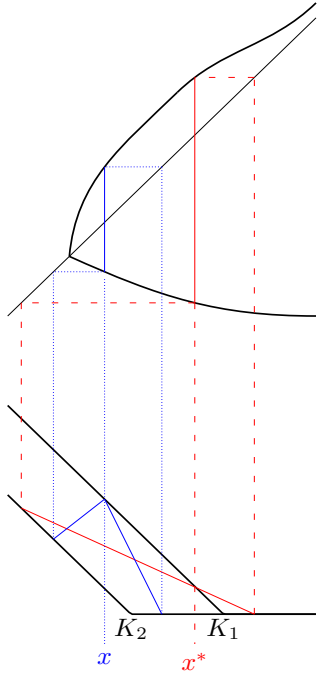

Continue to suppose $K_1 > e_-$ and $f(K_1) < K_2$.
Now we define a superhedge of the American put.
Let $\psi^*$ be the function
\begin{equation}
\psi^*(z) = \left\{ \begin{array}{ll} (K_2 - z)    &   z \leq f(x^*); \\
                                    \frac{(g(x^*) - z)(K_2 - f(x^*))}{g(x^*) - f(x^*)} \hspace{5mm} & f(x^*) < z \leq g(x^*); \\
                                    0 & z > g(x^*).
                                    \end{array} \right.
\label{eq:psi*}
\end{equation}
Note that by construction and by \eqref{eq:propertiesx*}, $\frac{K_2-f(x^*)}{g(x^*)-f(x^*)} = \frac{K_1-x^*}{g(x^*)-x^*}$ and then $\psi^*(x^*) = K_1-x^*$. Moreover, $\psi^*$ is convex and satisfies $\psi^*(z) \geq (K_2-z)^+$. Hence by Lemma~\ref{lem:phifrompsi}, $\psi^*$ can be used to construct a superhedge $(\psi^*, \phi^*,\theta_{1,2}^*)$.

In the following theorem we will assume the American put is not always strictly in-the-money at time 1 (or equivalently, $K_1 \leq r_\mu$). Discussion of the case $K_1>r_\mu$ is postponed until Section~\ref{sssec:K1>rmu} below.

\begin{thm} Suppose Assumption~\ref{ass:dispersion} holds.
\begin{enumerate}
\item
Suppose $K_1 \in (e_-, r_\mu]$ and $f(K_1) < K_2$. The model $(\sS^*,M^*)$  described in the previous paragraphs is a consistent model for which the price of the American option is the highest. The stopping time $\tau^*$ is the optimal exercise time. The function $\psi^*$ defined in \eqref{eq:psi*} defines the cheapest superhedge. Moreover, the highest model-based price is equal to the cost of the cheapest superhedge.
\item Suppose
either that Case A: $K_1 \leq e_-$
or that Case B: $K_1 \in (e_-, r_\mu]$ and $f(K_1) \geq K_2$.
Then there is a consistent model for which $(Y<K_2) = (X<K_2) \cup (X>K_1, Y< K_2)$ and any model with this property with the stopping rule $\tau=1$ if $X<K_1$ and $\tau=2$ otherwise attains the highest consistent model price. The cheapest superhedge is generated from ${\psi}(x) = (K_2-x)^+$ and the highest model-based price is equal to the cost of the cheapest hedge.
\end{enumerate}
\label{thm:dispersion}
\end{thm}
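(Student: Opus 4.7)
The plan is to invoke the weak-strong duality scheme of Section~\ref{ssec:strong=weak}: if I can exhibit a consistent model $(\sS^*,M^*)$, a stopping rule $\tau^*$, and a candidate $\psi^*\in\breve{\sH}$ for which $\E[(K_{\tau^*}-M^*_{\tau^*})^+]=\breve{\sC}(\psi^*;\mu,\nu)$, then the inequality chain of that section collapses, settling both optimality and the absence of a duality gap. Because $\tau^*$ is $\sigma(X)$-measurable the term $\theta^*_{\tau^*}(X)(Y-X)$ vanishes in expectation by the martingale property, so in each case the work reduces to checking pathwise tightness of the hedging inequalities~\eqref{eq:a}--\eqref{eq:b} on the event where the corresponding leg of $\tau^*$ applies.

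For Part~1, I would first check that $\psi^*$ in~\eqref{eq:psi*} lies in $\breve{\sH}$: its three successive slopes $-1$, $-\alpha=-(K_2-f(x^*))/(g(x^*)-f(x^*))$, $0$ are non-decreasing because~\eqref{eq:propertiesx*} together with $f(x^*)<K_2<K_1<g(x^*)$ gives $\alpha\in(0,1)$, and a piecewise comparison on the three segments yields $\psi^*\geq(K_2-\cdot)^+$. Lemma~\ref{lem:phifrompsi} then produces the hedge $(\phi^*,\psi^*,\{\theta^*_i\})$ with $\theta^*_2\equiv 0$. Taking $\hat{\pi}_{x^*}=\pi_{lc}$ and $\tau^*=1$ on $\{X<x^*\}$, $\tau^*=2$ otherwise, the crux is pathwise tightness. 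Lemma~\ref{lem:LCdispersion} supplies the key support information: under $\pi_{lc}$, $Y=X$ on $\{X\leq e_-\}$ while $Y\in\{f(X),g(X)\}$ on $\{X>e_-\}$ with $f$ strictly decreasing and $g$ strictly increasing. A case split gives: for $X\in(f(x^*),x^*)$, $Y\in[f(x^*),g(x^*)]$ lies on the middle linear piece of $\psi^*$, which is precisely the tangent to $\psi^*$ at $X$, so~\eqref{eq:a} is tight; for $X\leq f(x^*)<e_-$, $Y=X$ sits on the tangent of slope $-1$ and~\eqref{eq:a} is again tight; for $X\geq x^*$, $f(X)\leq f(x^*)$ and $g(X)\geq g(x^*)$, so $Y\notin(f(x^*),g(x^*))$, yielding $\psi^*(Y)=(K_2-Y)^+$, and combined with $\phi^*(X)=0$ this is tightness of~\eqref{eq:b}.

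For Part~2, take $\psi(y)=(K_2-y)^+\in\breve{\sH}$; Lemma~\ref{lem:phifrompsi} gives $\phi(x)=((K_1-x)^+-(K_2-x)^+)^+$, $\theta_2\equiv 0$, $\theta_1(x)=I_{\{x<K_2\}}$, and the hedge cost computes directly to $P_\mu(K_1)-P_\mu(K_2)+P_\nu(K_2)$. For existence of a consistent model with the stated support property, $\pi_{lc}$ again works: Case~A ($K_1\leq e_-$) has $Y=X$ on $\{X\leq K_1\}$ directly; in Case~B, the constraint $f(K_1)\geq K_2$ combined with the fact that $f$ decreases from $f(e_-)=e_-$ forces $K_2\leq e_-$, so $Y=X$ on $\{X\leq K_2\}$ while for $X\in[K_2,K_1]$, $f(X)\geq f(K_1)\geq K_2$ and $g(X)\geq X\geq K_2$ give $Y\geq K_2$. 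For any such model the implication $X<K_2\Rightarrow Y<K_2$ combined with the martingale identity $\E[YI_{\{X<K_2\}}]=\E[XI_{\{X<K_2\}}]$ yields $\E[(K_2-Y)^+I_{\{X<K_1\}}]=P_\mu(K_2)$, so the model price equals the hedge cost $P_\mu(K_1)-P_\mu(K_2)+P_\nu(K_2)$.

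The main obstacle is the pointwise verification of the hedging inequalities with equality on each leg of $\tau^*$: in Part~1 this rests on the monotonicity and range information on $(f,g)$ from Lemma~\ref{lem:LCdispersion}, while in Part~2 it rests on the tight interplay between $K_2$, $e_-$ and $f(K_1)$ that constrains where $X$ and $Y$ can sit relative to $K_2$ under $\pi_{lc}$; once these are established, the martingale property and Section~\ref{ssec:strong=weak} deliver strong duality essentially for free.
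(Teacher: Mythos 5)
Your proposal is correct, and it follows the paper's overarching strategy (exhibit a consistent model, a stopping rule and a $\psi\in\breve{\sH}$ whose values coincide, then invoke the weak-duality chain of Section~\ref{ssec:strong=weak}), but the mechanism you use to verify the coincidence is genuinely different. The paper computes both sides in closed form: it writes the model-based expected payoff as $P_\mu(x^*)+(K_1-x^*)P'_\mu(x^*)+D(f^*)+(K_2-f^*)D'(f^*)$, writes the hedging cost by integrating $\psi^*=\Theta(g^*-\cdot)^++(1-\Theta)(f^*-\cdot)^+$ and $\phi^*$ against $\nu$ and $\mu$, and then cancels the difference using the mass and mean identities \eqref{eq:massP}--\eqref{eq:meanP} together with $\Theta=\frac{K_2-f^*}{g^*-f^*}=\frac{K_1-x^*}{g^*-x^*}$. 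You instead argue by complementary slackness along the support of $\pi_{lc}$: inequality \eqref{eq:a} is tight on $\{X<x^*\}$ because $Y$ stays on the tangent line of $\psi^*$ at $X$ (the middle chord for $X\in(f^*,x^*)$, the slope $-1$ piece with $Y=X$ for $X\le f^*$), inequality \eqref{eq:b} is tight on $\{X\ge x^*\}$ because $Y\notin(f^*,g^*)$ and $\phi^*(X)=0$, and the dynamic term integrates to zero since $\theta^*_{\tau^*}(X)$ is $\sigma(X)$-measurable; this yields $MBEP=HC$ with no put-price algebra at all, at the cost of relying on the monotonicity/support information of Lemma~\ref{lem:LCdispersion} rather than on \eqref{eq:mass}--\eqref{eq:mean} directly. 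Your Part~2 argument is likewise a slight sharpening: by deriving $\E[(K_2-Y)^+I_{\{X<K_1\}}]=P_\mu(K_2)$ from the set identity plus the martingale identity on $\{X<K_2\}$, you prove the ``any model with this property'' clause directly, whereas the paper computes with the specific coupling in which mass below $K_2$ (or $K_1$) is left in place. Two small assertions in your Part~1 are stated without proof and should be recorded: tightness of \eqref{eq:a} also requires $(K_1-X)^+\ge\psi^*(X)$ on $\{X<x^*\}$ (so that $\phi^*=(K_1-\cdot)-\psi^*$ there), and your claim $\phi^*(X)=0$ on $\{X\ge x^*\}$ requires $(K_1-X)^+\le\psi^*(X)$ there; both follow in one line from $\psi^*(x^*)=K_1-x^*$ (a consequence of \eqref{eq:propertiesx*}) together with the fact that the slopes of $\psi^*$ lie in $[-1,0]$ while $(K_1-\cdot)$ has slope $-1$. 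With those lines added, your proof is complete and is arguably more transparent about \emph{why} the hedge is tight, while the paper's computation has the merit of depending only on the transfer equations and so not on the fine support structure of the chosen coupling.
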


\begin{rem}
In Part 2 of the Theorem~\ref{thm:dispersion}, the left-curtain coupling generates a model which, when associated with the stopping rule of the theorem, attains the highest consistent model price.
\end{rem}

\begin{proof}

1. Suppose $K_1 > e_-$ and $f(K_1) < K_2$.
Then by Lemma~\ref{lem:uniquex*} there is a unique $x^* \in (g^{-1}(K_1),K_1)$ such that $\Lambda(x^*)=0$.  For this $x^*$ we can find $f^*=f(x^*)$ and $g^*=g(x^*)$ with $f^*<K_2$ and $K_1 < g^*$ such that $\frac{K_2 -f^*}{g^*-f^*}= \frac{K_1 - x^*}{g^* - x^*}$. For typographical reasons we abbreviate this $(x^*,f^*,g^*)$ to $(x,f,g)$ for the rest of this proof.

Since $\nu$ is continuous we have that $f,x,g$ solve \eqref{eq:mass} and \eqref{eq:mean}. The elements $f,x,g$ can be used to define a model using the construction after Lemma~\ref{lem:uniquex*} above. For this model we can calculate the expected payoff of the American put. At the same time we can use $(f,x,g)$ to define a superhedge. The remaining task is to show that the cost of the superhedge equals that of the model-based expected payoff. Then by the discussion in Section~\ref{ssec:strong=weak} we have found an optimal model and a cheapest superhedge.

The expected payoff of the American put (for this model and stopping rule) is
\begin{eqnarray*}
\lefteqn{ \int_{-\infty}^{x} (K_1-w) \mu(dw) + \int_{-\infty}^f (K_2 - w) (\nu - \mu)(dw) } \\
& = & P_\mu(x) + (K_1 - x) P'_{\mu}(x) + D(f) + (K_2 - f) D'(f).
\end{eqnarray*}

Now we consider the hedging cost. Set $\Theta = \frac{K_2 - f}{g-f} \in (0,1)$. Note that, since $x$ is such that $\Lambda(x)=0$ we have $\Theta = \frac{K_1-x}{g-x}$.
Recall the definition of $\psi^*$ in \eqref{eq:psi*}. Then
\[ \psi^*(y) = \Theta (g-y)^+ + (1- \Theta)(f-y)^+ . \]
Using Lemma~\ref{lem:phifrompsi} we can use $\psi^*$ to generate a superhedging strategy. The cost of this strategy is
\begin{equation}
\label{eq:price}
 \Theta P_\nu(g) + (1- \Theta)P_\nu(f) + (1- \Theta)\left[ P_\mu(x) - P_\mu(f) \right],
\end{equation}
where the first two terms arise from the purchase of the static time-2 portfolio $\psi^*$ and the third comes from the purchase of the time-1 portfolio $(K_1-w)^+ - \psi^*(w)$.  The expression in \eqref{eq:price} can be rewritten as
\[ P_\mu(x) + D(f) + \Theta \left[ P_\nu(g) - P_\nu(f) - P_\mu(x) + P_\mu(f) \right] . \]

Now we consider the difference between the hedging cost ($HC$) and the model-based expected payoff ($MBEP$).
Recall that $P_\chi(k) = \int_{-\infty}^k (k - x) \chi(dx)$, $\chi\in\{\mu,\nu\}$,  and that $D(k)=D_{\mu,\nu}(k) = P_{\nu}(k) - P_{\mu}(k)$. Then \eqref{eq:mass} and \eqref{eq:mean} can be rewritten as
\begin{eqnarray}
P'_{\mu}(x) - P'_{\mu}(f) & = & P'_{\nu}(g) - P'_{\nu}(f), \label{eq:massP} \\
(xP'_{\mu}(x) - P_{\mu}(x)) - (fP'_{\mu}(f) - P_{\mu}(f)) & = & (gP'_{\nu}(g) - P_{\nu}(g)) - (fP'_{\nu}(f) - P_{\nu}(f)). \label{eq:meanP}
\end{eqnarray}
We find
\begin{eqnarray*}
HC-MBEP & = & \Theta \left[ P_\nu(g) - P_\nu(f) - P_\mu(x) + P_\mu(f) \right] - (K_1 - x) P'_{\mu}(x) - (K_2 - f) D'(f) \\
& = & \Theta \left[ gP'_\nu(g) - x P'_\mu(x) - f D'(f) \right] - (K_1 - x) P'_{\mu}(x) - (K_2 - f) D'(f) \\
& = & \Theta \left[ (g-x) P'_\mu(x) + (g-f) D'(f) \right] - (K_1 - x) P'_{\mu}(x) - (K_2 - f) D'(f) \\
& = & P'_{\mu}(x) \left[ \Theta (g-x) - (K_1 - x) \right] + D'(f) \left[ \Theta(g-f)- (K_2-f) \right] \\
& = & 0,
\end{eqnarray*}
where we use \eqref{eq:meanP}, \eqref{eq:massP} and the definition of $\Theta$, respectively.
Optimality of the model, stopping rule and hedge now follows.

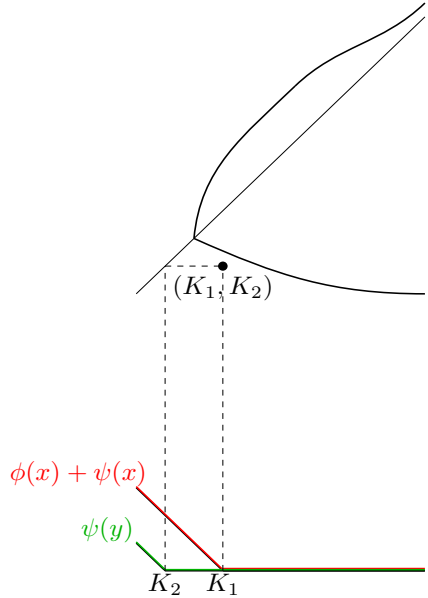
\begin{figure}[H]
\centering
\resizebox{6cm}{8cm}{
\begin{tikzpicture}[scale=0.5,
    declare function={	
    	k1=3;
	k2=1;
	a(\x)=(k1-\x)*(\x<k1);
	b(\x)=(k2-\x)*(\x<k2);
		}]

	\draw[name path=diag] (0,10) -- (10,20) node[right] {};
	
	\draw[name path=a,thick] plot[domain=0:10, samples=100] (\x,{a(\x)}) node[right] at (3,0.15) {};
	\draw[name path=b, black, thick] plot[domain=0:10, samples=100] (\x,{b(\x)}) node[right] at (3,0.15) {};
	\draw[name path=phi1, red, thick] plot[domain=0:k1, samples=100] (\x,{a(\x)+0.04}) node[right] at (3,0.15) {};
	\draw[name path=phi2, red, thick] plot[domain=k1:10, samples=100] (\x,{a(\x)+0.08}) node[right] at (3,0.15) {};
	\draw[name path=psi, black!30!green, thick] plot[domain=0:10, samples=100] (\x,{b(\x)+0.04}) node[right] at (3,0.15) {};

	\draw[name path=tu1, thick] (2, 12) to[out=85,in=225] (5.5,17.5);
	\draw[name path=tu2, thick] (5.5, 17.5) to[out=45,in=225] (10,20.5 );
	\draw[name path=td, thick] (2, 12) to[out=335,in=180] (10, 10);

\path [name path=lineA](k2,0)--(k2,20);
\draw [name intersections={of=lineA and diag}, black, dashed] (k2,0) -- (intersection-1);

\gettikzxy{(intersection-1)}{\ax}{\ay};
\draw [name path=lineB, black, dashed](\ax,\ay)--(k1,\ay);
\draw [name path=lineC, black, dashed](k1,0)--(k1,\ay);

\node (k1k2)[scale=0.5, shape=circle, fill] at (k1,10+k2) {} ;
\node [scale=1.3] (name)[below] at (k1,10+k2) {$(K_1,K_2)$} ;
\node [scale=1.3](strike1) at (k1,-0.5) {$K_1$};
\node [scale=1.3](strike2) at (k2,-0.5) {$K_2$};
\node [scale=1.3](time2)[black!30!green] at (k2-2,k2+0.5) {$\psi(y)$};
\node [scale=1.3](total)[red] at (k2-3,k1+0.5) {$\phi(x)+\psi(x)$};

\end{tikzpicture}
}
\caption{Sketch of put payoffs with $\psi(y)=(K_2-y)^+$ and $\phi(x)=(K_1-x)^+ - (K_2-x)^+$.}
\end{figure}

2. Now suppose $K_1 \leq e_-$. Consider an exercise rule in which the American put is exercised at time 1 if it is in-the-money, otherwise it is exercised at time 2, and a model in which mass below $K_1$ at time 1 stays constant between times 1 and 2. (This is possible since $\mu \leq \nu$ on $(-\infty, e_-)$ and $K_1 \leq e_-$.) The expected payoff of the American put is
\begin{equation}  \int_{-\infty}^{K_1} (K_1 - x) \mu(dx) + \int_{-\infty}^{K_2} (K_2 - y) (\nu - \mu)(dy) = P_\mu(K_1) + P_\nu(K_2) - P_\mu(K_2).
\label{eq:MBEP}
\end{equation}
Alternatively, suppose $K_1 > e_-$ but $f(K_1) \geq K_2$. Then under the left-curtain martingale coupling mass below $K_2$ at time 1 stays constant between times 1 and 2 (note that $K_2 \leq f(K_1) \leq e_-$), and mass between $K_2$ and $K_1$ at time 1 is mapped to $(K_2,\infty)$. Then, mass which is below $K_2$ at time 2 was either below $K_2$ at time 1, or above $K_1$ at time 1. The expected payoff under this model (using a strategy of exercising at time 1 if the American put is in-the-money) is again given by \eqref{eq:MBEP}.

Now consider the hedging cost. Let $\psi(y) = (K_2 - y)^+$. Defining $\phi$ as in Lemma~\ref{lem:phifrompsi} we find $\phi(x) = (K_1-x)^+ - (K_2-x)^+ = (K_1 -(x \vee K_2))^+$ and the superhedging cost is
\[ HC = P_\nu(K_2) + P_\mu(K_1) - P_{\mu}(K_2). \]
Hence the model-based expected payoff equals the hedging cost.
\end{proof}

\subsection[Single jump in $f$]{Two intervals of $g>x$ and one downward jump in $f$}
\label{ssec:bimodal}

We now relax the Dispersion Assumption to the case where $f$ is not monotone. The simplest situation when this may arise is when there are two intervals on which $g(x)>x$. We do not contend that there are many natural examples which fall into this situation, but rather that this intermediate case illustrates phenomena which are to be found in the general case but which were not to be found under the Dispersion Assumption.

\begin{ass}[Single Jump Assumption] \label{ass:bimodal}
$\mu$ and $\nu$ are absolutely continuous with continuous densities $\rho$ and $\eta$, respectively.
$\nu$ has support on $(\ell_\nu,r_\nu) \subseteq (-\infty,\infty)$ and $\eta>0$ on $(\ell_\nu,r_\nu)$. $\mu$ has support on $(\ell_\mu,r_\mu) \subseteq (\ell_\nu,r_\nu)$ and $\rho>0$ on $(\ell_\mu,r_\mu)$. In addition:\\
$(\mu-\nu)^+$ is concentrated on $E = (e^1_-,e^1_+)\cup(e^2_-,e^2_+)$ with $e^1_+<e^2_-$ and $\rho>\eta$ on $E$; \\
$(\nu - \mu)^+$ is concentrated on $(\ell_\nu, r_\nu) \setminus E$ and $\eta>\rho$ on $(\ell_\nu,e^1_-) \cup(e^1_+,e^2_-)\cup (e^2_+,r_\nu)$; \\
there exists $f^\prime<e^1_-$ and $x^\prime\in(e^1_+,e^2_-)$ such that
\begin{equation}
\int^{x^\prime}_{f^\prime}\mu(dz)  =  \int^{x^\prime}_{f^\prime}\nu(dz)
\hspace{10mm} \mbox{and} \hspace{10mm}
\int^{x^\prime}_{f^\prime}z\mu(dz)  =  \int^{x^\prime}_{f^\prime}z\nu(dz).
\label{eq:massmeanBimodal}
\end{equation}
\end{ass}

Under Assumption~\ref{ass:bimodal} it is possible to find $(f,g) \in \Xi^{(\ell_\mu,r_\mu),(\ell_\nu,r_\nu)}_{Mon}$. The functions $g:(\ell_\mu,r_\mu) \rightarrow (\ell_\nu,r_\nu)$ and $f:(\ell_\mu,r_\mu) \rightarrow (\ell_\nu,r_\nu)$ satisfy (see the lower part of Figure~\ref{fig:FGbimodal}):
\begin{enumerate}
\item $g(x) = x$ on $(\ell_\mu, e^1_-] \cup [x', e^2_-]$; \\
\item $g(x) > x$ on $(e^1_-,x') \cup (e^2_-,r_\mu)$; \\
\item $g$ is continuous and strictly increasing; \\
\item $f(x) = x$ on $(\ell_\mu, e^1_-] \cup [x', e^2_-]$; \\
\item $f :(e^1_-,x') \mapsto (f',x')$ is continuous and strictly decreasing; \\
\item $f :(e^2_-,r_\mu) \mapsto (\ell_\nu,e^2_-) \setminus (f',x')$ is strictly decreasing; \\
\item there exists $x'' \in (e^2_-,r_\mu)$ such that $f$ jumps at $x''$; moreover, $f(x''-)=x' > f' = f(x''+)$. Away from $x''$, $f$ is continuous on $(e_-^2,r_\mu)$.
\end{enumerate}

\begin{figure}[H]
\centering
\begin{tikzpicture}[
declare function={	
fn(\x)=(1+sin(\x r))*((-0.5*pi<\x)&&(\x<3.5*pi))+1;
phi(\x,\m,\s)=(1/sqrt(2*pi*\s))*exp(-pow(\m-\x,2)/(2*\s));
f1(\x)=0.5*phi(\x,3,0.6)+0.5*phi(\x,7,0.6);
f2(\x)=0.03+phi(\x,5,10);
diag(\x)=\x/15-0.8;
    	f=1;
	g=5;
	a=5.8;
	b=6.7;
	c=6.9;
	xmin=0; xmax=10;
	ymin=-.1; ymax=0.3;
}]
\begin{axis}[axis lines=middle,
	ytick=\empty,
            xtick=\empty,
             xmin=0, xmax=10,
             ymin=-.1, ymax=0.4,
             yticklabels={},
            xticklabels={},
            axis line style={draw=none}, clip=false]

           \addplot[name path=base,black,domain={0:10}] {0};
           \addplot[name path=f1,black,domain={0:10},samples=100] {f1(\x)};
           \addplot[name path=f2,black,domain={0:10},samples=100] {f2(\x)};

           \addplot[name path=diag,black,domain={0:10}] {diag(\x)};

\path [name path=lineA](f,0)--(f,ymax);
\draw [name intersections={of=lineA and f2}, black,] (f,0) -- (intersection-1) node[below] at (f,0)  {$f^\prime$};

\path [name path=lineA](g,0)--(g,ymax);
\draw [name intersections={of=lineA and f2}, black, ] (g,0) -- (intersection-1) node[below] at (g,0){$x^\prime=g^\prime$};

\path [name path=lineA](a,0)--(a,ymax);
\draw [name intersections={of=lineA and f2}, black] (a,0) -- (intersection-1) node[below] at (a,0) {};

\path [name path=lineA](b,0)--(b,ymax);
\draw [name intersections={of=lineA and f1}, black, ] (b,0) -- (intersection-1) node[below] at (b,0) {};

\path [name path=lineA](c,0)--(c,ymax);
\draw [name intersections={of=lineA and f2}, black, ] (c,0) -- (intersection-1) node[below] at (c,0) {};

         \addplot[pattern=crosshatch, pattern color=red!50] fill between[of=f1 and f2, soft clip={domain=f:g}];
         \addplot[pattern=crosshatch, pattern color=red!50] fill between[of=f1 and f2, soft clip={domain=a:b}];
         \addplot[pattern=crosshatch, pattern color=red!50] fill between[of=base and f2, soft clip={domain=b:c}];
           \addplot[pattern=north west lines, pattern color=blue!50] fill between[of=f1 and base, soft clip={domain=f:2}];
             \addplot[pattern=north west lines, pattern color=blue!50] fill between[of=base and f2, soft clip={domain=1.9:3.8}];
             \addplot[pattern=north west lines, pattern color=blue!50] fill between[of=base and f1, soft clip={domain=3.8:g}];
             \addplot[pattern=north west lines, pattern color=blue!50] fill between[of=base and f1, soft clip={domain=a:6.2}];
             \addplot[pattern=north west lines, pattern color=blue!50] fill between[of=base and f2, soft clip={domain=6.2:b}];

          \coordinate (A) at (2, 2/15-0.8);
            \coordinate (B) at (g, g/15-0.8);
             \draw[blue] (A) to[out=75,in=180] (B);
             \draw[blue] (A) to[out=315,in=180] (g,{diag(f)});
              \draw[blue] (6.1, {diag(6.1)}) to[out=300,in=180] (c+1,{diag(g)});
                 \draw[name path=xx, blue] (6.1, {diag(6.1)}) to[out=75,in=205] (10,{diag(10)});
                 \draw[blue] (c+1, {diag(f)}) to[out=300,in=180] (10,{diag(0)});
                 \node (f)[left, scale=0.8] at (f,{diag(f)}) {($f^\prime,f^\prime$)};
                  \node (e1)[right, scale=0.8] at (2,{diag(2)}) {(${e^1_-},{e^1_-}$)};
                  \node (g)[right, scale=0.8] at (g,{diag(g)}) {($x^\prime,x^\prime$)};
                    \node (g)[right, scale=0.8] at (6.1,{diag(6.1)}) {($e_-^2,e_-^2$)};

                  \draw[blue,dotted] (g,{diag(g)}) -- (c+1,{diag(g)});
                  \draw[blue,dotted] (c+1,{diag(g)}) -- (c+1,{diag(f)});
                \draw[blue,dotted] (g,{diag(f)}) -- (c+1,{diag(f)});
                    \draw[blue,dotted] (g,{diag(f)}) -- (c+1,{diag(f)});
                    \draw[blue,dotted] (g,{diag(g)}) -- (g,{diag(f)});

                    \path [name path=lineAA](c+1,{diag(g)})--(c+1,ymax);
\draw [name intersections={of=lineAA and xx}, gray, very thin ] (c+1,{diag(g)}) -- (intersection-1);
\node (xx)[right, scale=0.8] at (c+1,{diag(c+1)}) {($x^{\prime\prime},x^{\prime\prime}$)};
\node (gxx)[left, scale=0.8] at (intersection-1) {($x^{\prime\prime},g(x^{\prime\prime})$)};

\node (G)[above, scale=1.2] at (4,{diag(4)+0.05}) {$g$};
\node (F)[above, scale=1.2] at (4.5,{diag(4.5)-0.24}) {$f$};
\end{axis}
\end{tikzpicture}
\caption{Picture of $f$ and $g$ under Assumption~\ref{ass:bimodal}.}
\label{fig:FGbimodal}
\end{figure}

By construction we have that
\begin{equation}\label{eq:massmeanx''}
\int^{x^{\prime\prime}}_{x^\prime}\mu(dz)=\int^{g(x^{\prime\prime})}_{x^\prime}\nu(dz);
\hspace{10mm}
\int^{x^{\prime\prime}}_{x^\prime}z\mu(dz)=\int^{g(x^{\prime\prime})}_{x^\prime}z\nu(dz),
\end{equation}
so that if mass in $(x',x'')$ at time 1 is mapped to $(x',g(x''))$ at time 2 then total mass and mean are preserved. Note that given $(f',x')$ satisfy \eqref{eq:massmeanBimodal}
we also have $\int^{x^{\prime\prime}}_{f^\prime}\mu(dz)=\int^{g(x^{\prime\prime})}_{f^\prime}\nu(dz)$ and $\int^{x^{\prime\prime}}_{f^\prime}z\mu(dz)=\int^{g(x^{\prime\prime})}_{f^\prime}z\nu(dz)$.
In particular, given \eqref{eq:massmeanBimodal} and \eqref{eq:massmeanx''}, the pair of equations
\[ \int^{x^{\prime\prime}}_{f}\mu(dz)=\int^{g(x^{\prime\prime})}_{f}\nu(dz);
\hspace{10mm}
\int^{x^{\prime\prime}}_{f}z\mu(dz)=\int^{g(x^{\prime\prime})}_{f}z\nu(dz) \]
has two solutions for $f$, namely $f=x'$ and $f=f'$. Hence, in defining the left-curtain martingale coupling there are two choices for $f$ at $x''$: we may take $f(x'')=x'$ or $f(x'')=f'$.
Rather than assuming one of these choices (for example by requiring left-continuity of $f$) it is convenient to allow $f$ to be multi-valued.
Then, for $x$ such that $g(x)>x$ let $\aleph(x) = \{ f : \mbox{$(f,x,g(x))$ solves \eqref{eq:mass} and \eqref{eq:mean}} \}$.
Then, in the setting of Assumption~\ref{ass:bimodal}, for $x > e_-$, $|\aleph(x)| = 1$ except at $x''$ and there $\aleph(x'') = \{ f(x''+),f(x''-) \} = \{ f',x' \}$.

\begin{rem}
As discussed in Remark~\ref{rem:reverseconstruction} when constructing examples which fit with the analysis of this section, we may begin with $f,g$ as presented in the bottom half of Figure~\ref{fig:FGbimodal}. Given $\mu$ with support $(\ell_\mu,r_\mu)$ we can define $\nu$ via $\nu(dy) = \int \mu(dx) \chi_{f(x),x,g(x)}$. Then the pair $(\mu,\nu)$ satisfy the hypotheses of Assumption~\ref{ass:bimodal}.
\end{rem}

\begin{rem}
Recall Remark~\ref{rem:ode} and the principle that quantities in the left-curtain coupling are determined working from left to right.
Given that $\mu$ and $\nu$ have continuous densities and given that $\eta>\rho$ on $(\ell_\mu,e^1_-)$ we can set $f=g=x$ on this interval. To the right of $e^1_-$ we have $\rho>\eta$ and we can define $f$ and $g$ using the differential equations in Remark~\ref{rem:ode}. There are two cases, either $g(x)>x$ for all $x \in (e^1_-,r_\mu)$ (in which case we can define $(f,g)$ on $(e^1_-, r_\mu)$ with the properties described in Lemma~\ref{lem:FGconstruction}) or there is some point at which $g$ first hits the diagonal line $y=x$ again. This point is exactly $x'$.

If $x'$ exists it must satisfy $x' \in (e^1_+,e^2_-)$. Then we set $g(x)=x$ on $(x',e^2_-)$ and let $f=g$ solve the same coupled differential equations as in Remark~\ref{rem:ode} but with a new starting point $g(e^2_-) = e^2_- = f(e^2_-)$. The ODEs determine $f$ and $g$ until $f$ first reaches $x'$. This happens at $x''$, and at $x''$ $f$ jumps down to $f'$ (and $g$ is continuous). To the right of $x''$, $f$ and $g$ solve the differential equations again subject to initial conditions $f(x'')=f'$, $g(x'')=g(x''-)$.
\end{rem}

Recall the definition $\Lambda(x) = \frac{g(x)-K_1}{g(x)-x} - \frac{(K_1-K_2)}{x-f(x)}$. If $f$ is multi-valued, then $\Lambda$ will also be multi-valued.
In Section~\ref{ssec:dispersion}, one of our main steps was to find $x$ such that $\Lambda(x)=0$, and our aim is similar here.

Introduce $\Upsilon=\Upsilon_{K_1,K_2}(f,x,g)$ which is defined for $f\leq K_2,x  \leq K_1 \leq g$ by
\[ \Upsilon(f,x,g) = \frac{(K_2-f)-(K_1-x)}{x-f} - \frac{K_1-x}{g-x} =  \frac{g-K_1}{g-x} - \frac{(K_1-K_2)}{x-f} .  \]
Instead of seeking $x$ which is a root of $\Lambda(x)=0$ our goal is to find $(f,x,g)$ with $g=g(x)$ and $f \in \aleph(x)$ such that $\Upsilon(f,x,g)=0$.

Fixing $K_1$, the value of $K_2$ such that $\Upsilon(f'=f(x''+),x'',g(x''))=0$ is given by
$K_2 =f'+(K_1-x^{\prime\prime})\frac{g(x^{\prime\prime})-f'}{g(x^{\prime\prime})-x^{\prime\prime}}$. Similarly,
the value of $K_2$ such that $\Upsilon(x'=f(x''-),x'',g(x''))=0$ is given by
$K_2 =x'+(K_1-x^{\prime\prime})\frac{g(x^{\prime\prime})-x'}{g(x^{\prime\prime})-x^{\prime\prime}}$.
This motivates the introduction of the linear increasing functions $L_u$, $L_d:[x^{\prime\prime},g(x^{\prime\prime})]\to\mathbb{R}$ defined by
\begin{align}
L_u(x)&=x^\prime+(x-x^{\prime\prime})\frac{g(x^{\prime\prime})-x^\prime}{g(x^{\prime\prime})-x^{\prime\prime}},\\
L_d(x)&=f'+(x-x^{\prime\prime})\frac{g(x^{\prime\prime})-f'}{g(x^{\prime\prime})-x^{\prime\prime}}.
\end{align}
Pictorially, $L_d$ and $L_u$ are the lower and upper boundaries, respectively, of the dotted triangular area $\sG$ in Figure~\ref{fig:regionsBimodal}.

From Figure \ref{fig:regionsBimodal} we identify four regions (and various subregions) on which four different martingale couplings and hedging strategies will be needed in order to find the highest model-based expected payoff of an American put. (Compare this with two regimes under the Dispersion Assumption in Figure \ref{fig:fgdispersion}.) 

Define 
\[ \sR_1  =  \{ (k_1,k_2): e^1_- < k_1 < x', f(k_1) < k_2 < k_1 \}, \]
which we write more compactly as
$\sR_1  = \{ e^1_- < k_1 < x', f(k_1) < k_2 < k_1 \}$.
Using the same compact notation define
\begin{eqnarray*}
\sR_1 & = & \{ e^1_- < k_1 < x', f(k_1) < k_2 < k_1 \}; \\
\sR_2 & = & \{ e^2_- < k_1 < x'', f(k_1) < k_2 < k_1 \} \cup \{ k_1 = x'' , x' < k_2 < k_1 \}; \\
\sR_3 & = & \{ x'' < k_1 < g(x''), L_u(k_1) \leq k_2 < k_1 \}; \\
\sR_4 & = & \{ x'' < k_1 < g(x''), f(k_1) < k_2 \leq L_d(k_1) \}; \\
\sR_5 & = & \{ g(x'') \leq k_1 \leq r_\mu, f(k_1) < k_2 < k_1 \}; \\
\sB_1 & = & \{ \ell_\mu \leq k_1 \leq e^1_-, k_2 < k_1 \} \cup \{ e^1_- < k_1 < x', k_2 \leq f(k_1) \} \cup \{ x'' < k_1 \leq r_\mu, k_2 \leq f(k_1) \};  \\
\sB_2 & = & \{ x' \leq k_1 \leq x'', k_2 \leq f' \}; \\
\sB_3 & = & \{ x' \leq k_1 \leq e^2_-, x' \leq k_2 < k_1 \} \cup \{ e^2_- < k_1 \leq x'', x' \leq k_2 \leq f(k_1) \}; \\
\sG & = & \{ x'' < k_1 < g(x''), L_d(k_1) < k_2 < L_u(k_1) \}; \\
\sW & = & \{ x' \leq k_1 \leq x'', f' < k_2 < x' \};
\end{eqnarray*}
and set $\sR= \cup_{i=1}^5 \sR_i$ and $\sB= \cup_{i=1}^3 \sB_i$. In general, on the boundaries between the regions the boundaries could be allocated to either region. However, we allocate points on the boundary to the region where the hedge is simplest.

Note that $\sR \cup \sB \cup \sG \cup \sW = \{ (k_1,k_2) : \ell_\mu \leq k_1 \leq r_\mu, k_2 < k_1 \}$.

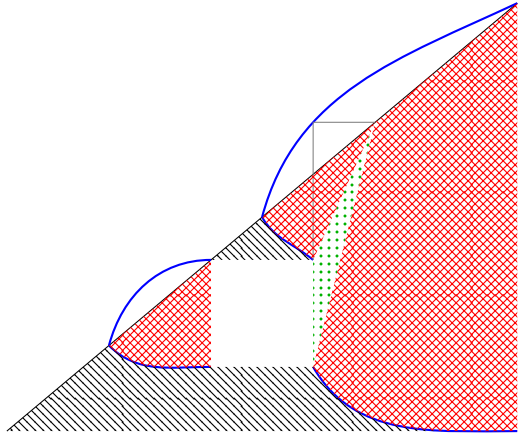
\begin{figure}[H]
\centering

\begin{tikzpicture}[
declare function={	
diag(\x)=\x;
    	f=1.5;
	e1=2;
	g=4;
	e2=5;
	f2=6;
	s=8.5;
}]
\begin{axis}[axis lines=middle,
	ytick=\empty,
            xtick=\empty,
             xmin=-.1, xmax=10,
             ymin=0, ymax=10,
             yticklabels={},
            xticklabels={},
            axis line style={draw=none}, clip=false]

           \addplot[name path=diag,black,domain={0:10}] {diag(\x)};

                \path[name path=base] (0,0) to (10,0);


          \coordinate (g) at (g, {diag(g)});
           \coordinate (e1) at (e1, {diag(e1)});
            \coordinate (e2) at (e2, {diag(e2)});
             \coordinate (s) at (s, {diag(s)});

             \coordinate (a) at (g, {diag(f)});
              \coordinate (b) at (f2, {diag(g)});
               \coordinate (c) at (f2, {diag(f)});

             \draw[thick, blue, name path=tu1] (e1) to[out=75,in=180] (g);
             \draw[thick, blue, name path=td1] (e1) to[out=315,in=180] (a);
              \draw[thick, blue, name path=tu2] (e2) to[out=75,in=205] (10,{diag(10)});
              \draw[thick, blue, name path=td2] (e2) to[out=300,in=140] (b);
              \draw[thick, blue, name path=td3,] (c) to[out=300,in=180] (10,{diag(0)});

                \path[gray, name path=dottop] (g) to (b);
                 \path[gray, name path=dotright] (b) to (c);
                  \path[gray, name path=dotbottom] (a) to (c);
                   \path[gray, name path=dotleft] (g) to (a);

		\path [gray, dotted, pattern=dots, pattern color=white] (g) -- (b) -- (c) -- (a) -- (g);

                  \path [name path=lineA](b) -- (f2,10);
		\draw [gray, name intersections={of=lineA and tu2}] (b) -- (intersection-1)%
		\pgfextra{\pgfgetlastxy{\macrox}{\macroy} \global\let\macroy\macroy};
         \coordinate (z) at (intersection-1);
         \path [name path=lineA](intersection-1) -- (10,\macroy);
         \draw [gray,name intersections={of=lineA and diag}] (z) -- (intersection-1);

		\path [thick, green, pattern=dots, pattern color=black!30!green] (f2,g) -- (intersection-1) -- (f2,f) -- (f2,g);

		
		 \path [pattern=north west lines, pattern color=black] (0,0) -- (e1) to[out=315,in=180] (a) -- (c) to[out=300,in=180] (10,{diag(0)}) -- (0,0);
	
		\path [ pattern=north west lines, pattern color=black] (g) -- (e2) to[out=300,in=140] (b) -- (g);

		
                            \path [pattern=crosshatch,pattern color=red] (b) -- (intersection-1) -- (e2) to[out=300,in=140] (b);
                            \path [pattern=crosshatch,pattern color=red] (10,0) -- (10,10) -- (intersection-1) -- (c) to[out=300,in=180] (10,{diag(0)});

                	 \path [pattern=crosshatch,pattern color=red] (e1) to[out=315,in=180] (a) -- (g) -- (e1);

             \end{axis}
\end{tikzpicture}
\caption{Picture of $f$ and $g$ in bi-modal case, now with 4 regions shaded (cross-hatched, diagonally, dotted and blank).}
\label{fig:regionsBimodal}
\end{figure}

\subsubsection{Case $(K_1,K_2) \in \sR$.}
\label{sssec:R}

\begin{lem}
Suppose $(K_1,K_2) \in \sR$. Then there exists $x^*=x^*(\mu,\nu;K_1,K_2) \in (g^{-1}(K_1),K_1)$ and $f^* \in \aleph(x^*)$ such that $\Upsilon(f^*,x^*,g^*=g(x^*))=0$.
\label{lem:x^*Red}
\end{lem}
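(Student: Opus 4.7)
The plan is to reduce to the intermediate value theorem argument of Lemma~\ref{lem:uniquex*} sub-region by sub-region, being careful about the downward jump of $f$ at $x''$ and the corresponding multi-valuedness of $\aleph(x'')$. For any single-valued selector $\tilde f$ of the possibly multi-valued map $f$, write $\Lambda(x) = \Upsilon(\tilde f(x), x, g(x))$, so that finding $(f^*, x^*)$ with $\Upsilon(f^*, x^*, g^*) = 0$ amounts to finding a zero of $\Lambda$ for an appropriate selector.

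For $(K_1, K_2) \in \sR_1 \cup \sR_2 \cup \sR_5$, the interval $(g^{-1}(K_1), K_1)$ lies in a range where $f$ is single-valued and continuous; in the sub-case $K_1 = x''$ of $\sR_2$, I take $\tilde f(x'') = f(x''-) = x'$. On each such range $f$ is strictly decreasing and $g$ is strictly increasing, so $\Lambda$ is continuous and strictly increasing exactly as in the proof of Lemma~\ref{lem:uniquex*}. At the left endpoint the first term of $\Lambda$ vanishes because $g(g^{-1}(K_1)) = K_1$, giving $\Lambda(g^{-1}(K_1)) < 0$; at $x = K_1$ one has $g(K_1) > K_1$ and hence $\Lambda(K_1) = 1 - (K_1 - K_2)/(K_1 - f(K_1))$, which is strictly positive precisely because $K_2 > f(K_1)$ (the defining inequality of each region). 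The IVT then supplies a (unique) $x^* \in (g^{-1}(K_1), K_1)$ with $\Lambda(x^*) = 0$, and $f^* = f(x^*) \in \aleph(x^*)$.

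The interesting cases are $\sR_3$ and $\sR_4$, where $(g^{-1}(K_1), K_1)$ straddles $x''$. For $\sR_3$ I work on the left half $(g^{-1}(K_1), x'']$ with the left-limit selector $\tilde f(x'') = f(x''-) = x'$. A direct algebraic check shows $K_2 = L_u(K_1) \iff \Upsilon(x', x'', g(x'')) = 0$; since $\Upsilon$ is strictly increasing in $K_2$ with its other arguments held fixed, the hypothesis $K_2 \geq L_u(K_1)$ yields $\Lambda(x'') \geq 0$. Combined with $\Lambda(g^{-1}(K_1)) < 0$ and the monotonicity of $\Lambda$ on this subinterval, the IVT produces $x^* \in (g^{-1}(K_1), x'']$; set $f^* = x'$ if $x^* = x''$ and $f^* = f(x^*)$ otherwise, so in either case $f^* \in \aleph(x^*)$. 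Case $\sR_4$ is symmetric: on the right half $[x'', K_1)$ take $\tilde f(x'') = f(x''+) = f'$; the identity $K_2 = L_d(K_1) \iff \Upsilon(f', x'', g(x'')) = 0$ together with $K_2 \leq L_d(K_1)$ gives $\Lambda(x'') \leq 0$, while $\Lambda(K_1) > 0$ as in the first case, producing $x^* \in [x'', K_1)$ with $f^* = f'$ when $x^* = x''$.

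The only real subtlety is the bookkeeping at the jump of $f$: the affine curves $L_u$ and $L_d$ are engineered so that the sign of $K_2 - L_u(K_1)$ (respectively $K_2 - L_d(K_1)$) dictates whether the required zero of $\Lambda$ lies strictly to the left of $x''$ using the left-limit $f = x'$, or strictly to the right of $x''$ using the right-limit $f = f'$; boundary values $K_2 \in \{L_u(K_1), L_d(K_1)\}$ collapse into $x^* = x''$ with the corresponding selection in $\aleph(x'') = \{f', x'\}$. This case split exhausts $\sR = \bigcup_{i=1}^5 \sR_i$ and delivers the existence of $(x^*, f^*)$.
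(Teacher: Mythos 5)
Your proposal is correct and follows essentially the same route as the paper: the intermediate value argument of Lemma~\ref{lem:uniquex*} on $\sR_1\cup\sR_2\cup\sR_5$, and for $\sR_3\cup\sR_4$ a sign comparison of $\Lambda$ at $x''$ against the lines $L_u$, $L_d$, with the boundary cases $K_2\in\{L_u(K_1),L_d(K_1)\}$ resolved by taking $x^*=x''$ and the appropriate element of $\aleph(x'')=\{f',x'\}$. Your use of one-sided selectors of $f$ at $x''$ merely repackages the paper's split into strict and boundary subcases.
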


\begin{proof}
Suppose $(K_1,K_2) \in \sR_1 \cup \sR_2 \cup \sR_5$. Consider $\Lambda:[g^{-1}(K_1),K_1] \mapsto \R$ defined by (\ref{eq:Lambdadef}). Note that for this choice of $(K_1,K_2)$, $f$ and $g$ are both continuous on $[g^{-1}(K_1),K_1]$, see Figure~\ref{fig:FGbimodal}. Hence $\Lambda(x)=\Upsilon(f(x),x,g(x))$ is also continuous. Then the same argument as in the proof of Lemma \ref{lem:uniquex*} shows that there exists a unique $x^*=x^*(\mu,\nu;K_1,K_2) \in (g^{-1}(K_1),K_1)$ such that $\Lambda(x^*)=0$.

Now suppose $(K_1,K_2) \in \sR_3 \cup \sR_4$ and consider $\Lambda$ as before. Recall that $\Lambda$ is increasing, $\Lambda(g^{-1}(K_1))<0$ and $\Lambda(K_1)>0$. On the other hand, $g^{-1}(K_1)<x^{\prime\prime}$ and hence $\Lambda$ has an upward jump at $x^{\prime\prime}$ (since $f$ has a downward jump at $x^{\prime\prime}$). There are two cases depending on whether $(K_1,K_2) \in \sR_3$ or $\sR_4$.
\begin{enumerate}
\item Suppose $K_2>L_u(K_1)$. Then $\Lambda(x^{\prime\prime}-)>0$. Since $\Lambda(g^{-1}(K_1))<0$, by the continuity of $\Lambda$ on $(g^{-1}(K_1),x^{\prime\prime})$, there exists a unique $x^*=x^*(\mu,\nu;K_1,K_2) \in (g^{-1}(K_1),x^{\prime\prime})$ such that $\Lambda(x^*)=0$.
    If $K_2 = L_u(K_1)$ then $\Upsilon(x',x''g(x''))=0$ and we take $x^* = x''$, $g^* = g(x'')$ and $f^* = f(x''-) = x'$.
\item Suppose $K_2<L_d(K_1)$. Then $\Lambda(x^{\prime\prime}+)<0$. Further, since $\Lambda(K_1)>0$ there exists a unique $x^*=x^*(\mu,\nu;K_1,K_2) \in (x^{\prime\prime},K_1)$ such that $\Lambda(x^*)=0$.
    If $K_2 = L_d(K_1)$ then $\Upsilon(f',x'',g(x''))=0$ and we take $x^* = x''$, $g^* = g(x'')$ and $f^* = f(x''+) = f'$.
\end{enumerate}
\end{proof}

By Lemma~\ref{lem:x^*Red}, for $(K_1,K_2) \in \sR$ there exists $\{ f^* \in \aleph(x^*),x^*, g^*=g(x^*) \}$ such that $\Upsilon(f^*,x^*,g^*)=0$. Suppose $(K_1,K_2) \in \sR_1 \cup \sR_4 \cup \sR_5$. As before, let $M^*=(M^*_0=\bar{\mu},M^*_1=X,M^*_2=Y)$ be the stochastic process such that $\Prob(X \in dx, Y \in dy) = \hat{\pi}_{x^*}(dx,dy)$, where $\hat{\pi}_{x^*} \in \hat{\Pi}(\mu,\nu)$ is a martingale coupling that combines the couplings $\mu_{f^*,x^*} \mapsto \nu_{f^*,g^*}$ and $\tilde{\mu}_{f^*,x^*} \mapsto\tilde{\nu}_{f^*,g^*}$.

Recall the proof of Theorem \ref{thm:dispersion}. There, to show that $MBEP=HC$, we used the fact that  under model $M^*$, or more specifically, under any martingale coupling which mapped $(f^*,x^*)$ to $(f^*,g^*)$, the mass that is `unexercised' at time 1 and is in-the-money at time 2 is given by $(\nu-\mu)\lvert_{(-\infty,f^*)}$ where $f^*<e_-$. When $f(x')<e_-^1$ (as is the case when $(K_1,K_2) \in \sR_1 \cup \sR_4 \cup \sR_5$) then the same proof applies, $MBEP=HC$ and we have optimality.
However, if $(K_1,K_2) \in \sR_2 \cup \sR_3$, then it is not the case that $f^*<e^1_-$  and thus, in order to specify the optimal model, we need to impose additional structure on the coupling $\tilde{\mu}_{f^*,x^*} \mapsto\tilde{\nu}_{f^*,g^*}$.

Suppose $(K_1,K_2) \in \sR_2 \cup \sR_3$. Then $x^\prime<f^*$, so that $(f^\prime,x^\prime)\cap(f^*,g^*)=\emptyset$. From the defining properties of $f^\prime$ and $x^\prime$ we see that there exists a martingale coupling, which we term $\hat{\pi}_{x^\prime,x^*} \in \hat{\Pi}_M(\mu,\nu)$, that combines the couplings of $\mu_{{f^\prime,x^\prime}}\mapsto\nu_{{f^\prime,x^\prime}}$ and $\mu_{f^*,x^*} \mapsto \nu_{f^*,g^*}$, so that $\hat{\pi}_{x^\prime,x^*}$ maps $(f^\prime,x^\prime)$ to $(f^\prime,x^\prime)$, $(f^*,x^*)$ to $(f^*,g^*)$ and $((f^\prime,x^\prime)\cup (f^*,x^*))^C$ to $((f^\prime,x^\prime)\cup (f^*,g^*))^C$. Let $\hat{M}^*=(\hat{M}^*_0=\bar{\mu},\hat{M}^*_1=X,\hat{M}^*_2=Y)$ be the stochastic process such that $\Prob(X \in dx, Y \in dy) = \hat{\pi}_{x^\prime,x^*}(dx,dy)$.

If $(K_1,K_2) \in \sR_1 \cup \sR_4 \cup \sR_5$ we have $M^* \in \sM(\sS^*, \mu, \nu)$ and if $(K_1,K_2) \in \sR_2 \cap \sR_3$ we have that $\hat{M}^*\in \sM(\sS^*, \mu, \nu)$. For both models we consider a candidate stopping time $\tau^*=1$ if $X<x^*$ and $\tau^*=2$ otherwise, and a candidate superhedge $(\psi^*, \phi^*,\theta_{1,2}^*)$ generated by the function $\psi^*$ defined in (\ref{eq:psi*}).
\begin{thm}
Suppose Assumption~\ref{ass:bimodal} holds and $(K_1,K_2) \in \sR$.
Depending on whether $(K_1,K_2) \in \sR_1 \cup \sR_4 \cup \sR_5$ or $\sR_2 \cup \sR_3$, the models $M^*$ and $\hat{M}^*$ and the stopping time $\tau^*$ are the consistent models for which the price of the American option is the highest. The function $\psi^*$ defined in \eqref{eq:psi*} defines the cheapest superhedge. Moreover, the highest model-based price is equal to the cost of the cheapest superhedge.
\label{thm:bimodalRed}
\end{thm}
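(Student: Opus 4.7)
The plan is to mimic the proof of Theorem \ref{thm:dispersion}, exploiting weak duality: if for some consistent model $(\sS^*,M^*)$, stopping rule $\tau^*$ and $\psi^* \in \breve{\sH}$ we can show that the model-based expected payoff $\sA(B^*,M^*,\sS^*)$ equals the hedging cost $\breve{\sC}(\psi^*,\mu,\nu)$, then by Section \ref{ssec:strong=weak} the proposed model, stopping rule and superhedge are all optimal and strong duality holds. For $(K_1,K_2) \in \sR$, Lemma \ref{lem:x^*Red} produces the triple $(f^*,x^*,g^*)$ with $\Upsilon(f^*,x^*,g^*)=0$, and by continuity of $\nu$ this triple solves the mass/mean equations \eqref{eq:mass}--\eqref{eq:mean}; so the algebraic identities \eqref{eq:massP}--\eqref{eq:meanP} are available in $(f^*,x^*,g^*)$. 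I will then carry through the same four-line calculation that appears in the proof of Theorem \ref{thm:dispersion}, which expresses the difference $HC-MBEP$ as $P'_\mu(x^*)[\Theta(g^*-x^*)-(K_1-x^*)] + D'(f^*)[\Theta(g^*-f^*)-(K_2-f^*)]$ with $\Theta=(K_2-f^*)/(g^*-f^*)$, and hence is zero once $\Upsilon(f^*,x^*,g^*)=0$ is invoked.

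The content of the theorem therefore reduces to verifying that the MBEP under the proposed model and stopping rule admits the same integral representation
\[
MBEP = P_\mu(x^*) + (K_1-x^*)P'_\mu(x^*) + D(f^*) + (K_2-f^*)D'(f^*)
\]
as in the dispersion case. For $(K_1,K_2) \in \sR_1 \cup \sR_4 \cup \sR_5$ this is immediate: Lemma \ref{lem:x^*Red} gives $f^* < e^1_-$, so the coupling $\hat{\pi}_{x^*}$ carries $(f^*,x^*)$ to $(f^*,g^*)$ while mass of $\mu$ below $f^*$ stays put (as $\mu\leq \nu$ there), and the interval $[f^*,x^*]$ of exercised mass and the interval $(-\infty,f^*]$ of unexercised in-the-money mass both behave exactly as in the proof of Theorem \ref{thm:dispersion}. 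The crux is the remaining case $(K_1,K_2)\in\sR_2\cup\sR_3$, where $f^* > x' > f'$: naively, mass of $\mu$ on $(f',x')$ could be sent anywhere by a general martingale coupling and the clean accounting of ``unexercised, in-the-money'' mass as $(\nu-\mu)|_{(-\infty,f^*)}$ would fail.

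The device that repairs the argument is precisely the coupling $\hat{\pi}_{x',x^*}$ used to define $\hat{M}^*$: because $(f',x')$ and $(f^*,g^*)$ are disjoint intervals and $(f',x')$ satisfies \eqref{eq:massmeanBimodal}, we may independently transport $\mu|_{(f',x')} \mapsto \nu|_{(f',x')}$ and $\mu|_{(f^*,x^*)} \mapsto \nu|_{(f^*,g^*)}$ inside one martingale coupling. Under this model, mass of $\mu$ on $(-\infty,f^*)$ that is not exercised at time 1 (i.e.\ on $[\ell_\mu,e^1_-]\cup(f',x')\cup[x',f^*)$, using that $f^*\leq e^2_-$) is sent to the same interval in the time-$2$ marginal, so the time-$2$ in-the-money contribution to the expected payoff equals $\int_{-\infty}^{f^*}(K_2-w)(\nu-\mu)(dw) = D(f^*)+(K_2-f^*)D'(f^*)$, exactly as before. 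The potential obstacle—measurability and consistency of the composite coupling $\hat{\pi}_{x',x^*}$—is handled by an elementary gluing argument: because the source intervals $(f',x')$ and $(f^*,x^*)$ partition disjoint regions of $\mu$-mass, and the two local couplings are themselves martingale couplings of the correctly restricted marginals, their union plus any martingale coupling of the complementary restrictions is in $\hat{\Pi}_M(\mu,\nu)$. With this in hand, $MBEP$ takes the stated form, the calculation $HC-MBEP=0$ goes through verbatim, optimality of $(\hat{M}^*,\tau^*)$ and $\psi^*$ follows, and $\sP=\sD$.
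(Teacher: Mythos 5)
Your proposal is correct and follows essentially the same route as the paper's proof: weak duality plus Lemma~\ref{lem:x^*Red}, the dispersion-case identity $MBEP = P_\mu(x^*)+(K_1-x^*)P'_{\mu}(x^*)+D(f^*)+(K_2-f^*)D'(f^*)=HC$, and, for $(K_1,K_2)\in\sR_2\cup\sR_3$, the coupling $\hat{\pi}_{x^\prime,x^*}$ that recouples $(f^\prime,x^\prime)$ to itself so that the unexercised in-the-money mass contributes $\int_{-\infty}^{f^*}(K_2-w)(\nu-\mu)(dw)$ (the paper makes the cancellation $\int_{f^\prime}^{x^\prime}(K_2-w)(\nu-\mu)(dw)=0$ from \eqref{eq:massmeanBimodal} explicit, which you leave implicit). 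One wording slip worth fixing: under $\tau^*$ the $\mu$-mass on $(-\infty,f^*)$ \emph{is} exercised at time 1; what you need, and what your chosen coupling delivers, is that this exercised mass is transported within $(-\infty,f^*)$, so that the residual $\nu$-mass there is exactly the in-the-money mass of the unexercised paths.
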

\begin{proof}
If $(K_1,K_2) \in \sR_1 \cup \sR_4 \cup \sR_5$ then the proof is essentially the same as the proof of the first case in Theorem~\ref{thm:dispersion}. We repeat it for convenience. First find $x^*\in(g^{-1}(K_1),K_1)$ and $f^* \in \aleph(x^*)$ such that $\Upsilon(f^*,x^*g^*=g(x^*))=0$. If $x^*=x''$ we find $f^* = f(x''+)=f'$. Under the candidate model $M^*$ mass below $f^*$ at time 1 is mapped to the same point at time 2 (which is possible since $f^*<e^1_-$), and mass in $(f^*,x^*)$ is mapped to $(f^*,g^*)$, while mass above $x^*$ is either mapped to below $f^*$ or to above $g^*$. Then under the candidate stopping rule $\tau^*$ the model-based expected payoff is equal to the cost of the hedging strategy generated by $\psi^*$:
\begin{align*}
MBEP & =\lefteqn{ \int_{-\infty}^{x^*} (K_1-w)^+ \mu(dw) + \int_{-\infty}^{f^*} (K_2 - w)^+ (\nu - \mu)(dw) } \\
& =   P_\mu(x^*) + (K_1 - x^*) P'_{\mu}(x^*) + D(f^*) + (K_2 - f^*) D'(f^*)\\
& = HC.
\end{align*}

Now suppose $(K_1,K_2) \in \sR_2 \cup \sR_3$. By Lemma~\ref{lem:x^*Red} there is a unique $x^*\in (g^{-1}(K_1),x^{\prime\prime}]$ and $f^* \in \aleph(x^*)$ such that $\Upsilon(f^*,x^*,g^*=g(x^*))=0$.
If $x^*=x''$ then we have $f^*=f(x''-) = x'$.
Then, since $\nu$ is continuous we have that $f^*,x^*,g^*$ solve \eqref{eq:mass} and \eqref{eq:mean}. Note, however, that $x^\prime \leq f^*<e^2_-$.

Under the candidate model $\hat{M}^*$ mass in $(f^\prime,x^\prime)$ at time 1 is mapped to the same interval at time 2. Also, mass below $f^\prime$ and mass in $(x^\prime,f^*)$ at time 1 is mapped to the same point at time 2, and mass in $(f^*,x^*)$ is mapped to $(f^*,g^*)$. Mass above $x^*$ is either mapped to below $f^\prime$, to $(x^\prime,f^*)$, or to above $g^*$. In particular $(\nu - \mu)\lvert_{(-\infty,f^\prime)\cup(x^\prime,f^*)}$ is the mass that was not `exercised' at time $1$ and is `exercised' in-the-money at time 2. In other words, $(\nu - \mu)\lvert_{(-\infty,f^\prime)\cup(x^\prime,f^*)}$ is the probability under $\hat{M}^*$ that $(X>x^*,Y<K_2)$. From \eqref{eq:massmeanBimodal} we have $\int_{x'}^{f'} (K_2-w) (\nu - \mu)(dw)=0$. Then
\begin{align*}
MBEP & = \int_{-\infty}^{x^*} (K_1-w) \mu(dw) + \int_{-\infty}^{f^\prime} (K_2 - w) (\nu - \mu)(dw) +\int_{x^\prime}^{f^*} (K_2 - w) (\nu - \mu)(dw)\\
&  = \int_{-\infty}^{x^*} (K_1-w) \mu(dw) + \int_{-\infty}^{f^*} (K_2 - w) (\nu - \mu)(dw) - \int_{f^\prime}^{x^\prime} (K_2 - w) (\nu - \mu)(dw) \\
& = \int_{-\infty}^{x^*} (K_1-w) \mu(dw) + \int_{-\infty}^{f^*} (K_2 - w) (\nu - \mu)(dw) \\
& =   P_\mu(x^*) + (K_1 - x^*) P'_{\mu}(x^*) + D(f^*) + (K_2 - f^*) D'(f^*) \\
& = HC.
\end{align*}
\end{proof}

\subsubsection{$(K_1,K_2) \in \sB = \sB_1 \cup \sB_2 \cup \sB_3$}
\label{sssec:B}

\begin{thm}
Suppose Assumption~\ref{ass:bimodal} holds and that $(K_1,K_2) \in \sB = \sB_1 \cup \sB_2 \cup \sB_3$.
Then there is a consistent model for which $(Y<K_2) = (X<K_2)\cup (X>K_1, Y< K_2)$ and, if $x^\prime<K_2$, $(f^\prime<X<x^\prime)=(f^\prime<Y<x^\prime)$. Then any model with these properties with the stopping rule $\tau=1$ if $X<K_1$ and $\tau=2$ otherwise attains the highest consistent model price. The cheapest superhedge is generated from ${\psi}(x) = (K_2-x)^+$ and the highest model-based price is equal to the cost of the cheapest hedge.
\label{thm:bimodalBlack}
\end{thm}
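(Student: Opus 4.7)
The plan is to mimic Part~2 of Theorem~\ref{thm:dispersion}: exhibit a consistent model with the structural properties stated in the theorem, compute its expected payoff under the prescribed stopping rule $\tau = 1 + I_{\{X \geq K_1\}}$, construct the superhedge generated by $\psi(y) = (K_2-y)^+$ via Lemma~\ref{lem:phifrompsi}, and show that the two quantities agree. Weak duality will then certify optimality, and since both quantities depend on the model only through the structural properties, the conclusion will apply to any consistent model with those properties.

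For the model I take the left-curtain coupling $\pi_{lc}$ described in and after Assumption~\ref{ass:bimodal}. The first task is to verify the event identity $\{Y<K_2\} = \{X<K_2\}\cup\{X>K_1,Y<K_2\}$, which is equivalent to $\{X<K_2\} \subseteq \{Y<K_2\}$ together with $\{K_2 \leq X \leq K_1,\,Y<K_2\} = \emptyset$. This requires a case-by-case inspection of the image of $(-\infty,K_2)$ and of $[K_2,K_1]$ under $\pi_{lc}$, organised according to the three regions $\sB_1, \sB_2, \sB_3$, which determine whether $K_2 < f'$, $K_2 \in [f',x')$, or $K_2 \geq x'$; in each case the monotonicity of $g$ and the piecewise-monotone structure of $f$ listed after Assumption~\ref{ass:bimodal} pin down where each point goes. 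When $x' < K_2$, which can occur only in $\sB_3$, one additionally needs $\{f' < X < x'\} = \{f' < Y < x'\}$. This is automatic for $\pi_{lc}$: the mass on $(f',e^1_-)$ stays fixed, the mass on $(e^1_-,x')$ is split between $(f',e^1_-)$ and $(e^1_-,x')$ by $(f,g)$, no other time-1 mass lands in $(f',x')$ by the monotonicity bounds on $f,g$, and the total $(f',x')$-mass on both sides matches by~\eqref{eq:massmeanBimodal}.

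Given these properties, the model-based expected payoff decomposes as $MBEP = P_\mu(K_1) + \E[(K_2-Y)^+ I_{\{X \geq K_1\}}]$. Since $K_2 < K_1$, the event identity yields a disjoint decomposition $\{X \geq K_1, Y < K_2\} = \{Y<K_2\} \setminus \{X<K_2\}$, so the second term equals $\E[(K_2-Y)I_{\{Y<K_2\}}] - \E[(K_2-Y)I_{\{X<K_2\}}]$; since $\{X<K_2\} \in \sF_1$, the martingale property gives $\E[YI_{\{X<K_2\}}] = \E[XI_{\{X<K_2\}}]$, and the two terms collapse to $P_\nu(K_2) - P_\mu(K_2)$, whence $MBEP = P_\mu(K_1) + P_\nu(K_2) - P_\mu(K_2)$. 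On the hedging side Lemma~\ref{lem:phifrompsi} with $\psi(y) = (K_2-y)^+$ produces $\phi(x) = (K_1 - x\vee K_2)^+$, and splitting $\int \phi \, d\mu$ at $K_2$ gives $\int \phi \, d\mu = P_\mu(K_1) - P_\mu(K_2)$; hence $HC = P_\nu(K_2) + P_\mu(K_1) - P_\mu(K_2) = MBEP$, and weak duality closes the argument. I expect the main obstacle to be the verification in $\sB_3$: with $K_2 \geq x'$ the interval $(e^1_-,x')$ sits inside $\{X < K_2\}$ whereas under $\pi_{lc}$ it is mapped into $(f',x')$, so the implication $X < K_2 \Rightarrow Y < K_2$ hinges on the $(f',x')$-invariance guaranteed by~\eqref{eq:massmeanBimodal}; once this is in hand the remaining case analysis is essentially a repetition of Part~2 of Theorem~\ref{thm:dispersion}.
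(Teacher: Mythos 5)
Your proposal is correct and follows the paper's overall strategy: take the left-curtain coupling as the model, verify the structural event identities region by region using the piecewise monotone structure of $(f,g)$ listed after Assumption~\ref{ass:bimodal}, superhedge via Lemma~\ref{lem:phifrompsi} with $\psi(y)=(K_2-y)^+$, and close with weak duality. The one place you genuinely deviate is the computation of the model-based expected payoff. The paper identifies, separately on $\sB_1$, $\sB_2$ and $\sB_3$, the law of $Y$ on the unexercised in-the-money event as $(\nu-\mu)$ restricted to suitable sets, and in the $\sB_3$ case removes the contribution of $(f',x')$ via $\int_{f'}^{x'}(K_2-z)(\nu-\mu)(dz)=0$, which is where \eqref{eq:massmeanBimodal} enters the payoff calculation. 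You instead observe that the single event identity $\{Y<K_2\}=\{X<K_2\}\cup\{X>K_1,Y<K_2\}$, combined with $\{X<K_2\}\in\sF_1$ and the martingale (tower) property, already forces $\E[(K_2-Y)^+I_{\{X\geq K_1\}}]=P_\nu(K_2)-P_\mu(K_2)$, so the region-by-region work is needed only to verify that $\pi_{lc}$ satisfies the identity (up to the $\mu$-null set $\{X=K_1\}$, which Standing Assumption~\ref{sa:noatoms} handles). This buys a cleaner argument: it makes explicit that the second property $(f'<X<x')=(f'<Y<x')$ is not needed for the payoff identity itself but only to construct a consistent model realising the event identity when $K_2>x'$ (where keeping all mass below $K_2$ fixed is impossible), whereas the paper's version has the advantage of exhibiting exactly which mass is exercised in-the-money at time 2. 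Both computations give $MBEP=HC=P_\mu(K_1)+P_\nu(K_2)-P_\mu(K_2)$, so your route is a valid, slightly streamlined variant of the paper's proof.
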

\begin{proof}
Let $\psi(y) = (K_2 - y)^+$. Defining $\phi$ as in Lemma~\ref{lem:phifrompsi} we find $\phi(x) = (K_1-x)^+ - (K_2-x)^+$ and the superhedging cost (which is the same for all the cases) is
\[ HC = P_\nu(K_2) + P_\mu(K_1) - P_{\mu}(K_2). \]

Suppose $(K_1,K_2) \in \sB_1$. Then using the properties of $f$ and $g$ and the left-curtain coupling we see that the proof that the model-based expected payoff is equal to the hedging cost is the same as in the second case of Theorem \ref{thm:dispersion}. In particular,
\begin{align*}
MBEP=\int_{-\infty}^{K_1} (K_1 - x) \mu(dx) + \int_{-\infty}^{K_2} (K_2 - y) (\nu - \mu)(dy)= P_\mu(K_1) + P_\nu(K_2) - P_\mu(K_2).
\end{align*}

Now suppose $(K_1,K_2) \in \sB_2$. Then under the left-curtain coupling mass from $(f^\prime,x^\prime)$ at time $1$ is mapped to the same interval at time $2$. Therefore mass which is below $K_2$ at time $2$ was either below $K_2$ at time $1$, or above $x^\prime$ at time $1$. Therefore, we again have
\begin{align*}
MBEP=\int_{-\infty}^{K_1} (K_1 - x) \mu(dx) + \int_{-\infty}^{K_2} (K_2 - y) (\nu - \mu)(dy)= P_\mu(K_1) + P_\nu(K_2) - P_\mu(K_2).
\end{align*}

Finally, suppose $(K_1,K_2) \in \sB_3$.
We again utilise the fact that under the left-curtain coupling, mass from $(f^\prime,x^\prime)$ at time $1$ is mapped to the same interval at time $2$. In both cases, the mass which is below $K_2$ at time $2$ was either below $K_2$ at time $1$, or above $K_1$ at time $1$. In particular, mass that can be `exercised' at time $2$ is given by $(\nu-\mu)\lvert_{(-\infty,f^\prime)\cup(x^\prime,K_2)}$. Then using $\int^{x^\prime}_{f^\prime}(K_2-z)(\nu-\mu)(dz)=0$ we again have
\begin{align*}
MBEP &=\int_{-\infty}^{K_1} (K_1 - x) \mu(dx) + \int_{-\infty}^{f^\prime} (K_2 - y) (\nu - \mu)(dy) + \int_{x^\prime}^{K_2} (K_2 - y) (\nu - \mu)(dy)\\
&=\int_{-\infty}^{K_1} (K_1 - x) \mu(dx) + \int_{-\infty}^{K_2} (K_2 - y) (\nu - \mu)(dy),
\end{align*}
which ends the proof.
\end{proof}

\subsubsection{$(K_1,K_2) \in \sW$}
\label{sssec:W}

Suppose $(K_1,K_2) \in \sW$. For this case we associate the following superhedge:
let $\psi^{x^\prime}$ be given by
\begin{equation}
\psi^{x^\prime}(z) = \left\{ \begin{array}{ll} (K_2 - z)    &   z \leq f^\prime; \\
                                   (K_2-f^\prime)-(z-f^\prime)\frac{K_2-f^\prime}{x^\prime-f^\prime} & f^\prime < z \leq x^\prime; \\
                                    0 & z > x^\prime,
                                    \end{array} \right.
\label{eq:psiPrime0}
\end{equation}
see Figure \ref{fig:dotted}. Since $\psi^{x^\prime}$ is convex and $\psi^{x^\prime}(z)\geq (K_2-z)^+$, we can use Lemma \ref{lem:phifrompsi} to generate a corresponding superhedging strategy $(\psi^{x^\prime}, \phi^{x^\prime}, \theta_{1,2}^{x^\prime})$.

\begin{thm}
Suppose Assumption~\ref{ass:bimodal} holds and $(K_1,K_2)\in\sW$.\\
Then there is a consistent model for which $(f' <X<x') = (f'<Y<x')$ and any model with this property with the stopping rule $\tau=1$ if $X<K_1$ and $\tau=2$ otherwise attains the highest consistent model price. The cheapest superhedge is generated from ${\psi}^{x^\prime}$ defined in (\ref{eq:psiPrime0}) and the highest model-based price is equal to the cost of the cheapest hedge.
\label{thm:dotted}
\end{thm}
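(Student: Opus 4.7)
The plan is to invoke the weak--strong duality scheme of Section~\ref{ssec:strong=weak}: I will exhibit a consistent model together with the stated exercise rule and a superhedge whose common value certifies optimality of all three.

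For the model I would take $\hat{\pi}_{x^{\prime}}\in\hat{\Pi}_M(\mu,\nu)$ that splits into two martingale couplings, one transporting $\mu_{f^{\prime},x^{\prime}}$ to $\nu_{f^{\prime},x^{\prime}}$ and the other transporting $\tilde{\mu}_{f^{\prime},x^{\prime}}$ to $\tilde{\nu}_{f^{\prime},x^{\prime}}$; the mass--and--mean balance~\eqref{eq:massmeanBimodal} guarantees such a splitting exists, and the left-curtain coupling furnishes a concrete example because under Assumption~\ref{ass:bimodal} its structure keeps mass in $(f^{\prime},x^{\prime})$ within $(f^{\prime},x^{\prime})$. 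Let $M^*$ be the induced process on the canonical space, and set $\tau^*=1$ on $\{X<K_1\}$ and $\tau^*=2$ otherwise; by construction $\{f^{\prime}<X<x^{\prime}\}=\{f^{\prime}<Y<x^{\prime}\}$.

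The first key computation is the model-based expected payoff. Since $K_2<x^{\prime}\le K_1$, the event $\{f^{\prime}<Y<K_2\}\subseteq\{f^{\prime}<Y<x^{\prime}\}$ equals $\{f^{\prime}<X<x^{\prime}\}\subseteq\{X<K_1\}$, and so is exercised at time one and contributes nothing to the time-2 payoff. Hence $\{X\ge K_1,\,Y<K_2\}$ reduces (up to a null set) to $\{X\ge K_1,\,Y\le f^{\prime}\}$; and the left-curtain structure routes the time-2 mass below $f^{\prime}$ in excess of the time-1 mass there, namely $(\nu-\mu)|_{(-\infty,f^{\prime})}$, entirely from $\{X>x^{\prime\prime}\}\subseteq\{X\ge K_1\}$ (recall $K_1\le x^{\prime\prime}$). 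Therefore
\[
\mathrm{MBEP} = P_\mu(K_1) + \int_{-\infty}^{f^{\prime}}(K_2-y)(\nu-\mu)(dy) = P_\mu(K_1)+D(f^{\prime})+(K_2-f^{\prime})D'(f^{\prime}).
\]

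For the hedge, I would first check convexity of $\psi^{x^{\prime}}$ (which uses $K_2<x^{\prime}$) together with $\psi^{x^{\prime}}\ge(K_2-\cdot)^+$, and then match slopes at the two kinks to obtain the put decomposition
\[
\psi^{x^{\prime}}(z)=\frac{x^{\prime}-K_2}{x^{\prime}-f^{\prime}}(f^{\prime}-z)^+ + \frac{K_2-f^{\prime}}{x^{\prime}-f^{\prime}}(x^{\prime}-z)^+.
\]
A direct check shows $\psi^{x^{\prime}}(x)\le(K_1-x)^+$ on $\{x<K_1\}$, so Lemma~\ref{lem:phifrompsi} gives $\phi^{x^{\prime}}(x)=(K_1-x)^+-\psi^{x^{\prime}}(x)$ there (and $0$ elsewhere), producing
\[
\mathrm{HC}=P_\mu(K_1)+\frac{x^{\prime}-K_2}{x^{\prime}-f^{\prime}}D(f^{\prime})+\frac{K_2-f^{\prime}}{x^{\prime}-f^{\prime}}D(x^{\prime}).
\]

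The main obstacle is the final identity $\mathrm{HC}=\mathrm{MBEP}$, which algebraically reduces to $D(x^{\prime})-D(f^{\prime})=(x^{\prime}-f^{\prime})D'(f^{\prime})$. I would establish this by splitting the put difference as
\[
D(x^{\prime})-D(f^{\prime}) = (x^{\prime}-f^{\prime})D'(f^{\prime}) + \int_{f^{\prime}}^{x^{\prime}}(x^{\prime}-z)(\nu-\mu)(dz),
\]
and observing that the mass and mean equalities in~\eqref{eq:massmeanBimodal} together force the remaining integral to vanish. Combined with weak duality, this establishes the three-way equality and hence strong duality, together with the simultaneous optimality of $M^*$, $\tau^*$, and the hedge generated by $\psi^{x^{\prime}}$.
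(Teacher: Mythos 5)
Your proposal is correct and follows essentially the same route as the paper: a coupling that keeps the mass of $(f',x')$ inside $(f',x')$ (the left-curtain coupling being the concrete instance), the exercise rule $\tau=1$ iff $X<K_1$, the decomposition $\psi^{x'}(z)=\tfrac{x'-K_2}{x'-f'}(f'-z)^+ +\tfrac{K_2-f'}{x'-f'}(x'-z)^+$ fed into Lemma~\ref{lem:phifrompsi}, and the identity $D(x')-D(f')=(x'-f')D'(f')$ obtained from \eqref{eq:massmeanBimodal} via $\int_{f'}^{x'}(x'-z)(\nu-\mu)(dz)=0$, closed off by the weak-duality argument of Section~\ref{ssec:strong=weak}. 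Your slightly more explicit tracking of which mass ends up below $f'$ from $\{X\ge x''\}$ is just a spelled-out version of the paper's assertion about $(\nu-\mu)\lvert_{(-\infty,f')}$, so there is no substantive difference.
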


\begin{proof}
 First note that
\begin{equation*}
\psi^{x^\prime}(z)=\Theta(x^\prime-z)^+ +(1-\Theta)(f^\prime-z)^+,
\end{equation*}
where $\Theta = \frac{K_2 - f'}{x'-f'}$.
Since $x^\prime<K_1$ we have
\begin{eqnarray}
\phi^{x^\prime}(w)+\psi^{x^\prime}(z) & = &(K_1-w)^+-\psi^{x^\prime}(w)+\psi^{x^\prime}(z)\nonumber\\
& = & (K_1-w)^+ + \Theta[(x^\prime-z)^+ - (x^\prime-w)^+] +(1-\Theta)[(f^\prime-z)^+-(f^\prime-w)^+] \nonumber
\end{eqnarray}
and the cost of this strategy (under any consistent model) is $HC=P_\mu(K_1)  +\Theta D(x^\prime)+(1-\Theta)D(f^\prime)$.

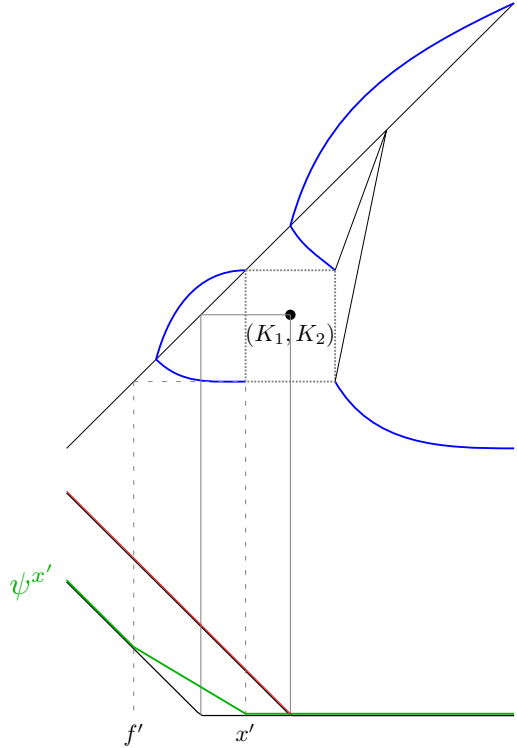
\begin{figure}[H]

\centering
\resizebox{7 cm}{10cm}{
\begin{tikzpicture}[scale=1,
declare function={	
diag(\x)=\x;
    	f=1.5;
	e1=2;
	g=4;
	e2=5;
	f2=6;
	s=8.5;
	k1=5;
	k2=3;
	a(\x)=(k1-\x)*(\x<k1)-6;
	b(\x)=(k2-\x)*(\x<k2)-6;
}]

           \draw[name path=diag,black] (0,0) -- (10,10);


          \coordinate (g) at (g, {diag(g)});
           \coordinate (e1) at (e1, {diag(e1)});
            \coordinate (e2) at (e2, {diag(e2)});
             \coordinate (s) at (s, {diag(s)});

             \coordinate (a) at (g, {diag(f)});
              \coordinate (b) at (f2, {diag(g)});
               \coordinate (c) at (f2, {diag(f)});

             \draw[very thick, blue, name path=tu1] (e1) to[out=75,in=180] (g);
             \draw[very thick, blue, name path=td1] (e1) to[out=315,in=180] (a);
              \draw[very thick, blue, name path=tu2] (e2) to[out=75,in=205] (10,{diag(10)});
              \draw[very thick, blue, name path=td2] (e2) to[out=300,in=140] (b);
              \draw[very thick, blue, name path=td3,] (c) to[out=300,in=180] (10,{diag(0)});

		 \draw [gray, densely dotted, very thick] (g) -- (b) -- (c) -- (a) -- (g);

                   \path[name path=base] (0,0) to (10,0);

                  \path [name path=lineA](b) -- (f2,10);
		\path [gray, very thin, name intersections={of=lineA and tu2}] (b) -- (intersection-1)%
		\pgfextra{\pgfgetlastxy{\macrox}{\macroy} \global\let\macroy\macroy};
         \coordinate (z) at (intersection-1);
         \path [name path=lineB](intersection-1) -- (10,\macroy);
         \path [gray, very thin, name intersections={of=lineB and diag}] (z) -- (intersection-1);
		
		\draw [black] (f2,g) -- (intersection-1) -- (f2,f) ;
		
		\node (k1k2)[scale=0.7, shape=circle, fill] at (k1,k2) {} ;
\node (name)[below,scale=1.5] at (k1,k2) {$(K_1,K_2)$} ;
\node () [below,scale=1.5] at (f,{b(g)}) {$f^\prime$} ;
\node () [below,scale=1.5] at (g,{b(g)}) {$x^\prime$} ;


	\draw[name path=a, black, thick] plot[domain=0:10, samples=100] (\x,{a(\x)}) ;
	\draw[name path=b, black, thick] plot[domain=0:10, samples=100] (\x,{b(\x)}) ;

                          \draw [gray, very thin, loosely dashed] (f,f) -- (a) ;
                        \draw [gray, very thin, loosely dashed] (f,f) -- (f,{b(g)}) ;
                       \draw [gray, very thin] (k2,k2) -- (k2,{b(k2)}) ;
                            \draw [gray, very thin] (k2,k2) -- (k1,k2) ;
                               \draw [gray, very thin] (k1,k2) -- (k1,{a(k1)}) ;
                                \draw [gray, very thin, loosely dashed] (g,{b(g)})--(a);

	\draw[black!30!green, very thick] plot[domain=0:f, samples=100] (\x,{b(\x)+0.04});
	\draw[black!30!green, very thick] plot[domain=g:10, samples=100] (\x,{b(\x)+0.04});
	\draw [black!30!green, very thick] (f,{b(f)+0.04}) -- (g,{b(g)+0.04}) ;
	
	\draw[red] plot[domain=0:k1, samples=100] (\x,{a(\x)+0.04});
	\node (hedge)[left,black!30!green,scale=2] at (0,{b(0)}) {$\psi^{x^\prime}$} ;

\end{tikzpicture}
}
\caption{Picture of $f$ and $g$ along with superhedge for the blank region $\sW$. }
\label{fig:dotted}
\end{figure}

Now consider the model-based expected payoff. From (\ref{eq:massmeanBimodal}) it follows that $\mu_{{f^\prime,x^\prime}}$ and $\nu_{{f^\prime,x^\prime}}$ have the same mean and mass, and are in convex order. Moreover, the same holds for $\tilde{\mu}_{{f^\prime,x^\prime}}$ and $\tilde{\nu}_{{f^\prime,x^\prime}}$. Hence there exists a martingale coupling, which we term $\hat{\pi}^{x^\prime}\in\hat{\Pi}_M(\mu,\nu)$, which maps the mass in $(f^\prime,x^\prime)$ at time $1$ to the same interval at time $2$. Under this model the only mass that can be `exercised' at time $2$ is therefore given by $(\nu-\mu)\lvert_{(-\infty,f^\prime)}$.

Note that, since $f^\prime$ and $x^\prime$ satisfy (\ref{eq:massmeanBimodal}), and hence $\int_{f'}^{x'} (x'-w) (\nu - \mu)(dw)=0$,
\begin{eqnarray}
D(x^\prime)-D(f^\prime) & = & \int_{-\infty}^{x^\prime} (x^\prime-w)^+ (\nu-\mu)(dw) - \int_{-\infty}^{f^\prime} (f^\prime - w)^+ (\nu - \mu)(dw)\nonumber\\
& = & \int_{-\infty}^{f^\prime} (x^\prime-f^\prime) (\nu-\mu)(dw) + \int_{f^\prime}^{x^\prime} (x^\prime - w) (\nu-\mu)(dw)\nonumber\\
& = & (x^\prime-f^\prime) \int_{-\infty}^{f^\prime}  (\nu-\mu)(dw). \nonumber
\end{eqnarray}
Then given that we stop at time $1$ if $X<K_1$ and at time $2$ otherwise we have
\begin{align*}
MBEP&= \int_{-\infty}^{K_1} (K_1-w)^+ \mu(dw) + \int_{-\infty}^{f^\prime} (K_2 - w)^+ (\nu - \mu)(dw)\\
& = \int_{-\infty}^{K_1} (K_1-w) \mu(dw) + \int_{-\infty}^{f^\prime} (f^\prime - w) (\nu - \mu)(dw) + (K_2- f^\prime) \int_{-\infty}^{f^\prime} (\nu-\mu)(dw)  \\
& =  P_\mu(K_1) +D(f^\prime)+ \Theta[D(x^\prime)-D(f^\prime)] \\
& = P_\mu(K_1) + \Theta D(x^\prime)+(1-\Theta)D(f^\prime) = HC
\end{align*}
as required.
\end{proof}

\subsubsection{$(K_1,K_2) \in \sG$}
\label{sssec:G}

Recall the construction of $L_u$ and $L_d$. For $K_1\in(x^{\prime\prime},g(x^{\prime\prime}))$ and $K_2\in(L_d(K_1),L_u(K_1))$ there does not exist $x^*\in(g^{-1}(K_1),K_1)$ such that $\Lambda(x^*)=0$; instead $\Lambda(x''-) < 0 < \Lambda(x''+)$.  On the other hand, from (\ref{eq:massmeanx''}) we have that there exists a martingale coupling of $\mu_{{x^\prime,x^{\prime\prime}}}$ and $\nu_{{x^{\prime},g(x^{\prime\prime})}}$. Moreover, note that the restrictions of $\tilde{\mu}_{{f^\prime,x^\prime}}$ to $(x^\prime,x^{\prime\prime})$ and $\tilde{\nu}_{{f^\prime,x^\prime}}$ to $(x^\prime,g(x^{\prime\prime}))$ are equal to $\mu_{{x^\prime,x^{\prime\prime}}}$ and $\nu_{{x^{\prime},g(x^{\prime\prime})}}$, respectively. Then we define a martingale coupling $\hat{\pi}^{x^\prime,x^{\prime\prime}} \in \hat{\Pi}(\mu,\nu)$ by combining the couplings of $\mu_{{f^\prime,x^\prime}} \mapsto \nu_{{f^\prime,x^\prime}}$, $\mu_{{x^\prime,x^{\prime\prime}}} \mapsto \nu_{{x^\prime,g(x^{\prime\prime})}}$ and $(\tilde{\mu}_{{f^\prime,x^\prime}}-\mu_{{x^\prime,x^{\prime\prime}}}) \mapsto(\tilde{\nu}_{{f^\prime,x^\prime}}-\nu_{{x^\prime,g(x^{\prime\prime})}})$. In other words, $\hat{\pi}^{x^\prime,x^{\prime\prime}}$ maps $({f^\prime,x^{\prime}})$ to $({f^\prime,x^{\prime}})$, $({x^\prime,x^{\prime\prime}})$ to $({x^\prime,g(x^{\prime\prime})})$ and $(({f^\prime,x^{\prime}})\cup ({x^\prime,x^{\prime\prime}}))^C$ to $(({f^\prime,x^{\prime}})\cup ({x^\prime,g(x^{\prime\prime})}))^C$. Let $M^{x^\prime,x^{\prime\prime}}=(M^{x^\prime,x^{\prime\prime}}_0=\bar{\mu},M^{x^\prime,x^{\prime\prime}}_1=X,M^{x^\prime,x^{\prime\prime}}_2=Y)$ be the stochastic process such that $\Prob(X \in dx, Y \in dy) = \hat{\pi}^{x^\prime,x^{\prime\prime}}(dx,dy)$.
Then $M^{x^\prime,x^{\prime\prime}} \in \sM(\sS^*, \mu, \nu)$.  Note that the model $M^{x^\prime,x^{\prime\prime}}$ is a refinement of $M^{x^\prime}$ used in the proof of Theorem \ref{thm:dotted}.

Given $x^\prime$, and thus also $x^{\prime\prime}$, we define the superhedge as follows. First define linear functions $\Delta_1:[f^\prime,x^\prime]\to\mathbb{R}$ and $\Delta_2:[x^\prime,g(x^{\prime\prime})]\to\mathbb{R}$ by
\begin{equation*}
\Delta_1(x)=(K_2-f^\prime)-(x-f^\prime)\frac{(K_2-f^\prime)-\Delta_2(x^\prime)}{x^\prime-f^\prime};
\hspace{10mm}
\Delta_2(x)=(g(x^{\prime\prime})-x)\frac{K_1-x^{\prime\prime}}{g(x^{\prime\prime})-x^{\prime\prime}}
\end{equation*}
and note that $\Delta_1(f^\prime)=(K_2-f^\prime)$, $\Delta_1(x^\prime)=\Delta_2(x^\prime)$, $\Delta_2(x'')=K_1-x''$ and $\Delta_2(g(x^{\prime\prime}))=0$. Moreover, direct calculation shows that $-1<\Delta^\prime_1(x)<\Delta^\prime_2(x)<0$. Now define a function $\psi^{x^\prime,x^{\prime\prime}}$ by
\begin{equation}
\psi^{x^{\prime},x^{\prime\prime}}(z) = \left\{ \begin{array}{ll} (K_2 - z)    &   z \leq f^\prime; \\
                                   \Delta_1(z) & f^\prime < z \leq x^\prime; \\
                                    \Delta_2(z) & x^\prime < z \leq g(x^{\prime\prime});\\
                                    0 & z > g(x^{\prime\prime}).
                                    \end{array} \right.
\label{eq:psiPrime}
\end{equation}

\begin{figure}[H]
\centering
\resizebox{8 cm}{11cm}{
\begin{tikzpicture}[ dot/.style={circle,inner sep=1pt, fill, label={#1},name#1},
extended line/.style={shorten >=-#1, shorten <=-#1},
extended line/.default=3cm,
declare function={	
diag(\x)=\x;
    	f=1.5;
	e1=2;
	g=4;
	e2=5;
	f2=6;
	s=8.5;
	k1=6.37;
	k2=4.5;
	a(\x)=(k1-\x)*(\x<k1)-8;
	b(\x)=(k2-\x)*(\x<k2)-8;
}]

           \draw[name path=diag,black] (0,0) -- (10,10);


          \coordinate (g) at (g, {diag(g)});
           \coordinate (e1) at (e1, {diag(e1)});
            \coordinate (e2) at (e2, {diag(e2)});
             \coordinate (s) at (s, {diag(s)});

             \coordinate (a) at (g, {diag(f)});
              \coordinate (b) at (f2, {diag(g)});
               \coordinate (c) at (f2, {diag(f)});

             \draw[very thick, blue, name path=tu1] (e1) to[out=75,in=180] (g);
             \draw[very thick, blue, name path=td1] (e1) to[out=315,in=180] (a);
              \draw[very thick, blue, name path=tu2] (e2) to[out=75,in=205] (10,{diag(10)});
              \draw[very thick, blue, name path=td2] (e2) to[out=300,in=140] (b);
              \draw[very thick, blue, name path=td3,] (c) to[out=300,in=180] (10,{diag(0)});

		 \draw [gray, densely dotted, very thick] (g) -- (b) -- (c) -- (a) -- (g);

                   \path[name path=base] (0,0) to (10,0);

                  \path [name path=lineA](b) -- (f2,10);
		\draw [gray, very thin, loosely dashed, name intersections={of=lineA and tu2}] (b) -- (intersection-1)%
		\pgfextra{\pgfgetlastxy{\macrox}{\macroy} \global\let\macroy\macroy};
         \coordinate (z) at (intersection-1);
         \path [name path=lineA](intersection-1) -- (10,\macroy);
         \draw [gray, very thin, loosely dashed, name intersections={of=lineA and diag}] (z) -- (intersection-1);
		\draw [black!30!green,very thick, pattern=dots, pattern color=black!30!green] (f2,g) -- (intersection-1) -- (f2,f) ;
		
		\node (k1k2)[scale=0.6, shape=circle, fill] at (k1,k2) {} ;
\node (name)[below,scale=1.5] at (k1+2.5,k2+2) {$(K_1,K_2)$} ;
\draw[-latex', black, thick] (k1+2,k2+1) to[out=270,in=0] (k1+0.5,k2);


	\draw[name path=a, black, thick] plot[domain=0:10, samples=100] (\x,{a(\x)}) ;
	\draw[name path=b, black, thick] plot[domain=0:10, samples=100] (\x,{b(\x)}) ;

                          \draw [gray, very thin, loosely dashed] (f,f) -- (a) ;
                        \draw [gray, very thin, loosely dashed] (f,f) -- (f,{b(f)}) ;
                       \draw [gray, very thin] (k2,k2) -- (k2,{b(k2)}) ;
                            \draw [gray, very thin] (k2,k2) -- (k1,k2) ;
                               \draw [gray, very thin] (k1,k2) -- (k1,{a(k1)}) ;
                                \draw [gray, very thin, loosely dashed] (a) -- (g,{b(20)-0.5}) ;
                                  \draw [gray, very thin, loosely dashed] (c) -- (f2,{a(10)-0.5}) ;
                                  \gettikzxy{(intersection-1)}{\m}{\n}
                                  \draw [gray, very thin, loosely dashed] (intersection-1) -- (\m,{a(\m)-0.5}) ;
                                  \node (n1aaa)[below,scale=1.5] at (\m,{a(\m)-0.5}) {$g(x^{\prime\prime})$} ;
                                  \node (n1aaaz)[below,scale=1.5] at (f2,{a(10)-0.5}) {$x^{\prime\prime}$} ;
                                  \coordinate (temp) at (intersection-1);
		\draw [gray, loosely dashed, very thin] (f,{b(f)}) -- (f,{a(10)-0.5}) ;
		\node (n1)[below,scale=1.5] at (f,{a(10)-0.5}) {$f^\prime$} ;
		\node (hedge)[left,black!30!green,scale=2] at (0,{b(0)}) {$\psi^{x^\prime,x^{\prime\prime}}$} ;
		\node (n2)[below,scale=1.5] at (g,{a(10)-0.5}) {$x^\prime=g^\prime$} ;
		
	\draw[red] plot[domain=0:f2, samples=100] (\x,{a(\x)+0.04});

       	
	\draw[black!30!green] plot[domain=0:f, samples=100] (\x,{b(\x)+0.02});
	\draw [black!30!green] (0,{b(0)+0.04}) -- (f,{b(f)+0.02}) ;
	\gettikzxy{(temp)}{\ax}{\ay}
	\gettikzxy{(g)}{\u}{\i}

	\def\ra{( a(f2)-a(\ax) ) *((\ax-f2)^(-1))};
	\draw [black!30!green,very thick]  (\ax,{a(\ax)}) -- (\u, {(  550*\ra +a(\ax)+0.06});
	\draw [black!30!green,very thick]  (f,{b(f)}) -- (\u, {(  550*\ra +a(\ax)+0.06});

	\draw [black!30!green,very thick] (\m,{a(\m)+0.04})-- (10,{a(10)+0.04});
	
	\path [name path=lineG](temp) -- (10,\macroy);

\end{tikzpicture}
}
\caption{Picture of $f$ and $g$ along with superhedge for the dotted region $\sG$. The hedge function $\psi^{x',x''}$ has a kink at $x'$.}
\label{fig:greenRegion}
\end{figure}

By construction $\psi^{x^\prime,x^{\prime\prime}}$ is convex and $\psi^{x^\prime,x^{\prime\prime}}(z)\geq (K_2-z)^+$ (see Figure \ref{fig:greenRegion}), and thus by Lemma~\ref{lem:phifrompsi} it can be used to construct a superhedge $(\psi^{x^\prime,x^{\prime\prime}}, \phi^{x^\prime,x^{\prime\prime}},\theta_{1,2}^{x^\prime,x^{\prime\prime}})$.

\begin{thm}
Suppose Assumption~\ref{ass:bimodal} holds and $(K_1,K_2) \in \sG$.
The model $M^{x^\prime,x^{\prime\prime}}$ and the stopping time $\tau=1$ if $X<x^{\prime\prime}$ and $\tau=2$ otherwise attains the highest consistent model price. Moreover, $\psi^{x^\prime,x^{\prime\prime}}$ defined in \eqref{eq:psiPrime} generates the cheapest superhedge and the highest model-based price is equal to the cost of the cheapest superhedge.

\label{thm:bimodal}
\end{thm}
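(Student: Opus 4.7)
The plan follows the strategy used in the proofs of Theorems~\ref{thm:dispersion},~\ref{thm:bimodalRed} and~\ref{thm:dotted}: I would compute the model-based expected payoff (MBEP) under the model $M^{x',x''}$ with the stopping rule $\tau$, compute the hedging cost (HC) of the superhedge generated by $\psi^{x',x''}$, and show $MBEP = HC$. Together with the weak duality argument of Section~\ref{ssec:strong=weak}, this immediately yields optimality of the model, the stopping rule and the hedge.

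First I would check that $\phi^{x',x''}$ obtained from Lemma~\ref{lem:phifrompsi} equals $(K_1-x)-\psi^{x',x''}(x)$ on $(-\infty,x'']$ and vanishes on $(x'',\infty)$. This uses $\psi^{x',x''}(x'')=\Delta_2(x'')=K_1-x''$ together with the fact that each slope of $\psi^{x',x''}$ is strictly larger than $-1$, so that $(K_1-x)-\psi^{x',x''}(x)$ decreases monotonically from $K_1-K_2>0$ at $-\infty$ and hits zero exactly at $x''$. Next, decompose $\psi^{x',x''}$ as a convex combination of put payoffs at strikes $f',x',g(x'')$ with weights $a_1=1+\Delta_1'$, $a_2=\Delta_2'-\Delta_1'$, $a_3=-\Delta_2'$ (read off from the jumps in the slope); these satisfy $a_1+a_2+a_3=1$ and, crucially, $a_3(g(x'')-x'')=K_1-x''$. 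Integration against $\nu$ gives $\int\psi^{x',x''}d\nu=a_1P_\nu(f')+a_2P_\nu(x')+a_3P_\nu(g(x''))$, and a direct calculation of $\int_{(-\infty,x'']}[(K_1-x)-\psi^{x',x''}(x)]\mu(dx)$, in which the $(K_1-x'')P'_\mu(x''-)$ contribution cancels using the identity $a_3(g(x'')-x'')=K_1-x''$, yields
\[
HC = a_1 D(f')+a_2 D(x')+a_3 P_\nu(g(x''))+(1-a_3)P_\mu(x'').
\]

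For MBEP, the time-$1$ contribution is $\int_{-\infty}^{x''}(K_1-x)\mu(dx)=P_\mu(x'')+(K_1-x'')P'_\mu(x''-)$. For the time-$2$ contribution I exploit the structural properties of $\hat{\pi}^{x',x''}$: it maps $(f',x')$ to $(f',x')$ and $(x',x'')$ to $(x',g(x''))$ with the correct marginals, and I would specify the outside sub-coupling so that $X=Y$ on $(-\infty,f']$, which is feasible since $f'<e^1_-$ forces $\mu\leq\nu$ on that interval. Because $K_2<K_1<g(x'')$, the set $\{Y<K_2\}$ intersected with $\{X\geq x''\}$ lies in $(-\infty,f']$, and the decomposition $\E[(K_2-Y)^+]=\E[(K_2-Y)^+I_{X<x''}]+\E[(K_2-Y)^+I_{X\geq x''}]$ reduces, after cancellations that work uniformly in whether $K_2<x'$ or $K_2\geq x'$, to $\E[(K_2-Y)^+I_{X\geq x''}]=D(f')+(K_2-f')D'(f')$. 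Hence $MBEP=P_\mu(x'')+(K_1-x'')P'_\mu(x''-)+D(f')+(K_2-f')D'(f')$.

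The final step is the algebraic identity $HC=MBEP$. From \eqref{eq:massmeanBimodal} I obtain $D'(x')=D'(f')$ and $D(x')=D(f')+(x'-f')D'(f')$, and from \eqref{eq:massmeanx''} I obtain $P'_\nu(g(x''))=P'_\mu(x'')+D'(f')$ together with
\[
P_\nu(g(x''))=P_\mu(x'')+(g(x'')-x'')P'_\mu(x'')+(g(x'')-x')D'(f')+D(x').
\]
Substituting these into $HC$, the coefficient of $D(f')$ collapses to $a_1+a_2+a_3=1$, while the coefficient of $D'(f')$ collapses, via the telescoping $-\Delta_1'(x'-f')=(K_2-f')-\Delta_2(x')$ and $-\Delta_2'(g(x'')-x')=\Delta_2(x')$, to $K_2-f'$; the $P_\mu(x'')$ and $(K_1-x'')P'_\mu(x''-)$ terms match automatically. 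The main obstacle is precisely this algebraic collapse: the extra kink of $\psi^{x',x''}$ at $x'$ (absent in the two-kink hedges of Theorems~\ref{thm:dispersion}--\ref{thm:bimodalRed}) introduces a spurious $D(x')$ term in HC whose elimination requires invoking both \eqref{eq:massmeanBimodal} and \eqref{eq:massmeanx''} simultaneously; a secondary subtlety is choosing the outside component of $\hat{\pi}^{x',x''}$ so that $X=Y$ on $(-\infty,f']$, which is what makes the MBEP computation independent of any unspecified details of the outside coupling.
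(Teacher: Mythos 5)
Your proposal is correct and follows essentially the same route as the paper: compute the model-based expected payoff under $M^{x^\prime,x^{\prime\prime}}$ with exercise at time 1 on $\{X<x^{\prime\prime}\}$, compute the cost of the superhedge generated by $\psi^{x^\prime,x^{\prime\prime}}$ written as a combination of puts at strikes $f^\prime$, $x^\prime$, $g(x^{\prime\prime})$ (your weights $a_1,a_2,a_3$ are exactly the paper's $1-\Theta_1$, $\Theta_1-\Theta_2$, $\Theta_2$), and use \eqref{eq:massmeanBimodal} and \eqref{eq:massmeanx''} to show $HC=MBEP$, concluding via the weak-duality argument of Section~\ref{ssec:strong=weak}. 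Your explicit choice of the outside component of $\hat{\pi}^{x^\prime,x^{\prime\prime}}$ so that $X=Y$ on $(-\infty,f^\prime]$ is what the paper uses implicitly, and your only slip --- the slope of $\psi^{x^\prime,x^{\prime\prime}}$ equals $-1$ (rather than strictly exceeds it) on $(-\infty,f^\prime]$ --- is immaterial, since there $(K_1-z)-\psi^{x^\prime,x^{\prime\prime}}(z)=K_1-K_2>0$ in any case.
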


\begin{proof}
Under the candidate model $M^{x^\prime,x^{\prime\prime}}$ mass in $(f^\prime,x^\prime)$ at time 1 is mapped to the same interval at time 2, while the mass in $(x^\prime,x^{\prime\prime})$ is mapped to $(x^\prime,g(x^{\prime\prime}))$. Then under the candidate stopping time (exercise at time $1$ if $X<x^{\prime\prime}$ and at time $2$ otherwise) the law of $Y$ (under $M^{x^\prime,x^{\prime\prime}}$) on the event that the option was not exercised at time $1$ is given by $(\nu-\mu)\lvert_{(-\infty,f^\prime)} + \nu\lvert_{(g(x^{\prime\prime}),\infty)}$. Therefore
\begin{align*}
MBEP&= \int_{-\infty}^{x^{\prime\prime}} (K_1-w)^+ \mu(dw) + \int_{-\infty}^{f^\prime} (K_2 - w)^+ (\nu - \mu)(dw)\\
& =  P_\mu(x^{\prime\prime}) +(K_1-x^{\prime\prime})P_{\mu}^\prime(x^{\prime\prime})+ D(f^\prime) +(K_2-f^\prime)D^\prime(f^\prime).
\end{align*}

Now consider the hedging cost generated by $\psi^{x^\prime,x^{\prime\prime}}$. Let $\Theta_1=\frac{K_2-f^\prime-\Delta_2(x^\prime)}{x^\prime-f^\prime}= - \Delta_1'$ and $\Theta_2=\frac{K_1-x^{\prime\prime}}{g(x^{\prime\prime})-x^{\prime\prime}}= - \Delta_2'$. Note that we can rewrite (\ref{eq:psiPrime}) as
$\psi^{x^\prime,x^{\prime\prime}}(z)=\Theta_2(g(x^{\prime\prime})-z)^+ + (\Theta_1-\Theta_2)(x^\prime-z)^+ +(1-\Theta_1)(f^\prime-z)^+$ .
Then
\begin{equation*}
\phi(z)=(1-\Theta_1)[(x^{\prime}-z)^+-(f^\prime-z)^+] +(1-\Theta_2)[(x^{\prime\prime}-z)^+-(x^\prime-z)^+] ,
\end{equation*}
and thus the hedging cost is
\begin{eqnarray*}
HC & = &\Theta_2P_\nu(g(x^{\prime\prime})) + (1-\Theta_1)D(f^\prime) + (1-\Theta_2)P_\mu(x^{\prime\prime}) + (\Theta_1-\Theta_2)D(x^\prime)\\
& = & P_\mu(x^{\prime\prime}) + D(f^\prime) +\Theta_1[D(x^\prime)-D(f^\prime)]+\Theta_2[P_\nu(g(x^{\prime\prime}))-P_\nu(x^\prime)-P_\mu(x^{\prime\prime})+P_\mu(x^\prime)].
\end{eqnarray*}
Now using (\ref{eq:massmeanBimodal}) and the fact that $g(x')=x'$ we have that $D'(f')=D'(x')$ and $f'D'(f') - D(f') = x'D'(x')- D(x')$. Hence
\begin{equation}\label{eq:temp1}
\Theta_1[D(x^\prime)-D(f^\prime)]=(K_2-f^\prime)D^\prime(f^\prime)-\Delta_2(x^\prime)D^\prime(f^\prime);
\end{equation}
moreover (\ref{eq:massmeanx''}) gives that
\begin{align}
\Theta_2[P_\nu(g(x^{\prime\prime})) & - P_\nu(x^\prime)-P_\mu(x^{\prime\prime})+P_\mu(x^\prime)]\nonumber\\
& = \Theta_2[ g(x'') P'_\nu(g(x'')) - x'' P'_{\mu}(x'') - x' D'(x')] \nonumber \\
& = \Theta_2[ (g(x'') - x'') P'_\mu(x'') + (g(x'') - x')D'(x')]
\nonumber\\
&=(K_1-x^{\prime\prime})P^\prime_\mu(x^{\prime\prime})+\Delta_2(x^\prime)D^\prime(f^\prime).\label{eq:temp2}
\end{align}
Then, combining (\ref{eq:temp1}) and (\ref{eq:temp2}) we conclude that $HC=MBEP$.
\end{proof}

\subsubsection{$K_1 > r_\mu$}
\label{sssec:K1>rmu}

In Lemma~\ref{lem:LCdispersion}, and under the Dispersion Assumption, we constructed $f$ and $g$ but only on the interval $(e_-,r_\mu]$. More generally, under Standing Assumption~\ref{sa:noatoms} the arguments of Beiglb\"{o}ck and Juillet~\cite{BeiglbockJuillet:16} and Henry-Labord\`{e}re and Touzi~\cite{HenryLabordereTouzi:16} allow us to construct $T_d=f$ and $T_u=g$ on $[\ell_\mu,r_\mu]$ for arbitrary laws $\mu \leq_{cx} \nu$. For their purposes the definitions of $f$ and $g$ outside the range of $\mu$ are not important since they have no impact on the construction of the left-curtain martingale coupling.

Nonetheless, we can extend the definitions of $f$ and $g$ to $\R$ in a way which respects the conditions in Lemma~\ref{lem:LC}, by setting
\[ \begin{array}{ccl}
f(x)=x=g(x), & \hspace{10mm} & -\infty< x \leq \ell_\mu;   \\
f(x) = \ell_\nu, \hspace{7mm} g(x)=r_\nu, && r_\mu < x<r_\nu; \\
f(x)=x=g(x), && r_\nu \leq x < \infty.
\end{array}
\]
We will show that with these definitions for $f$ and $g$ the analysis of the previous sections extends to the case $K_1> r_\mu$.

Suppose $r_\nu>r_\mu$ and $r_\mu < K_1 < r_\nu$. Then $\Lambda(r_\mu) = \frac{r_\nu - K_1}{r_\nu - r_\mu} - \frac{(K_1-K_2)}{r_\mu - \ell_\nu}$ and $\Lambda(r_\nu-)=\infty$. If $\Lambda(r_\mu) \geq 0$ and $\Lambda$ is continuous then there exists $x^* \in [\ell_\mu,r_\mu]$ such that $g(x^*)>x^*$ and $\Lambda(x^*)=0$. Then, exactly as in Section~\ref{ssec:amput} we can construct a model, stopping time and superhedge such that the model-based expected payoff equals the hedging cost, and hence the model, stopping time and hedge are all optimal. The model could be based on the left-curtain coupling, and the optimal exercise rule is to exercise the American put at time 1 if $X < x^*$. Even if $\Lambda$ is not continuous, there may exist $x^*$ such that $\Lambda(x^*)=0$ and the same arguments apply (see Section~\ref{sssec:R}). If not, then we are in the setting of Section~\ref{sssec:G}, but again we can identify the optimal model and hedge. Essentially, the case $\Lambda(r_\mu) \geq 0$ is covered by a direct extension of existing arguments. Note that $\Lambda(r_\mu) \geq 0$ is equivalent to
\[ K_2 \geq K_1 - \frac{(r_\mu - \ell_\nu)(r_\nu - K_1)}{r_\nu - r_\mu}. \]

Now suppose $r_\mu < K_1 < r_\nu$ and $K_2 < K_1 - \frac{(r_\mu - \ell_\nu)(r_\nu - K_1)}{r_\nu - r_\mu}$. Then $\Lambda(r_\mu)<0$ and since $\Lambda(r_\nu-)=\infty$ and $\Lambda$ is continuous on $[r_\mu,r_\nu]$ (note that we have defined $f$ and $g$ to be constants on this range) there must exist $x^* \in (r_\mu,K_1)$ such that $\Lambda(x^*)=0$. It is always optimal to exercise at time 1 and any martingale coupling can be used to generate a model which attains the highest model based price of $P_\mu(K_1)=(K_1 - \overline{\mu})$. A cheapest superhedge is generated by
\begin{equation}
\label{eq:psidefK1}
 \psi(y) = \frac{K_2-\ell_\nu}{r_\nu-\ell_\nu} (r_\nu - y)^+ + \frac{r_\nu - K_2}{r_\nu-\ell_\nu} (\ell_\nu - y)^+.
\end{equation}
The cost of this hedge is
\begin{eqnarray*}
\lefteqn{ \frac{K_2-\ell_\nu}{r_\nu-\ell_\nu} P_\nu(r_\nu) + \frac{r_\nu - K_2}{r_\nu-\ell_\nu} P_\nu(\ell_\nu) + P_\mu(K_1)
         - \frac{K_2-\ell_\nu}{r_\nu-\ell_\nu} P_\mu(r_\nu) - \frac{r_\nu - K_2}{r_\nu-\ell_\nu} P_\mu(\ell_\nu) } \\
         & = & \frac{K_2-\ell_\nu}{r_\nu-\ell_\nu} (r_\nu - \bar{\mu})+ (K_1 - \bar{\mu}) - \frac{K_2-\ell_\nu}{r_\nu-\ell_\nu}(r_\nu - \bar{\mu}) = (K_1 - \bar{\mu}).
         \end{eqnarray*}


Finally suppose $K_1>r_\nu$. Then $X<K_1$ almost surely under any consistent model and for $X<K_1$
\[ \E[(K_2-Y)^+|X] < E[(K_1-Y)^+|X] = (K_1-X).  \]
It is always optimal to exercise the American put at time 1. If $K_2> r_\nu$ or $K_2< \ell_\nu$ then we are in the case studied in Section~\ref{sssec:B} and a cheapest hedge is generated by a time 2 payoff $\psi(y) = (K_2-y)^+$. If $K_2 \in [\ell_\nu, r_\nu]$ then we are in the case studied in Section~\ref{sssec:W} and a cheapest superhedge is generated by $\psi =\psi(y)$ where $\psi$ is given by \eqref{eq:psidefK1}. In either case
the highest model-based expected payoff is $P_\mu(K_1)= (K_1 - \bar{\mu})$ and this is also the cost of the superhedge.

\begin{figure}[H]
\centering
\begin{tikzpicture}[
declare function={	
diag(\x)=\x;
    	f=1.5;
	e1=2;
	g=4;
	e2=5;
	f2=6;
	s=8.5;
}]
\begin{axis}[axis lines=middle,
	ytick=\empty,
            xtick=\empty,
             xmin=-.1, xmax=10,
             ymin=0, ymax=10,
             yticklabels={},
            xticklabels={},
            axis line style={draw=none}, clip=false]

           \addplot[name path=diag,black,domain={0:10}] {diag(\x)};

                \path[name path=base] (0,0) to (10,0);


          \coordinate (g) at (g, {diag(g)});
           \coordinate (e1) at (e1, {diag(e1)});
            \coordinate (e2) at (e2, {diag(e2)});
             \coordinate (s) at (s, {diag(s)});

             \coordinate (a) at (g, {diag(f)});
              \coordinate (b) at (f2, {diag(g)});
               \coordinate (c) at (f2, {diag(f)});

             \draw[thick, blue, name path=tu1] (e1) to[out=75,in=180] (g);
             \draw[thick, blue, name path=td1] (e1) to[out=315,in=180] (a);
              \draw[thick, blue, name path=tu2] (e2) to[out=75,in=205] (6.5,8) -- (8,8);
               \draw[thick,dashed, blue, name path=tu2x] (8,8)--(10,8);
              \draw[thick, blue, name path=td2] (e2) to[out=300,in=140] (b);
              \draw[thick, blue, name path=td3,] (c) to[out=300,in=130] (6.5,0.5)--(8,0.5);
                  \draw[thick, dashed,blue, name path=td3x,] (8,0.5)--(10,0.5);
              
                \draw[black] (6.5,0.5)-- (8,8);

                \path[gray, name path=dottop] (g) to (b);
                 \path[gray, name path=dotright] (b) to (c);
                  \path[gray, name path=dotbottom] (a) to (c);
                   \path[gray, name path=dotleft] (g) to (a);

		\path [gray, dotted, pattern=dots, pattern color=white] (g) -- (b) -- (c) -- (a) -- (g);

                  \path [name path=lineA](b) -- (f2,10);
		\draw [gray, name intersections={of=lineA and tu2}] (b) -- (intersection-1)%
		\pgfextra{\pgfgetlastxy{\macrox}{\macroy} \global\let\macroy\macroy};
         \coordinate (z) at (intersection-1);
         \path [name path=lineA](intersection-1) -- (10,\macroy);
         \draw [gray,name intersections={of=lineA and diag}] (z) -- (intersection-1);

		\path [thick, green, pattern=dots, pattern color=black!30!green] (f2,g) -- (intersection-1) -- (f2,f) -- (f2,g);

		
		 \path [pattern=north west lines, pattern color=black] (0,0) -- (e1) to[out=315,in=180] (a) -- (c) to[out=300,in=130] (6.5,0.5) -- (10,0.5)--(10,0)--(0,0);
	
		\path [ pattern=north west lines, pattern color=black] (g) -- (e2) to[out=300,in=140] (b) -- (g);
		
		\path [ pattern=north west lines, pattern color=black] (8,8) -- (10,8)--(10,10) -- (8,8);

		
                            \path [pattern=crosshatch,pattern color=red] (b) -- (intersection-1) -- (e2) to[out=300,in=140] (b);
                            \path [pattern=crosshatch,pattern color=red] (8,0.5) -- (8,8)-- (intersection-1) -- (c) to[out=300,in=130] (6.5,0.5)--(8,0.5);

                	 \path [pattern=crosshatch,pattern color=red] (e1) to[out=315,in=180] (a) -- (g) -- (e1);

  \draw [black, thin, dashed] (0.5,0.5) -- (0.5,-0.5) ;
    \draw [black, thin, dashed] (0.5,0.5) -- (6.5,0.5) ;
                                  \node (0.5,-0.5)[below,scale=1] at (0.5,-0.5){$\ell_\nu$} ;
                                  
                                    \draw [black, thin, dashed] (6.5,0.5) -- (6.5,-0.5) ;
                                    \node (6.5,-0.5)[below,scale=1] at (6.5,-0.5){$r_\mu$} ;
                                    
                                     \draw [black, thin, dashed] (8,0.5) -- (8,-0.5) ;
                                    \node (8,-0.5)[below,scale=1] at (8,-0.5){$r_\nu$} ;

             \end{axis}
\end{tikzpicture}
\caption{The various cases for $K_1 > r_\nu$ in the setting of Section~\ref{ssec:bimodal}.}
\label{fig:K1big}
\end{figure}
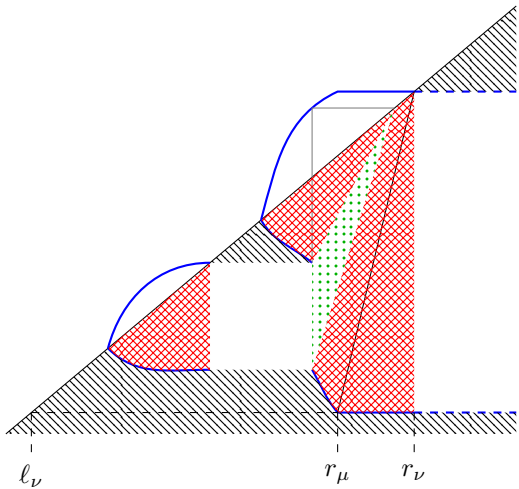

\subsection{Intervals where $\nu$ has no mass, or $\nu = \mu$.}
\label{ssec:nunomass}
The definition of the left-curtain martingale coupling (recall Lemma~\ref{lem:FGconstruction}) only requires that $g=T_u$ is increasing, and not that it is continuous. In general $g$ may have jumps; such jumps occur when there is an interval on which $\nu$ places no mass.

If $g$ has a jump then we need to adapt the superhedge.
Suppose $g$ has a jump at $\hat{x}$ (which has to be upwards since $g$ is increasing), and suppose $f$ is continuous at $\hat{x}$. Suppose further that $K_1$ is such that $\hat{x}\in(g^{-1}(K_1),K_1)$. Then as before, we would like to find $x^*\in(g^{-1}(K_1),K_1)$ such that $\Lambda(x^*)=0$. Recall that $\Lambda$ is increasing and suppose $\Lambda(g^{-1}(K_1))<0<\Lambda(K_1)$. If $\Lambda(\hat{x}-)<0$ and $\Lambda(\hat{x}+)>0$, then there will be no solution to $\Lambda=0$. On the other hand, by keeping $x=\hat{x}, \hat{f}=f(\hat{x})$ fixed in (\ref{eq:Lambdadef}), and varying $g$ only, we see that there exists $\hat{g}\in(g(\hat{x}-),g(\hat{x}+))$, such that $(\hat{g}-K_1)/(\hat{g}-\hat{x})=(K_1-K_2)/(\hat{x}-\hat{f})$ so that $\Upsilon(f(\hat{x}), \hat{x}, \hat{g}) = 0$. Then, the candidate (and indeed optimal) superhedging strategy is generated by $\psi^*$, given in (\ref{eq:psi*}), with $(f^*,x^*,g^*)=(\hat{f},\hat{x},\hat{g})$, see Figure \ref{fig:NUatomsJumps}. Moreover, since $\nu$ does not charge $(g(\hat{x}-),g(\hat{x}+))$, the triple $(\hat{f},\hat{x},\hat{g})$ solves the mass and mean equations (\ref{eq:mass}) and (\ref{eq:mean}). The strong duality between the model-based expected payoff and the hedging cost follows as before.

\begin{figure}[H]
\centering
\begin{tikzpicture}[scale=0.5,
    declare function={	
    	k1=10;
	k2=5;
	f=3;
	g=12;
	g0=19;
	z=9;
	g1=f+(k2-f)*(z-f)/(k2-f-k1+z);
	a(\x)=(k1-\x)*(\x<k1);
	b(\x)=(k2-\x)*(\x<k2);
		}]

	\draw[name path=a, black, thick] plot[domain=0:20, samples=100] (\x,{a(\x)});
	\draw[name path=b, black, thick] plot[domain=0:20, samples=100] (\x,{b(\x)}) ;
	
	\draw[name path=h1, red, thick] plot[domain=f:z, samples=100] (\x,{b(f)-((b(f)-a(z))/(z-f))*(\x-f)}) ;
	\draw[name path=h2, red, thick] plot[domain=z:g1, samples=100] (\x,{a(z)-(a(z)/(g1-z))*(\x-z)});
	\draw[name path=h3, blue, thick,dashed] plot[domain=z:g, samples=100] (\x,{a(z)-(a(z)/(g-z))*(\x-z)});
	\draw[name path=h4, blue, thick,dashed] plot[domain=z:g0, samples=100] (\x,{a(z)-(a(z)/(g0-z))*(\x-z)});

\node (strike1) at (k1,-0.5) {$K_1$};
\node (strike2) at (k2,-0.5) {$K_2$};
\node (f) at (f,-0.5) {$\hat{f}$};
\node (z) at (z,-0.5) {$\hat{x}$};
\node (g) at (g,-0.5) {$g(\hat{x}-)$};
\node (g0) at (g0,-0.5) {$g(\hat{x}+)$};
\node (g1) at (g1,-0.5) {$\hat{g}$};

\path [name path=lineA](f,0)--(f,11);
\draw [name intersections={of=lineA and b}, black, dashed] (f,0) -- (intersection-1);

\path [name path=lineA](z,0)--(z,11);
\draw [name intersections={of=lineA and a}, black, dashed] (z,0) -- (intersection-1);

\end{tikzpicture}
\caption{Sketch of put payoffs with points $\hat{x}$, $\hat{f}$ and $\hat{g}$ marked.}
\label{fig:NUatomsJumps}
\end{figure}
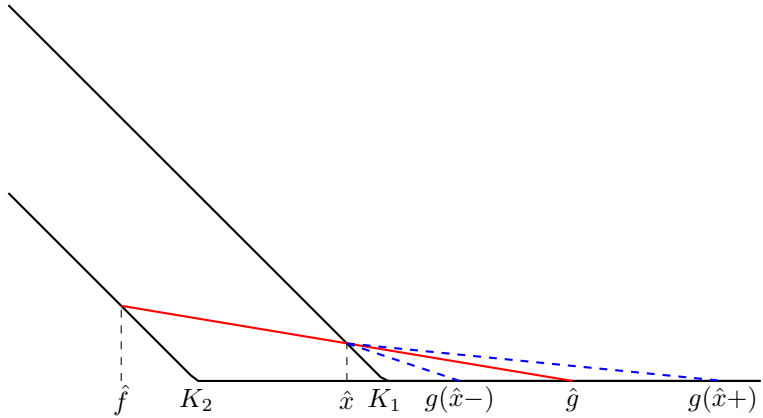

Alternatively, suppose $f$ has a downward jump at $\bar{x}$. This can happen if $\nu=\mu$ on $(f(\bar{x}+),f(\bar{x}-))$. Suppose that $K_1$ is such that $\bar{x}\in(g^{-1}(K_1),K_1)$ and $\Lambda(\bar{x}-)<0$ and $\Lambda(\bar{x}+)>0$, so that again we cannot find $x\in(g^{-1}(K_1),K_1)$ with $\Lambda(x)=0$. We can deal with this similarly as in the case of discontinuity in $g$: choose $\bar{f}\in(f(\bar{x}+),f(\bar{x}-))$ such that $\Upsilon(\bar{f},\bar{x},g(\bar{x}))=0$, then consider a hedging strategy generated by $\psi^*$ with $(f^*,x^*,g^*)=(\bar{f},\bar{x},g(\bar{x}))$. Note that $\mu = \nu$ on $(f(\bar{x}+),f(\bar{x}-))$ and so if (\ref{eq:mass}) and (\ref{eq:mean}) hold for some $f \in [f(\bar{x}+),f(\bar{x}-)]$ (with $\bar{x}, \bar{g}$) then they hold for all $f$ in this interval.  It follows that we can construct a coupling in which
 $(\bar{f},\bar{x})$ is mapped to $(\bar{f},\bar{g})$ and strong duality holds.

For both jumps in $f$ and $g$, we have a pictorial representation of the regions of pairs $(K_1,K_2)$ which lead to a hedging strategy which has to be adapted as above, see Figure \ref{fig:nuAtoms}. If $g$ has a jump at $\hat{x}$, then $\Lambda(\hat{x}-)<0$ and $\Lambda(\hat{x}+)>0$ is equivalent to point $(K_1,K_2)$ lying in the interior of a triangle with vertices $\{(g(\hat{x}-),g(\hat{x}-)),(g(\hat{x}+),g(\hat{x}+)),(\hat{x},f(\hat{x}))\}$. On the other hand, if $f$ jumps downwards at $\bar{x}$, then $\Lambda(\bar{x}-)<0$ and $\Lambda(\bar{x}+)>0$ is equivalent to point $K_1$, $K_2$ lying in the interior of a triangle with vertices $\{(\bar{x},f(\bar{x}-))$, $(\bar{x},f(\bar{x}+))$, $(g(\bar{x}),g(\bar{x}))\}$ (compare this with a region $\sG$).

Exceptionally we may have simultaneous jumps in $g$ and $f$ at $\check{x}$. Then the set of $(K_1,K_2)$ for which these arguments are needed is a quadrilateral with vertices $(\check{x}, f(\check{x}-))$, $(\check{x}, f(\check{x}+))$, $(g(\check{x}+), g(\check{x}+))$ and $(g(\check{x}-), g(\check{x}-))$. In particular, then there are multiple pairs $(\check{f},\check{g})$ with $\check{f}\in(f(\check{x}+), f(\check{x}-))$ and $\check{g}\in(g(\check{x}-), g(\check{x}+))$ such that $\Upsilon(\check{f}, \check{x}, \check{g}) = 0$, so that an optimal hedging strategy is not unique.

\subsection{The general case for continuous $\nu$}
\label{ssec:generalcts}
In the previous sections we showed how the left-curtain coupling can be used to find an optimal model, exercise strategy and a superhedge, under the assumption that both $\mu$ and $\nu$ are continuous together with further regularity and simplifying assumptions which we labelled the Dispersion Assumption and the Single Jump Assumption.
Under the latter assumption, the existence of points that solve (\ref{eq:massmeanBimodal}) led us to identify two further types of hedging strategy that were not present under the dispersion assumption, making four in total.

If we relax the assumptions further and require only that both $\mu$ and $\nu$ are continuous, then we expect that there exist multiple pairs $(f^\prime_i,x^\prime_i)$, $i=1,2,3,...$, that solve (\ref{eq:massmeanBimodal}).
Note that from the monotonicity of $g$ we can write $\{ x : g(x) > x \}$ as a countable union of intervals, and on each such interval $f$ is decreasing. $f$ jumps over the intervals $(f_i',x_i')$ identified above (at least those with $x'$ to the left of the current value of $x$). In particular, $f$ has only countably many downward jumps. Figure~\ref{fig:generalFG} is a stylized representation of the general left-curtain martingale coupling, not least because in the figure $f$ has only finitely many jumps. Using Figure~\ref{fig:generalFG} we can divide $(K_1, K_2<K_1)$ into four regions, see Figure~\ref{fig:generalRegions}. They key point is that these four regions are characterised exactly as in the cases described in Section~\ref{ssec:bimodal}. For given $(K_1,K_2)$ we can determine which of the types of hedging strategy is a candidate optimal superhedge, and determine a candidate optimal stopping rule. (We can always use the model associated with the left-curtain martingale coupling $\pi_{lc}$.) The fact that these candidates are indeed optimal can be proved using exactly analogous techniques to those used in Section~\ref{ssec:bimodal}.

\begin{figure}[H]

\centering
\resizebox{11 cm}{9cm}{
\begin{tikzpicture}[dot/.style={circle,inner sep=1pt, fill, label={#1},name#1},
extended line/.style={shorten >=-#1, shorten <=-#1},
extended line/.default=3cm,
declare function={	
diag(\x)=\x;
    	f=1;
	e1=1.5;
	g=3;
	e2=3.5;
	f2=4;
	f3=5;
	fx=5.5;
	e1x=6;
	gx=7;
	e2x=7.5;
	f2x=8;
	f3x=8.5;
	s=8.5;
	sx=8.5;
	k1=6.37;
	k2=4.5;
	a(\x)=(k1-\x)*(\x<k1)-8;
	b(\x)=(k2-\x)*(\x<k2)-8;
}]

           \draw[name path=diag,black] (0,0) -- (10,10);


          \coordinate (g) at (g, {diag(g)});
           \coordinate (e1) at (e1, {diag(e1)});
            \coordinate (e2) at (e2, {diag(e2)});
             \coordinate (s) at (s, {diag(s)});

             \coordinate (a) at (g, {diag(f)});
              \coordinate (b) at (f2, {diag(g)});
               \coordinate (c) at (f2, {diag(f)});

             \draw[thick, blue, name path=tu1] (e1) to[out=75,in=180] (g);
             \draw[thick, blue, name path=td1] (e1) to[out=315,in=180] (a);
              \draw[thick, blue, name path=tu2] (e2) to[out=75,in=205] (f3,{diag(f3)});
              \draw[thick, blue, name path=td2] (e2) to[out=300,in=140] (b);
              \draw[thick, blue, name path=td3,] (c) to[out=300,in=180] (f3,0.5);

               \coordinate (gx) at (gx, {diag(gx)});
           \coordinate (e1x) at (e1x, {diag(e1x)});
            \coordinate (e2x) at (e2x, {diag(e2x)});
             \coordinate (sx) at (sx, {diag(sx)});

             \coordinate (ax) at (gx, {diag(fx)});
              \coordinate (bx) at (f2x, {diag(gx)});
               \coordinate (cx) at (f2x, {diag(fx)});
               \draw[thick, blue, name path=tu1x] (e1x) to[out=75,in=180] (gx);
             \draw[thick, blue, name path=td1x] (e1x) to[out=315,in=180] (ax);
              \draw[thick, blue, name path=tu2x] (e2x) to[out=75,in=205] (10,10);
              \draw[thick, blue, name path=td2x] (e2x) to[out=300,in=140] (bx);
              \draw[thick, blue, name path=td3x] (cx) to[out=300,in=180] (f3x,f3);
               \draw[thick, blue, name path=td4] (f3x,0.5) to[out=300,in=180] (10,0);

		 \draw [blue, dashed, pattern=dots, pattern color=white] (g) -- (b) -- (c) -- (a) -- (g);
		  \draw [blue, dashed, pattern=dots, pattern color=white] (gx) -- (bx) -- (cx) -- (ax) -- (gx);
		  \draw [blue, dashed, pattern=dots, pattern color=white] (f3,{diag(f3)}) -- (f3x,f3) -- (f3x,0.5) -- (f3,0.5) -- (f3,{diag(f3)});

                   \path[name path=base] (0,0) to (10,0);

                  \path [name path=lineA](b) -- (f2,10);
		\draw [gray, very thin, name intersections={of=lineA and tu2}] (b) -- (intersection-1)%
		\pgfextra{\pgfgetlastxy{\macrox}{\macroy} \global\let\macroy\macroy};
         \coordinate (z1) at (intersection-1);
         \path [name path=lineA](intersection-1) -- (10,\macroy);
         \draw [gray, very thin, name intersections={of=lineA and diag}] (z1) -- (intersection-1);
          \coordinate (z1) at (intersection-1);
		\draw [gray] (f2,g) -- (intersection-1) -- (f2,f) ;
		\path [thick, black!30!green, pattern=dots, pattern color=black!30!green] (f2,g) -- (intersection-1) -- (f2,f) -- (f2,g);

                  \path [name path=lineA](bx) -- (f2x,10);
		\draw [gray, very thin, name intersections={of=lineA and tu2x}] (bx) -- (intersection-1)%
		\pgfextra{\pgfgetlastxy{\macrox}{\macroy} \global\let\macroy\macroy};
         \coordinate (z2) at (intersection-1);
         \path [name path=lineA](intersection-1) -- (10,\macroy);
         \draw [gray, very thin, name intersections={of=lineA and diag}] (z2) -- (intersection-1);
         \coordinate (z2) at (intersection-1);
		\draw [gray] (f2x,gx) -- (intersection-1) -- (f2x,fx) ;
		 \path [very thick, green, pattern=dots, pattern color=black!30!green] (f2x,gx) -- (intersection-1) -- (f2x,fx) -- (f2x,gx);

                  \path [name path=lineA](f3x,{diag(f3)}) -- (f3x,10);
		\draw [gray, very thin, name intersections={of=lineA and tu2x}] (f3x,{diag(f3)}) -- (intersection-1)%
		\pgfextra{\pgfgetlastxy{\macrox}{\macroy} \global\let\macroy\macroy};
         \coordinate (z3) at (intersection-1);
         \path [name path=lineA](intersection-1) -- (10,\macroy);
         \draw [gray, very thin, name intersections={of=lineA and diag}] (z3) -- (intersection-1);
         \coordinate (z3) at (intersection-1);
		\draw [gray] (f3x,{diag(f3)}) -- (intersection-1) -- (f3x,0.5) ;
		\path [thick, green, pattern=dots, pattern color=black!30!green] (f3x,{diag(f3)}) -- (intersection-1) -- (f3x,0.5) -- (f3x,{diag(f3)});

		
		 \path [pattern=north west lines, pattern color=black] (0,0) -- (e1) to[out=315,in=180] (a) -- (c) to[out=300,in=180] (f3,0.5) -- (f3x,0.5) to[out=300,in=180] (10,0);
	
		\path [ pattern=north west lines, pattern color=black] (g) -- (e2) to[out=300,in=140] (b) -- (g);
		\path [pattern=north west lines, pattern color=black] (f3,f3) -- (e1x) to[out=315,in=180] (ax) -- (cx) to[out=300,in=180] (f3x,f3) -- (f3,f3);
	
		\path [ pattern=north west lines, pattern color=black] (gx) -- (e2x) to[out=300,in=140] (bx) -- (gx);

		
                            \path [pattern=crosshatch,pattern color=red] (a) -- (g) -- (e1) to[out=315,in=180] (a);
			 \path [pattern=crosshatch,pattern color=red] (e2) to[out=300,in=140] (b) -- (z1) -- (e2);
			
			  \path [pattern=crosshatch,pattern color=red] (c) to[out=300,in=180] (f3,0.5) -- (f3,f3) -- (z1) -- (c);
			
			    \path [pattern=crosshatch,pattern color=red] (ax) -- (gx) -- (e1x) to[out=315,in=180] (ax);
			 \path [pattern=crosshatch,pattern color=red] (e2x) to[out=300,in=140] (bx) -- (z2) -- (e2x);
			
			  \path [pattern=crosshatch,pattern color=red] (cx) to[out=300,in=180] (f3x,f3) -- (f3x,f3) -- (z3) -- (z2) --(cx);
			
			   \path [pattern=crosshatch,pattern color=red] (f3x,0.5) to[out=300,in=180] (10,0) -- (10.,10) -- (z3) -- (f3x,0.5) ;

\end{tikzpicture}
}
\caption{General picture of $f,g$ with shading of regions. There remain 4 types of shading corresponding to 4 forms of optimal hedge.}
\label{fig:generalRegions}
\end{figure}

More specifically, we can divide $\{(k_1,k_2): k_2<k_1 \}$ into $\{ (k_1,k_2):k_2 \leq f(k_1)\} \cup \{(k_1,k_2): f(k_1) < k_2 < k_1 \}$. We can divide the former into two regions
$\sW = \{ (k_1,k_2) : K_2<k_1, \exists  x \leq k_1 \mbox{such that $f(x)<k_2<g(x)$} \}$ and $\sB = \{ (k_1,k_2): k_2 \leq f(k_1) \} \setminus \sW$. The latter we again divide into two regions $\sG$ and $\sR = \{ (k_1,k_2): f(k_1) < k_2 < k_1\} \setminus \sG$. Here we can write $\sG = \cup_{x:f(x-)>f(x+) } \Delta(x)$ where $\Delta(x)$ is a triangle with vertices $(x, f(x+))$, $(x, f(x-))$ and $(g(x),g(x))$. Then on each of the regions $\sW$, $\sB$, $\sG$ and $\sR$ we have a superhedge exactly as described in Section~\ref{ssec:bimodal}. Moreover, again by the arguments of Section~\ref{ssec:bimodal}, we can show that the hedging cost associated with the super-hedging strategy is precisely the model-based expected payoff of the American put under the martingale coupling $\pi_{lc}$ (and candidate stopping rule) thus proving the optimality of the hedge and of the model/exercise rule.

\begin{rem}
The set $\{x: g(x)>x\}$ is a collection of intervals and we let $I_+$ denote the set of right-endpoints of these intervals. As remarked above, Figure~\ref{fig:generalRegions} is drawn in the case of `finite complexity' in the sense that the set $I_+$ contains a finite number of elements. The results extend easily to countable $I_+$ provided $I_+$ contains no accumulation points.

In general $I_+$ may contain an accumulation point, and as discussed in Henry-Labord\`{e}re and Touzi~\cite{HenryLabordereTouzi:16}, care is needed in the construction of the left-curtain mappings $(T_d,T_u)$ in this case. However, from our perspective such subtleties do not cause a problem. The reason for this is we do not aim to derive the left-curtain coupling, but rather take the left-curtain coupling as a given, and use it to solve the put pricing problem.

Our construction of the best model and the cheapest hedge is local in the sense that when in Figure~\ref{fig:generalRegions} we look at in which region the point $(K_1,K_2)$ lies, the fine detail of the picture in other parts of $(k_1,k_2)$-space is not important. So, the existence of accumulation points can only be an issue if $K_1$ is equal to one of those accumulation points.

Let $x_\infty$ be such an accumulation point in $I_+$ and suppose $K_1=x_\infty$. Depending on the value of $K_2$ then either there exists $(x',f')$ with $f' < K_2 < x'$ such that \eqref{eq:massmeanBimodal} holds or not. In the former case we can follow the analysis of Section~\ref{sssec:W}, and in the latter Section~\ref{sssec:B}: in either case we construct a model and hedge such that the model price and hedging cost agree, thus proving optimality of both.
\end{rem}

\subsection{Atoms in the target law}
\label{ssec:atomsinnu}
When $\nu$ has atoms, the preservation of mass and mean conditions become (\ref{eq:massAtom}) and (\ref{eq:meanAtom}), respectively. In particular, atoms of $\nu$ correspond to the flat sections in $f$ or $g$. See Figure~\ref{fig:nuAtoms}.
In this case we still can find all the optimal quantities as before. In particular, $\Lambda(x):= \frac{g(x)-K_1}{g(x)-x} - \frac{(K_1 - K_2)}{x-f(x)}$ is strictly increasing in $x$, even if $f$ and/or $g$ is constant. Hence we can find solutions to $\Lambda=0$ (more generally solutions $x,f \in \aleph(x)$ to $\Upsilon(f,x,g=g(x))=0$) exactly as before. The superhedge is unchanged. A little care is needed in constructing the optimal model, but mass in $(f(x^*),x^*)$ is mapped to $(f(x^*),g(x^*))$ together with (potentially) atoms at $f(x^*)$ or $g(x^*)$. Specifically, given $f^*,x^*,g^*$ we can find $\lambda^*_f$ and $\lambda^*_g$ such that (\ref{eq:massAtom}) and (\ref{eq:meanAtom}) hold. Then, in any optimal model mass in $(f^*,x^*)$ is mapped to $\nu_{x^*}$ which is defined to be $\nu_{x^*} = \nu|_{(f^*,g^*)} + \lambda^*_{f^*} \delta_{f^*} + \lambda^*_{g^*} \delta_{g^*}$ and mass outside $(f^*,x^*)$ is mapped to $\nu - \nu_{x^*}$.

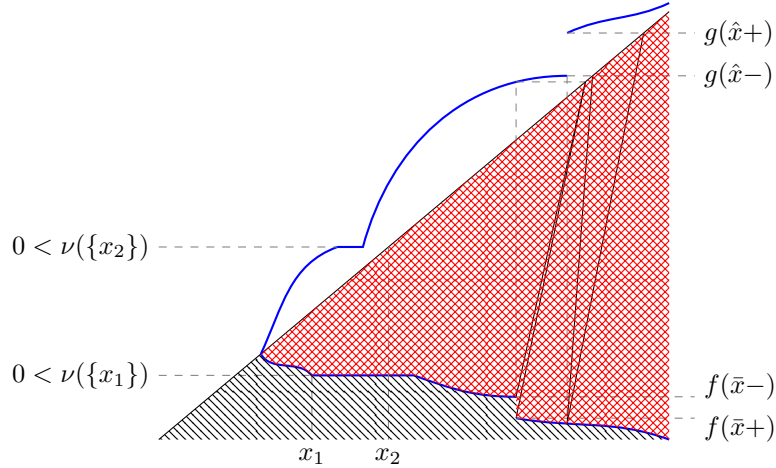
\begin{figure}[H]
\centering
\begin{tikzpicture}[
declare function={	
diag(\x)=\x;
    	e1=2;
	au=3.5;
	ad=3;
	ju=8;
	jd=7;
}]
\begin{axis}[axis lines=middle,
	ytick=\empty,
            xtick=\empty,
             xmin=-.1, xmax=10,
             ymin=0, ymax=10,
             yticklabels={},
            xticklabels={},
            axis line style={draw=none}, clip=false]

           \addplot[name path=diag,black,domain={0:10}] {diag(\x)};

                \path[name path=base] (0,0) to (10,0);


           \coordinate (e1) at (e1, {diag(e1)});

           \coordinate (au) at (au, {au+1});
            \coordinate (au1) at ({au+0.5}, {au+1});

              \coordinate (ad) at (ad, {e1-0.5});
            \coordinate (ad1) at ({ad+2}, {e1-0.5});

      \coordinate (ju) at (ju, {ju+0.5});
            \coordinate (ju1) at ({ju}, {ju+1.5});

            \coordinate (jd) at (jd, {ad-2});
            \coordinate (jd1) at ({jd}, {ad-2.5});

\draw[thick, blue, name path=tu1] (e1) to[out=65,in=200] (au) -- (au1) to[out=75,in=180] (ju);
 \draw[thick, blue, name path=tu2] (ju1) to[out=25,in=205] (10,{diag(10)+0.2});

             \draw[thick, blue, name path=td1] (e1) to[out=300,in=130] (ad) -- (ad1) to[out=340,in=180] (jd);
             \draw[thick, blue, name path=td2] (jd1) to[out=350,in=160] (10,{diag(0)});

             \draw[ thin, gray, dashed] (0,{e1-0.5}) -- (ad);
             \draw[ thin, gray, dashed] (0,{au+1}) -- (au);

              \draw[thin, gray, dashed] (ad) -- (ad,0);
		\node [below] at (ad,0) {$x_1$};
		
		      \draw[thin, gray, dashed] ({diag(au+1)},{diag(au+1)}) -- ({diag(au+1)},0);
              \node [below] at ({diag(au+1)},0) {$x_2$};

              	\node [left] at (0,{diag(au+1)}) {$0<\nu(\{x_2\})$};
		\node [left] at (0,{e1-0.5}) {$0<\nu(\{x_1\})$};
		
		 \draw[thin, gray, dashed] (ju) -- (10.5,{ju+0.5});
		  \draw[ thin, gray, dashed] (ju1) -- (10.5,{ju+1.5});
		
		  \node [right] at (10.5,{ju+0.5}) {$g(\hat{x}-)$};
		\node [right] at (10.5,{ju+1.5}) {$g(\hat{x}+)$};
		
		 \draw[ thin, gray, dashed] (jd) -- (10.5,{ad-2});
		  \draw[ thin, gray, dashed] (jd1) -- (10.5,{ad-2.5});
		
		  \node [right] at (10.5,{ad-1.75}) {$f(\bar{x}-)$};
		\node [right] at (10.5,{ad-2.75}) {$f(\bar{x}+)$};
		
			  \path [name path=lineA](ju) -- (ju,0);
		\draw [ very thin, gray, dashed, name intersections={of=lineA and td2}] (ju) -- (intersection-1);
		\draw[black] (intersection-1) -- ({ju+1.5},{ju+1.5});
		\draw[black] (intersection-1) -- ({ju+0.5},{ju+0.5});

            \path [name path=lineA](jd) -- (jd,10);
		\draw [very thin, gray, dashed, name intersections={of=lineA and tu1}] (jd) -- (intersection-1)%
		\pgfextra{\pgfgetlastxy{\macrox}{\macroy} \global\let\macroy\macroy};
         \coordinate (z) at (intersection-1);
         \path [name path=lineA](intersection-1) -- (10,\macroy);
         \draw [very thin, gray, dashed, name intersections={of=lineA and diag}] (z) -- (intersection-1);

         \draw[black] (intersection-1) -- (jd);
         \draw[black] (intersection-1) -- (jd1);

		
		 \path [pattern=north west lines, pattern color=black] (0,0) -- (e1) to[out=300,in=130] (ad) -- (ad1) to[out=340,in=180] (jd) -- (jd1) to[out=350,in=160] (10,{diag(0)})-- (0,0);
		
		
		 \path [pattern=crosshatch, pattern color=red] (e1) to[out=300,in=130] (ad) -- (ad1) to[out=340,in=180] (jd) -- (jd1) to[out=350,in=160] (10,{diag(0)}) -- (10,10) -- (e1);

             \end{axis}
\end{tikzpicture}
\caption{Atoms of $\nu$ correspond to flat sections in $f$ and $g$. Regions of no mass of $\nu$ correspond to jumps of $f$ and $g$.}
\label{fig:nuAtoms}
\end{figure}

\section{Discussion and extensions}
\label{sec:discussion}

\subsection{The role of the left-curtain coupling}
For any pair of strikes $(K_1,K_2)$ the left-curtain model attains the highest expected payoff for the American put. However, although it optimises simultaneously across all pairs of strikes it is not (in general) optimal for linear combinations of American puts. For example, if we consider a generalised American option with payoff $a$ if exercised at time 1 and $b$ if exercised at time 2, where $a(x) = \sum_{j=1}^J(K^j_{1}- x)^+$  and $b(y)=\sum_{j=1}^J(K_2^j - y)^+$ (with $K^j_2 \leq K^j_1$ for each $j$), then the model associated with the left-curtain coupling is typically not optimal. The reason is that a model $(\sS,M)$ is only optimal when it is combined with the best stopping rule, and the optimal stopping rule {\em does} depend on $(K_1, K_2)$.

Conversely, although the model associated with the left-curtain coupling is optimal (simultaneously across all pairs $K_1,K_2$), we do not need the full power of this coupling when we work with fixed $(K_1,K_2)$. In the dispersion assumption case all we need is a coupling in which $(f(x^*), x^*)$ is mapped to $(f(x^*),g(x^*))$ where $x^*$ is such that $\Lambda(x^*)=0$. There are many martingale couplings which have this property.

The intuition behind the optimality of the left-curtain coupling is as follows.
With American puts there is a tension between the time-decay of the option payout promoting early exercise, and the convexity of the payoff function promoting delay. If the aim is to maximise the payoff of the option then any paths which are in-the-money at time 1, and will remain in-the-money, are best exercised at time 1. However, once a path has been exercised, any further volatility is irrelevant. In particular, when designing a candidate optimal model we should try to keep paths which are exercised at time 1 constant (or near constant) whenever possible. Thus the probability space should be split into two regions: one region where the put is in-the-money at time 1 and is exercised, and thereafter paths move little, and a second region where the put is out-of-the-money at time 1 (and sometimes just in-the-money, but left unexercised at time 1) and then paths move a long way between times 1 and 2. The left-curtain coupling has this property.

\subsubsection{Multiple exercise times}
It is natural to ask if it is possible to extend the analysis to American puts which can be exercised at multiple dates $(T_1, T_2, \ldots T_N)$ where $N>2$, or equivalently to martingales $M = (M_n)_{0 \leq n \leq N}$ with marginals $(\mu_n)$ where $\mu_1$ has mean $M_0 = \bar{\mu}$ and $\mu_n \leq_{cx} \mu_{n+1}$ for $1 \leq n \leq N-1$. It is clear that many of the ideas extend naturally to the multi-marginal case. However, the number of types of hedging strategy may grow exponentially with $N$. This is left as future work.




\end{document}